\newtheorem{theorem}{Theorem}
\newtheorem{lemma}{Lemma}
\theoremstyle{definition}
\newtheorem{algorithm}{Algorithm}
\newtheorem{definition}{Definition}
\newtheorem{remark}{Remark}
\newtheorem{problem}{Problem}
\newcommand{\hidden}[1]{}
\newcommand{\Span}{\text{span}}
\renewcommand{\Re}{\text{Re}}
\newcommand{\gap}{\text{gap}}
\newcommand{\Bump}{\text{Bump}}
\newcommand{\Cut}{\text{Cut}}
\newcommand{\Sat}{\text{Sat}}
\renewcommand{\Bar}{\text{Bar}}
\newcommand{\rn}{{\mathbb R^n}}
\newcommand{\tn}{{\mathbb T^n}}
\newcommand{\nn}{{\mathcal N^n}}
\title{{A quantum analogue of convex optimization }}
\date{\today}
\author{Eunou Lee\thanks{ \href{mailto:eunoulee@kias.re.kr}{eunoulee@kias.re.kr}}}
\affil{Korea Institute for Advanced Study}
\begin{document}

\maketitle

\begin{abstract}
Convex optimization is the powerhouse behind
the theory and practice of optimization.
We introduce a quantum analogue of unconstrained convex optimization:
find the minimum
eigenvalue of a Schr\"odinger operator
$h  =   -\Delta   + V $
with convex 
potential $V:\rn \rightarrow \mathbb R_{\ge 0}$ such 
that $V(x)\rightarrow\infty $ as $\|x\|\rightarrow\infty$.
For this problem, we present an efficient quantum algorithm
that computes the minimum eigenvalue of $h$ up to error 
$\epsilon$ in polynomial time in $n$, $1/\epsilon$, 
and parameters that depend on $V$. 
Adiabatic evolution of the ground 
state
is used as a key subroutine, 
which we analyze with novel techniques that allow 
us to focus on the low-energy space.
We apply our algorithm
to give the first known polynomial-time algorithm 
for finding the lowest frequency of an $n$-dimensional convex drum,
or mathematically, the minimum eigenvalue 
of the Dirichlet Laplacian on an $n$-dimensional 
region that is defined by $m$ linear constraints in
polynomial time in $n$, $m$, $1/\epsilon$ and the radius $R$ 
of a ball encompassing the region.
\end{abstract}

\setcounter{tocdepth}{2}

\tableofcontents

\section*{Part I: Introduction and results}
\addcontentsline{toc}{section}{Part I: Introduction and results}
\section{Introduction}

Convex optimization is the infrastructure for
the theory and practice of optimization.
The study of minimization of convex functions has led to a wide range of tools that are
essential even for solving nonconvex and combinatorial problems~\cite{BV04, WS11}.
For optimization problems over quantum states, however,
such a versatile framework is yet to be developed.

In this work, we introduce 
a quantum analogue of 
unconstrained convex optimization,
which is
Schr\"odinger convex optimization:
find the minimum
eigenvalue of a Schr\"odinger operator
\begin{align*}
    h \ =  \ -\Delta \  + \ V \ = \ - \sum_{i\in[n]} \frac{\partial^2}{\partial x_i^2}  \ + \ V
\end{align*}
with convex 
potential $V:\rn \rightarrow \mathbb R_{\ge 0}$
such that $V(x)\rightarrow\infty$
as $\|x\|\rightarrow\infty$.
For this problem, we present an efficient quantum algorithm,
called the Adiabatic Schr\"odinger Convex Optimization Algorithm~(ASCOA),
that computes the minimum eigenvalue of $h$ up to error 
$\epsilon$ in polynomial time in $n$, $1/\epsilon$, 
and parameters that depend on $V$. 
Adiabatic evolution of the ground 
state~\cite{BF28,FGGS00}
is used as a key subroutine
that
we analyze with novel techniques that allow 
us to focus on the low-energy space.
We apply the ASCOA to give the first known polynomial-time algorithm 
for finding the lowest frequency of an $n$-dimensional convex drum 
up to error $\epsilon$.
More precisely, we compute the minimum eigenvalue 
of the Dirichlet Laplacian on an $n$-dimensional 
polygon in
polynomial time in $n$, $1/\epsilon$,
 the number of linear constraints $m$ defining the polygon,
and the radius $R$ 
of a ball encompassing the region.
Existing algorithms,
such as 
Finite Element Methods,
solve the drum problem
in 
exponential time in $n$,
exhibiting the curse of dimensionality.

The Schr\"odinger operator $h$ acts on a function
$\psi :\rn\rightarrow \mathbb C$ to give another function $h\psi:\rn\rightarrow\mathbb C$ such that
\begin{align*}
    (h\psi)(x) = -\sum_{i\in[n]}\frac{\partial ^2}{\partial x_i^2}\psi(x) + V(x)\psi(x).
\end{align*}
The minimum eigenvalue problem is to compute the
smallest number $\lambda_0\in\mathbb R$ for which there
exists $\psi\ne0$ such that $h\psi = \lambda_0\psi$. 
Since the eigenvectors of $h$ form a complete set of basis,
we have an equivalent characterization
\begin{align*}
    \lambda_0 = \min_{\langle \psi|\psi\rangle = 1}
    \langle \psi| h\psi\rangle , \qquad \text{where} \qquad \langle \phi|\psi\rangle = \int_\rn \overline{\phi(x)}\psi(x) \text d x.
\end{align*}
Physically, the wavefunction $\psi$ 
such that $\langle\psi|\psi\rangle =1 $ represents
a quantum particle in $\rn$.
The total mechanical energy of the particle is
$\langle \psi|h\psi\rangle $,
which is the sum of the kinetic energy 
$-\langle \psi|\Delta\psi\rangle$ and the the 
potential energy
$\langle \psi|V\psi\rangle$.
Our goal is to find the minimum mechanical
energy that a quantum 
particle can have under the convex potential $V$.

We consider 
Schr\"odinger convex optimization to be
a quantum analogue of convex optimization because:
$i)$ the objective $h$ is convex, 
$ii)$ its spectrum is quantized, 
and $iii)$ an efficient quantum algorithm 
exists for the problem.
To expand,
$h$ is convex 
since it is the sum of 
$-\Delta$ and $V$,
each of which
is convex in the Fourier and the position domain
respectively
(see Section~\ref{sec:discussion} 
for a more detailed discussion.)
Even though each of $-\Delta$ and $V$
has a continuous spectrum,
their sum $h$
has a purely discrete, i.e. quantized, spectrum.
The quantization of
allowed energy levels of
a Schr\"odinger operator
with the Coulomb potential
is the phenomena that confounded
physicists a century ago,
prompting the birth of quantum physics.
Finally, an efficient quantum algorithm
for the problem completes the analogy,
as the importance of
classical convex optimization comes from 
its efficient solvability.

It is intuitive to believe that 
an efficient quantum algorithm
should exist
for Schr\"odinger convex optimization.
Its classical limit, 
minimization of the mechanical energy
of a classical particle,
is efficiently solved by
simulating friction:
a classical particle 
on a convex surface
loses its energy
to the minimum
due to friction.
Accelerated Gradient Descent~\cite{nest83}
is an algorithmic implementation of
this physical idea~\cite{WWJ16}.
Likewise, 
we can expect
an efficient quantum algorithm
solving Schr\"odinger 
convex optimization,
for example
by simulating
quantum dissipation~\cite{Lin25, KBGKE11}.

Interestingly enough, our algorithm is based 
on simulating adiabatic Hamiltonian evolution,
rather than on simulation of dissipation.
The most challenging part of 
analyzing an adiabatic algorithm
is to establish a lower bound
of the spectral gap.
Fortunately for us, 
a tight lower bound 
for
the spectral gap of
a Schr\"odinger operator
on a bounded domain 
is known to be 
inverse polynomial 
in the diameter of the domain
by the Fundamental Gap Theorem~\cite{AC10}.
Therefore, we want to 
$i)$ minimize a Schr\"odinger 
operator
over the unbounded domain $\rn$
$ii)$
by simulating 
a discretized qubit Hamiltonian
$iii)$
whose gap we analyze 
using results 
for 
bounded domains.
We relate the three operators 
by the tools that we develop 
that allow us to ignore high-energy 
parts of a Hamiltonian.
We define a low-energy truncation of $h$,
which represents the operator up to its low-energy sector
with error $\epsilon$ (Definition~\ref{def:truncation}). 
We show that if two Hamiltonians admit similar truncations,
then their spectral gaps and the ground energies are 
close to each other.
For the analysis, we construct a 
series of intermediate Hamiltonians on different domains
that are treated in a modularized manner 
and connected
via the truncation techniques.

As an application of the ASCOA, 
we give the first known polynomial-time algorithm 
for finding the minimum frequency of an 
$n$-dimensional convex drum
specified by linear constraints.
Mathematically, the problem is 
to compute the minimum eigenvalue of the Laplacian under 
the Dirichlet boundary condition on an $n$-dimensional polytope.
The runtime of classical solvers such as finite element methods~\cite{Babuska1989DL}
grows exponentially to $n$, exhibiting the curse of dimensionality.
Information theoretic arguments~\cite{Papageorgiou2007DL} show that 
no classical algorithm runs in time polynomial in both $n$ and $1/\epsilon$.

As far as we are aware,
only
three previous works~\cite{PP14, CHPZ25, DZPL25}
exhibit 
polynomial time quantum algorithms 
for optimization of quantum objective
with a provable guarantee.
Efficient quantum
algorithms
based on simulating dissipation
find
local minima of qubit Hamiltonians~\cite{CHPZ25}
and 
the ground energy
of quadratic fermionic Hamiltonians~\cite{DZPL25}.
More closely related to our work is a quantum algorithm 
computing the ground energy 
of a Schr\"odinger operator on a bounded domain $[0,1]^n$ 
with the Dirichlet boundary condition~\cite{PP14}.
This algorithm prepares the ground state of an initial Hamiltonian,
and then repeatedly measures in the energy basis
of a slowly varying Hamiltonian.
They show the spectral gap of the time-dependent Hamiltonian
is large, by employing
the Fundamental Gap Theorem~\cite{AC10},
which we also adapt in this work.

Other related works concern with 
noncommutative convex optimization.
Geodesic convex optimization~\cite{Rapcsak1991,ZhangSra2016,AllenZhu2018,BFGOWW19}
also studies convex  
objectives in 
noncommutative directions,
but our case
differs in that
the domain is infinite-dimensional.
Another notion of noncommutative convex optimization 
appears in Noncommutative Polynomial Optimization~(NcPO)~\cite{Helton02,HeltonMcCullough04,BKP16,NPA08}.
When the potential $V$ is a polynomial, 
our formulation can be viewed as an
infinite-dimensional instantiation of NcPO.

\section{Problem formulation and results}

\subsection{Quantum optimization of quantum objectives}
Our objective function $h$ is defined on the function space
\begin{align*}
    L^2(\rn):= \left\{\psi:\rn\rightarrow\mathbb C \ | \  \int_{\rn} | \psi|^2 \ \text d x <\infty \right\}.
\end{align*}
An optimization problem is often stated as: find the best element from a given set.
It is unclear 
what it means to find 
the best element 
over the inifinite-dimensional space $L^2(\rn)$,
using 
a quantum computer hosting only a finite number of qubits.

The infinity of the domain is
already a familiar issue in classical optimization.
For instance, we know $\min_{x\in\mathbb R} x^2 =0$ 
at $x = 0$ even though $\mathbb R$ is infinite.
More generally, 
if we are given a certificate for $f(x^*)$,
we know that $\min_x f(x)\le  f(x^*)$.
This idea of a certificate for optimization
can be formalized as follow.

\begin{definition}[generalization of~\cite{LP24}]\label{def:optimization}
    Given an objective function $f:X \rightarrow\mathbb R$, a pair of algorithms $(p,v)$ minimizes $f$ to a value $\alpha$ up to error $\epsilon$, if the following conditions hold.
\begin{enumerate}
    \item The algorithm $v$ correctly verifies $\min_X f\le \alpha$:
    \begin{enumerate}
        \item  If $\text{min}_{x\in X}f(x) \le \alpha$:  $\exists$ $|w\rangle$ on finitely many (qu)bits such that $v(|w\rangle) =1$ w.p. $\ge 2/3$.
        \item  If $\text{min}_{x\in X}f(x) \ge \alpha +\epsilon$:  $\forall$ $ |w\rangle$ on finitely many (qu)bits, we have $v(|w\rangle) =1$ w.p. $\le 1/3$.
    \end{enumerate}
    \item The algorithm $p$, w.p. $\ge 2/3$, outputs a finite length $|w'\rangle$ such that $v(|w'\rangle)=1 $.
\end{enumerate}
\end{definition}
In this paper, we find the minimum of objective functions in polynomial time in the following sense.

\begin{definition}
    Given an objective function $f:X \rightarrow\mathbb R$, a pair of algorithms $(p,v)$ finds the minimum of $f$ up to error $\epsilon$ in time $T$, if $(p,v)$ minimizes $f$ to the value $\min_{x\in X} f(x)$ 
    up to error $\epsilon$
    and both $p$ and $v$ halt in time $T$.
\end{definition}

Whether 
$p$, $v$, and $|w\rangle$ are classical or quantum is left unspecified by the definition;
all classical/quantum combinations are allowed.
In our case, $p, v$ and $|w\rangle$ are all quantum.

\subsection{Schr\"odigner convex optimization}
The main problem of this paper is to 
find the minimum nonzero eigenvalue $\lambda_0$ of
a Schr\"odinger operator $h = -\Delta + V$
where $V(x)\rightarrow\infty$ 
as $\|x\|\rightarrow\infty$, and $V$ is convex.

\begin{problem}[Schr\"odinger]\label{prob:schr}
    Given a Schr\"odinger operator $h=-\Delta + V$ with a
    smooth convex function $V:\rn\rightarrow\mathbb R_{\ge 0}$ such that $V(x)\rightarrow\infty$ as $\|x\|\rightarrow\infty$,
     \begin{align*}
 \text{minimize}& \qquad \lambda_0 \\
\text{subject to}& \qquad \psi:\rn\rightarrow\mathbb C,\\
                &\qquad \lambda_0 \psi =   h\psi, \\
               & \qquad \psi\ne 0.
    \end{align*}    
\end{problem}

We have an equivalent formulation in terms of energy,
since the eigenfunctions of $h$ 
forms a complete basis for $L^2(\rn)$.
We define the energy functional as follows.

\begin{definition}[Energy]\label{def:energy}
    Let $h$ be a self-adjoint operator on a Hilbert space $\mathcal H$ with operator domain $\mathcal D(h)\subseteq \mathcal H$.
The \textbf{energy} of a nonzero vector $\psi \in \mathcal{D}(h)$ with respect to $h$ is denoted
\[
h[\psi] := \frac{\langle \psi | h  \psi \rangle}{\langle \psi | \psi \rangle}.
\]
\end{definition}
The Hilbert space $\mathcal H$ is $L^2(\rn)$ for the Schr\"odinger problem.
The operator domain $\mathcal D(h)\subseteq \mathcal H$ is the subset on which $h\psi \in L^2(\rn)$ is defined in a self-adjoint way. 
For more explanation, see Section~\ref{sec:prelim}.
The following is an equivalent formulation of Problem~\ref{prob:schr} in terms of energy.

\begingroup
\let\savedtheproblem\theproblem     
\addtocounter{problem}{-1}          
\renewcommand{\theproblem}{\savedtheproblem$'$} 
\begin{problem}[Schr\"odinger]
    Given a Schr\"odinger operator $h=-\Delta + V$ with a
    smooth convex function $V:\rn\rightarrow\mathbb R_{\ge 0}$ such that $V(x)\rightarrow\infty$ as $\|x\|\rightarrow\infty$,
     \begin{align*}
 \text{minimize}& \qquad h[\psi].    \end{align*}
\end{problem}
\endgroup

There is an ambiguity as to how $V$ is given to us.
We assume that we have access to a circuit
that computes $V(x)$ when $\|x\|_\infty \le L/2$.
The ASCOA efficiently solves Problem~\ref{prob:schr} for 
$V$ satisfying certain conditions.

\begin{theorem}[main]\label{thm:main}
    Algorithm~\ref{alg:ASCOA} computes the lowest eigenvalue of $h = -\Delta + V$ within error $\epsilon_0$ with probability $\ge 2/3$ in polynomial time in $n, 1/\epsilon_0,L,c$,
    if $V$
    satisfies the 
    following conditions, where 
    $B^q_d:=\{x\in\rn \ | \  \|x\|_q\le d\}$ and 
     $V_E^{-1}:= V^{-1}([0,E])$:
    \begin{enumerate}
        \item  \emph{Purely discrete spectrum and convexity:} we have $V(x)\rightarrow\infty$ as $\|x\|\rightarrow\infty$ and $V$ is convex on $B^2_{r+1}$.
        \item  \emph{``Bowl-shaped'' in $B^\infty_{L/2}$:} there exist $a\le b\le c$, and $1 \le r < L/2$ such that
        \begin{align*}
            B^2_1\subset V^{-1}_{a } \subset V^{-1}_{b}\subset B^2_r \subset B^2_{r+1} \subset V^{-1}_{c}\subset V^{-1}_{c+1} \subset B^\infty_{L/2}
        \end{align*}  
         where
        \begin{enumerate}[(i)]
            \item $b = \Theta(E_0^7/\sigma^6)$
            \item $E_0=10(n(n+3)\pi^2 +a) = \Theta(n^2+a)$
            \item $\sigma = \Theta(\epsilon_0/r^4 E_0^{1.5})$
            \item $\epsilon_0<1 < r,E_0,c$. 
        \end{enumerate}
        \item \emph{Bounded high-order derivatives:} the potential $V$ is smooth and 
        \begin{align*}
            \max \left\{ { \frac{1}{l}\log|\partial^l_j V(x)| }  \  \bigg|  \ {l\in[m], j\in[n], x\in [-L/2,L/2]^n }\right\}  \le \log p(m)
        \end{align*} for some polynomial $p$.
        \item \emph{Access to $V$:} there exists a circuit that computes $V(x)$ for all $ x \in B^\infty_{L/2} $ with a negligible error in $n, 1/\epsilon_0,  L, c$ in polynomial time in the same parameters.
    \end{enumerate}
\end{theorem}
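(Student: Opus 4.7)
The plan is to reduce the eigenvalue problem for $h$ on $\rn$ to a finite-dimensional spectral problem through a chain of controlled approximations, and then to solve that finite problem by adiabatic state preparation followed by quantum phase estimation. I would build a short tower of intermediate operators: (i) the target operator $h = -\Delta+V$ on $\rn$; (ii) its Dirichlet restriction $h_L$ to the box $B^\infty_{L/2}$; (iii) a low-energy truncation $\widetilde h_L$ in which $V$ is replaced, outside $B^2_{r+1}$, by a convex surrogate that agrees with $h_L$ on the low-energy sector; (iv) a finite-difference discretization $H$ of $\widetilde h_L$ on a uniform lattice of spacing $\delta$, encoded on $n\log(L/\delta)$ qubits. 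The low-energy truncation framework indicated in the introduction (Definition~\ref{def:truncation}) is then used to transport ground energies back along this tower within error $\epsilon_0$, while the Andrews--Clutterbuck Fundamental Gap Theorem~\cite{AC10} supplies the spectral-gap lower bound that the rigorous adiabatic theorem requires.

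The algorithm then proceeds in two phases. First, \emph{ground-state preparation}: choose an initial Hamiltonian $H_0$ that shares the kinetic term with $H$ but uses a simple convex potential $V_0$ whose ground state can be prepared explicitly (for instance, a scaled quadratic harmonic-oscillator potential, or $V_0 \equiv 0$ with the Dirichlet product-sine ground state on the box). Define the path $H(s) = -\Delta + (1-s)V_0 + s\widetilde V$ on $B^\infty_{L/2}$. Since the box is convex and the interpolated potential is a convex combination of convex functions, the Fundamental Gap Theorem yields a uniform lower bound $\gamma \gtrsim 1/L^2$ on the spectral gap of $H(s)$ for all $s\in[0,1]$. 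Plugging this into the rigorous adiabatic theorem gives an evolution time $T = \mathrm{poly}(1/\gamma, 1/\delta_{\mathrm{ad}}, \|\partial_s H\|)$ that produces a state $\delta_{\mathrm{ad}}$-close to the ground state of $H$, implemented by sparse Hamiltonian simulation of the lattice operator. Second, \emph{energy readout}: apply quantum phase estimation on $H$ with the prepared state to extract the ground energy within additive error $\epsilon_0/2$. The scaling conditions $b = \Theta(E_0^7/\sigma^6)$, $E_0 = \Theta(n^2+a)$, $\sigma = \Theta(\epsilon_0/r^4 E_0^{1.5})$ arise from tracking the truncation and discretization errors in stages (iii)--(iv) and insisting they aggregate to at most $\epsilon_0$; the bounded-derivative hypothesis controls the Taylor remainder of the five-point stencil on low-energy eigenfunctions, pinning down the required $\delta$ and hence the qubit count.

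The main obstacle is stage (iii): the hypothesis only gives convexity of $V$ on $B^2_{r+1}$, while the Fundamental Gap Theorem requires a convex potential on the entire Dirichlet domain, and a naive passage to $h_L$ on $B^\infty_{L/2}$ therefore loses the gap guarantee. I would resolve this by designing a convex surrogate $\widetilde V$ that equals $V$ inside $B^2_r$, interpolates convexly through the transition annulus $B^2_{r+1}\setminus B^2_r$ up to value of order $c$, and is then extended convexly throughout $B^\infty_{L/2}$. Because $\widetilde V \ge b$ everywhere the modification takes place and $b \gg E_0$, the Agmon-type exponential decay of states below energy $E_0$ forces the low-energy sector of $\widetilde h_L$ to be $\sigma$-close to that of $h_L$, and Definition~\ref{def:truncation} then yields matching ground energies and matching gap bounds. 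The intricate polynomial scaling of $b$ in $E_0$ and $\sigma$, and of $\sigma$ in $\epsilon_0$ and $r$, is precisely what is needed to make this low-energy bookkeeping close; once it is in place, the remaining adiabatic simulation, adiabatic-theorem estimate, and phase estimation are standard and all polynomial in $n, 1/\epsilon_0, L, c$, delivering the stated runtime.
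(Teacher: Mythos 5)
Your high-level strategy (adiabatic evolution, Fundamental Gap Theorem for the gap, low-energy truncation to move between continuum and discrete operators, phase-estimation readout) matches the paper's, but the execution diverges in ways that leave genuine gaps. First, the theorem is a statement about Algorithm~\ref{alg:FGA} specifically, and what you analyze is a different algorithm: FGA uses the non-convex plateau potential $V_Q(0)$ of Equation~\eqref{eq:potential_def_i} (so that $H_Q(0)$ splits into single-register Hamiltonians prepared by phase estimation on maximally mixed states), the saturated potential $\Sat_{c,1}\circ V$ as the final potential, and a Fourier-diagonal kinetic term $K_Q$, whereas you substitute a harmonic/zero initial potential, a globally convexified surrogate $\widetilde V$, and a finite-difference stencil. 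In particular your surrogate step is not needed in the paper and is itself problematic: the paper never convexifies $V$ globally; it applies the Fundamental Gap Theorem only on the ball $B^2_{r+1}$, where convexity is assumed, and then transports the gap to the torus and to $H_Q(t)$ via the $(E,\sigma)$-equivalence machinery (Lemmas~\ref{lem:pos_based_equivalences}, \ref{lem:torus_qubit_equiv}, \ref{lem:gap_approx}). Your $\widetilde V$ would have to be smooth with derivative bounds compatible with Condition 3 and, more seriously, evaluable by a circuit given only oracle access to $V$ on $B^\infty_{L/2}$; a supporting-hyperplane (sup-type) convex extension is not pointwise computable from that oracle, and you give no construction. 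The paper's final potential $\Sat_{c,1}\circ V$ is computable by composing the oracle with a fixed smooth function, which is exactly why it is chosen.

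Second, and most substantively, the adiabatic theorem (Theorem~\ref{thm:adiabatic}) needs an inverse-polynomial lower bound on $\gap(H(s))$ for the \emph{finite-dimensional, discretized} Hamiltonian uniformly in $s$, not for the continuum Dirichlet operator on the box. The Fundamental Gap Theorem gives you the latter; your proposal transfers only ground energies along the tower and handles discretization by a Taylor-remainder/consistency remark about the five-point stencil, which at best controls individual eigenvalues and does not by itself yield a uniform-in-$s$ gap for the lattice operator. Establishing precisely this transfer is the technical core of the paper: the position-truncation lemma (high potential outside $B^2_r$ plus Markov), the frequency-truncation lemma with the smoothness-factor calculus that fixes the required $N$, and the gap-approximation lemma with the scaling $\sigma = O(\epsilon g/E^{1.5})$, which is where the exponents in $b = \Theta(E_0^7/\sigma^6)$ and $\sigma = \Theta(\epsilon_0/r^4E_0^{1.5})$ actually come from. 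As written, your argument asserts rather than proves the discrete gap, so the claimed runtime bound is unsupported; filling that hole essentially requires reconstructing the paper's equivalence and gap-transfer lemmas (or an alternative two-sided spectral approximation result for your finite-difference discretization, uniform over the adiabatic path).
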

\begin{proof}
    See Section~\ref{sec:proof_of_main_thm}.
\end{proof}
Note that $V$ is not required to be convex only in $B^2_{r+1}$,
and also, not every convex $V$ satisfy the conditions.

\subsection{The drum problem}
We provide an example to demonstrate that
Theorem~\ref{thm:main} is not void, and indeed useful.
The example is to compute the minimum eigenvalue of the Laplacian
on an $ n$-dimensional domain $\Omega$.
We impose the Dirichlet boundary condition,
which enforces that
\begin{align*}
    \psi(x)  &= 0 \qquad\qquad \forall x\in\partial \Omega.
\end{align*}
We assume that we are given 
\begin{align*}
    \Omega = \{  x\in  | \ a_j \cdot x \le b_j , \ \forall j \in [m]\} \subset \rn,
\end{align*}
where $a_j \in \mathbb R^n$ and $\|a_j\|_2 = 1$ for each $j$.
Additionally, we assume that $b_j\ge 1$
and also that $\Omega\subset B^2_R$, giving 
\begin{align*}
    B^2_1\subset \Omega\subset B^2_R.
\end{align*}

The task is to solve the following eigenvalue problem.
\begin{problem}[Drum]\label{prob:drum}
Given a polytope 
$    \Omega = \{  x\in \mathbb R^n | \ a_i \cdot x \le b_i , \ \forall i \in [m]\},$
\begin{align*}
    \text{minimize}& \qquad \lambda_0 \\ 
    \text{subject to}
        & \qquad \psi:\Omega \rightarrow \mathbb C, \\
        & \qquad \lambda_0 \psi(x) =  -\Delta\psi(x) \qquad \forall x \in \Omega \setminus \partial \Omega,\\
        & \qquad  \psi(x) = 0  \qquad \qquad \forall x \in \partial \Omega, \\
        & \qquad \psi \ne 0.
\end{align*}
\end{problem}

Similarly to the Schr\"odinger problem, we have an equivalent problem in terms of the energy functional (see Definition~\ref{def:energy}).
Here $\Delta^D_\Omega$ is the Laplacian operator on $\Omega$
with the Dirichlet boundary condition.
For the Drum problem, $\mathcal H = L^2(\Omega)$ is the appropriate Hilbert space.

\begingroup
\let\savedtheproblem\theproblem     
\addtocounter{problem}{-1}          
\renewcommand{\theproblem}{\savedtheproblem$'$} 
\begin{problem}[Drum]
Given a polytope 
$    \Omega = \{  x\in \mathbb R^n | \ a_i \cdot x \le b_i , \ \forall i \in [m]\},$
\begin{align*}
    \text{minimize} & \qquad  - \Delta^D_\Omega[ \psi] .
\end{align*}
\end{problem}
\endgroup

A simple application of the ASCOA gives the first known polynomial time algorithm for the drum problem.

\begin{theorem}
    \label{thm:DL_main}
Given $    \Omega = \{  x\in \mathbb R^n | \ a_i \cdot x \le b_i , \ \forall i \in [m]\}$ satisfying $ B^2_1\subset \Omega\subset B^2_R$,
Algorithm~\ref{alg:drum} solves Problem~\ref{prob:drum} up to error $\epsilon_0$ with probability $\ge2/3$ in polynomial time in $n,m,R,1/\epsilon_0$.
\end{theorem}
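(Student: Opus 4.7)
The plan is to reduce the Drum problem to the Schr\"odinger problem by designing a smooth convex ``soft-wall'' potential $V$ that confines to $\Omega$, and then invoke Theorem~\ref{thm:main} (the Fundamental Gap Algorithm). For each constraint $a_i \cdot x \le b_i$ defining $\Omega$, introduce a smooth convex non-decreasing barrier $\phi:\mathbb R\to\mathbb R_{\ge0}$ with $\phi\equiv 0$ on $(-\infty,0]$ and $\phi$ growing on $[0,\infty)$; concretely, one may take $\phi''$ to be a smooth bump supported in $[0,\delta]$ with $\int \phi''=1$, so that $\phi$ is $C^\infty$, vanishes on $(-\infty,0]$, and is eventually linear with slope $1$. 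Define $V(x) := M \sum_{i=1}^m \phi\bigl(a_i\cdot x - b_i\bigr)$, where $M>0$ and $1/\delta>0$ are tuning parameters chosen polynomial in $n,m,R,1/\epsilon_0$.

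Next I would verify the four hypotheses of Theorem~\ref{thm:main}. Convexity of $V$ is immediate, being a sum of convex increasing functions precomposed with affine maps. For the bowl-shaped condition, on $B^2_1\subset\Omega$ we have $a_i\cdot x\le \|a_i\|_2\le 1\le b_i$, so every summand vanishes and $V\equiv 0$ there; thus $V\le a$ on $B^2_1$ for any choice of $a\ge 0$. Taking $r = R + O(1)$ yields $\Omega\subset B^2_r\subset B^2_{r+1}$, and choosing $M$ large enough forces $V\ge c+1$ just outside an $O(\delta)$-enlargement of $\Omega$, delivering the chain $B^2_1\subset V^{-1}_a\subset V^{-1}_b\subset B^2_r\subset B^2_{r+1}\subset V^{-1}_c\subset V^{-1}_{c+1}\subset B^\infty_{L/2}$ for appropriate $L$. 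The derivative bound $|\partial_j^l V(x)|\le m M (1/\delta)^{l-1}$ (using $\|a_i\|_2=1$) gives $\frac{1}{l}\log|\partial_j^l V|\le \log(1/\delta)+O(1)$, which is $\le \log p(m)$ for some polynomial $p$ once $1/\delta$ is polynomial in $m$ and the other problem parameters. Access to $V$ is clear from its explicit formula in terms of the $a_i,b_i$.

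The main technical step is the spectral approximation: I must show $|\lambda_0(-\Delta+V) - \mu_0|\le \epsilon_0/2$, where $\mu_0$ is the Dirichlet ground eigenvalue of $-\Delta^D_\Omega$. The upper bound $\lambda_0(-\Delta+V)\le \mu_0$ is immediate from the variational principle applied to the zero-extension $\bar\psi_0 \in H^1(\rn)$ of the Dirichlet ground state: the kinetic energy $\int_\rn |\nabla \bar\psi_0|^2$ equals $\int_\Omega |\nabla \psi_0|^2$, and $V\equiv 0$ on $\supp \bar\psi_0 = \Omega$, so the Rayleigh quotient is exactly $\mu_0$. The lower bound $\lambda_0(-\Delta+V)\ge \mu_0 - O(\epsilon_0)$ uses an Agmon/IMS localization argument: any normalized state with $h$-energy below a fixed threshold decays super-exponentially outside $\Omega$ on the scale set by $V$ (with weight $\sim \exp(-c\sqrt{M}\,\mathrm{dist}(x,\Omega)^{3/2})$ for our linearly growing $V$), so cutting such a state off to $\Omega$ via a smooth partition of unity produces a test function in $H^1_0(\Omega)$ whose Dirichlet energy differs from the original by a correction polynomially small in $1/M, 1/\delta$.

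The main obstacle is making this spectral approximation quantitative with polynomial dependence on the parameters --- bookkeeping the truncation error (both the $|\nabla \chi|^2|\psi|^2$ kinetic term and the normalization correction), and choosing $M, 1/\delta$ polynomial in $n,m,R,1/\epsilon_0$ so that the total approximation error is at most $\epsilon_0/2$. Once this is done, the parameters $L,c,r,E_0,\sigma$ entering Theorem~\ref{thm:main} are all polynomial in $n,m,R,1/\epsilon_0$, so the FGA computes $\lambda_0(-\Delta+V)$, and therefore $\mu_0$, up to total error $\epsilon_0$ in polynomial time with probability $\ge 2/3$, completing the proof.
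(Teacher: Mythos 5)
Your overall strategy is the same as the paper's: define a smooth convex soft-wall potential $V(x)=M\sum_i\phi(a_i\cdot x-b_i)$ (the paper uses exactly this with $\phi=\Bar_\epsilon$ and $M=3E/\mu^6$), verify the hypotheses of Theorem~\ref{thm:main}, and reduce the theorem to showing $\lambda_0(-\Delta+V)$ is $O(\epsilon_0)$-close to $\lambda_0(-\Delta^D_\Omega)$. The upper bound via the zero-extension of the Dirichlet ground state is fine (the paper gets it from domain monotonicity, Lemma~\ref{lem:domain_monotonicity}, which avoids worrying about operator versus form domains).

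The genuine gap is in your lower bound. You propose to take a low-energy state of $h=-\Delta+V$ and cut it off ``to $\Omega$ via a smooth partition of unity'' to get a test function in $H^1_0(\Omega)$, controlling the error by Agmon-type decay. But the decay estimate you invoke only controls the mass of the state where $V$ is large, i.e.\ \emph{outside} $\Omega$ (indeed outside an $O(\delta)$-collar around $\partial\Omega$); inside $\Omega$ the potential vanishes and gives no decay. A cutoff landing in $H^1_0(\Omega)$ must transition from $1$ to $0$ strictly inside $\Omega$, and the IMS/localization error $\int|\nabla\chi|^2|\psi|^2$ then lives in an inner collar where you have no quantitative smallness for $|\psi|^2$ — that would essentially require already knowing $\psi$ is close to the Dirichlet ground state, which is circular. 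The paper's route fixes exactly this: it cuts off to the \emph{enlarged} polytope $\Omega'=(1+\epsilon)\Omega$, where the cutoff $f=\Cut_{E/\mu^6,E/\mu^6}\circ V$ varies only in the region $V\ge E/\mu^6$ so Markov's inequality applies (Lemma~\ref{lem:DL_equivalence}, giving $(E,O(\epsilon_0))$-equivalence of $h$ and $h^D_{\Omega'}$ and hence closeness of $\lambda_0$ via Lemma~\ref{lem:ground_energy}), and then separately compares $\lambda_0(-\Delta^D_{\Omega'})$ with $\lambda_0(-\Delta^D_\Omega)$ by dilating eigenfunctions, which costs $O(\epsilon\,\lambda_0(-\Delta^D_\Omega))=O(\epsilon n^2)$ and forces the choice $\epsilon=\Theta(\epsilon_0/n^2)$ in Algorithm~\ref{alg:drum}. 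Without this enlargement-plus-scaling step (or some substitute), your lower bound does not go through.

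A secondary point: you explicitly defer ``the quantitative bookkeeping'' of the truncation error and of the choices of $M,1/\delta$, but that bookkeeping is where most of the paper's proof actually lives — the scaling $\mu=O(\epsilon_0/(n^5m^{1/3}R^{23/6}))$ comes from simultaneously enforcing $V^{-1}_b\subset B^2_r$ with $b=\Theta(E^7/\sigma^6)$ and the equivalence error $O(\epsilon_0)$, and $c$ is bounded by $\max_{\|x\|_\infty\le L}V=O(ELm\sqrt n/\mu^6)$. Also, your claim that one can force ``$V\ge c+1$ just outside an $O(\delta)$-enlargement of $\Omega$'' cannot be literally right as stated, since the same chain requires $V\le c$ on all of $B^2_{r+1}\supset\Omega$; the roles of $b$ and $c$ in Condition~2 need to be kept separate, with $c$ an upper bound on $V$ over the large ball and the last inclusion handled via the growth of $V$ near the box boundary.
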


\begin{proof}
    See Section~\ref{sec:proof_of_main_thm}.
\end{proof}

\section{Algorithms}

Our algorithm applies an adiabatic evolution under the time‑dependent Hamiltonian
\begin{align}\label{eq:adiabatic_path}
H_Q(t) \;:=\; K_Q \;+\; V_Q(t), \qquad t\in[0,T],
\end{align}
acting on $n$ registers of $\log N$ qubits each. The kinetic term $K_Q$ is diagonal in the discrete Fourier basis and mimics the continuous operator $-\Delta$. 
The potential term $V_Q(t)$ is diagonal in the computational basis and mimics a continuous potential operator. 
This discretization of kinetic and potential operators follows prior work in quantum simulation~\cite{wiesner96, zalka96, KJLMA08, CLLLZ22}.

The time-dependent $V_Q(t)$ interpolates between an initial potential $V_Q(0)$ and a final potential $V_Q(T)$.
The initial potential is chosen so that $K_Q + V_Q(0)$ decomposes as a sum of single‑register Hamiltonians $H_N$. Consequently, the ground state of $H_Q(0)$ can be prepared in polynomial time as the product of the ground states of the $H_N$.
The final potential is defined so that $\lambda_0\big(H_Q(T)\big)$ is close to $\lambda_0(h)$, where $\lambda_0(A)$
is the minimum eigenvalue of a self-adjoint $A$.

We first describe how we discretize Schr\"odinger operators so that 
they are simulable on a quantum computer, and then 
state our two algorithms.

\subsection{Discretization of Schr\"odinger operators}
It is more natural to discretize a Schr\"odinger operator $h_\tn  = -\Delta_\tn + V_\tn$ on the torus 
\begin{align*}
    \tn:  = \rn / L\mathbb Z^n
\end{align*}
than $h$ on $\rn$.
Here, $\Delta_\tn = - \sum_{i\in[n]} \partial^2/\partial x_i^2$ is the Laplacian on the flat torus, and $V_\tn:\tn\rightarrow\mathbb R_{\ge 0}$ is smooth.
We discretize $\tn$ to the grid $\{\, yL/N \;|\; y\in\mathcal N^n \,\}$, where
    \begin{align}
        \mathcal N \;:=\; \left\{ -\frac{N}{2},\ldots,-1,0,1,\ldots,\frac{N}{2}-1 \right\}. \label{eq:grid_label}
    \end{align}    
We assume $N$ is a power of 2, and identify the labels for the computational basis states modulo $N$,
so that, for $y,y'\in \mathbb Z^n$,
\begin{align*}
    |y\rangle = |y'\rangle \qquad \text{whenever} \qquad y =  y' \mod N.
\end{align*}
This discretization requires $n\log  N$ qubits to represent the state.

We define the discretized Schr\"odinger, kinetic, and potential operators to be
\begin{align*}
    H_Q &:= K_{Q} + V_Q, \\
    K_Q &:= \sum_{i=1}^n I_N^{\otimes(i-1)} \otimes K_N \otimes I_N^{\otimes (n-i)}, \\
    V_Q &:= \sum_{y\in \mathcal N^n} |y\rangle \; V_{\tn}\!\left(\frac{yL}{N}\right)\! \langle y |, 
    \end{align*}
where $I_N$ is the $N\times N$ identity,
\begin{align*}
K_N \;:=\; \sum_{k_0 \in \mathcal N} U_N |k_0 \rangle \; \frac{4\pi^2 k_0^2}{L^2}\; \langle k_0 |\, U_N^\dagger ,
\end{align*}
and $U_N$ is the $N$‑dimensional quantum Fourier transform acting on $|k_0\rangle$ for $k_0\in \mathcal N$ by
\begin{align*}
    U_N|k_0\rangle \;=\; \frac{1}{\sqrt{N}} \sum_{y \in \mathcal N} e^{\,i 2\pi\, k_0 y/N}\, | y\rangle.
\end{align*}
Equivalently, we can also write 
\begin{align*}
    K_Q \;:=\; \sum_{k \in \mathcal N^n} U_{\mathcal F} |k \rangle \; \frac{4\pi^2 \|k\|_2^2}{L^2}\; \langle k |\, U_{\mathcal F}^\dagger
\end{align*}
where  $U_{\mathcal F} = \bigotimes_{i=1}^n U_N$ is the discrete Fourier transform on $\mathcal N^n$.

\subsection{Algorithm for Schr\"odinger convex optimization}
We define the time-dependent Hamiltonian 
\begin{align*}
    H_Q(t) \ := \ K_Q + \frac{T-t}{T}V_Q(0) + \frac{ t }{T }V_Q(T),
\end{align*}
where
\begin{align}
    V_{Q}(0) &:= b\sum_{ y\in\nn} |y\rangle \  \sum_{i\in[n]} \left( 1 - \Cut_{\frac{1}{4\sqrt n},\frac{1}{4\sqrt n}}\!\left(\left|\tfrac{y_i L}{N} \right| \right) \right)\langle y |,\label{eq:potential_def_i}\\
    V_{Q}(T) &:= \sum_{y \in\nn} |y\rangle \  \Sat_{c,1}\!\circ\!  V\big(\tfrac{yL}{N}\big)  \langle y| \label{eq:potential_def_f},
\end{align}
and $b,c \in\mathbb R_{\ge 0}$.
Here,
$\Cut_{\frac{1}{4\sqrt n },\frac{1}{4\sqrt n }}:\mathbb R\rightarrow \mathbb R$ is a 
smooth cutoff function and $ \Sat_{c,1}:\mathbb R\rightarrow\mathbb R$ 
is a smooth saturating function 
(Definition~\ref{def:smooth_functions}) 
such that
\begin{align*}
\Cut_{\frac{1}{4\sqrt n },\frac{1}{4\sqrt n }}(x)  
&
\
\begin{cases}
     = 1 \qquad\qquad & x \le \frac{1}{4 \sqrt n}, \\
     \in[0,1] \qquad\qquad &x\in [\frac{1}{4 \sqrt n}, \frac{1}{2 \sqrt n}],\\
     = 0 \qquad\qquad & x \ge \frac{1}{2 \sqrt n},
\end{cases}\\
\Sat_{c,1}(x)  
&
\
\begin{cases}
     = x \qquad & x \le c, \\
     :\text{monotone increasing} \qquad &x\in [c, c+1],\\
     : \text{some constant in} \  [c,c+1] \qquad & x \ge c+1.
\end{cases}
\end{align*}
Note that the initial Hamiltonian $H_Q(0)= K_Q + V_Q(0)$ is a sum of single-register Hamiltonians:
\begin{align}
    H_Q(0) = &\sum_{i=1}^n I_N^{\otimes(i-1)} \otimes H_N \otimes I_N^{\otimes (n-i)}, \nonumber\\
    \text{where}\qquad 
            H_N := &K_N \;+\; b\sum_{y_0\in \mathcal N} |y_0\rangle \left( 1 - \Cut_{\frac{1}{4\sqrt n}, \frac{1}{4\sqrt n}} \!\left( \left|\frac{y_0L}{N}\right| \right) \right)\! \langle y_0|, \label{eq:hn}
\end{align}

We present our main algorithm that we call the Adiabatic Sch\"odinger Convex Optimization Algorithm~(ASCOA) 
after the Fundamental Gap Theorem that we use crucially.
\begin{algorithm}[ASCOA]\label{alg:ASCOA}
Given an efficient circuit that computes $V(yL/N)$ for all $y\in\mathcal N^n$ up to a negligible error in $n,1/\epsilon$:
\begin{enumerate}
    \item Prepare the product ground state $|\Psi_{init}\rangle = \bigotimes_{i = 1}^n |\Psi_0\rangle$ of $H_Q(0)$ with probability $> 3/4$, by preparing $|\Psi_0\rangle$ on the $i$-th register for each $i\in[n]$ as follows:
    \begin{enumerate}
        \item Prepare $\Theta(N\log n)$ copies of the maximally mixed state $\frac{1}{N} I$ on $\log N$ qubits.
        \item Measure the energy of each copy with respect to $H_N$ (Equation~\eqref{eq:hn})
        using phase estimation.
        \item Move the copy with the lowest measured energy to the $i$‑th register.
    \end{enumerate}
    \item Evolve $|\Psi_{init}\rangle$ under the time‑dependent Hamiltonian
    \[
        H_Q(t) \;=\; K_Q +  (1- t/T)\, V_Q(0) \;+\; (t/T)\, V_Q(T), \qquad t\in[0,T],
    \]
    via Hamiltonian simulation to obtain the final state $|\Psi_{final}\rangle$, where $V_Q(0),V_Q(T)$ are as in Equations~\eqref{eq:potential_def_i}-\eqref{eq:potential_def_f}.
    \item Measure the energy of $|\Psi_{final}\rangle$ with respect to $H_Q(T)$ using phase estimation and output the result as an estimate of $\lambda_0(h)$.
\end{enumerate}
\end{algorithm}

Since we are only aiming for a polynomial-time algorithm, 
any off-the-shelf Hamiltonian simulation algorithm suffices.
For instance, we can use the algorithm from \cite{KSB18}.

\subsection{Algorithm for the drum problem}

Our approach for the drum problem is to run the ASCOA for a Schr\"odinger operator on $\rn$
with the potential defined by a barrier function 
that penalizes a function going outside of $\Omega$.
We use a smooth barrier function $\Bar_\epsilon$ (see Definition~\ref{def:smooth_functions}) such that
\begin{align*}
    \Bar_\epsilon(x)  \ 
    \begin{cases}
    = 0 \qquad& \qquad x \le 0 \\
    : \  \text{monotone increasing} \qquad & \qquad x \in[0,\epsilon] \\
    : \ \text{increasing with slope } 1 \qquad & \qquad x \ge \epsilon
    \end{cases}
    .
\end{align*}

\begin{algorithm} \label{alg:drum}
Given $    \Omega = \{  x\in \mathbb R^n | \ a_j \cdot x \le b_j , \ \forall j \in [m]\},
$
\begin{enumerate}
    \item Run Algorithm~\ref{alg:ASCOA} with  
\begin{align}
    V =\frac{3E}{\mu^6} \sum_{i\in[m]} \Bar_\epsilon( a_i \cdot x -b_i ) \label{eq:DL-potential}
\end{align}
    as an input potential with the parameters
    \begin{align*}
    &E = \Theta(n^2), \qquad \mu =   O({\epsilon_0}/{n^5  m^{1/3} R^{4}} )   . 
\qquad \epsilon = \Theta(\epsilon_0/n^2), \\
    &b=\Theta(n^{32}R^{24}/\epsilon_0^6)   ,\qquad  c=\Theta({ELm\sqrt n}/{\mu^6 })   , \qquad   L = 3R,  \qquad N = \text{poly}(n, m, 1/\epsilon, R )  .
    \end{align*}
    
\end{enumerate}  
\end{algorithm}
Note that the parameters are polynomials in $n,m,1/\epsilon,R$, 
giving polynomial runtime and number of required qubits.

\pagebreak

\section*{Part II: Proving Theorem~\ref{thm:main} and Theorem~\ref{thm:DL_main}}
\addcontentsline{toc}{section}{Part II: Proving Theorem~\ref{thm:main} and Theorem~\ref{thm:DL_main}}

\section{Preliminaries}\label{sec:prelim}

We adapt
the language and tools of operator theory and PDE theory.  
This section does not aim to provide a complete survey. We refer interested readers to \cite{RS75vol2, Hall13, Evans10}.

\subsection{Operator theory cheat sheet}
In this paper, we are concerned with functions $\psi:\Omega\rightarrow\mathbb C$
where $\Omega\in\{\rn ,B, \tn  \}$ and $B\subset\rn$ is a compact measurable subset.

\begin{itemize}
    \item A \textbf{Hilbert space} is a complete metric space under the metric induced by an inner product. 
For example, the space 
\begin{align*}
    L^2(\Omega) = \{ \ \psi \ | \ \psi :\mathbb R^n \rightarrow \mathbb C, \ \int_{\Omega } |\psi(x)|^2  \text d x < \infty  \}
\end{align*} is a Hilbert space, equipped with the inner product between $\psi,\phi\in L^2(\Omega)$ given by
\begin{align*}
    \langle \psi| \phi\rangle  :=\int_{ \Omega} \overline{\psi(x)} \phi(x) \ \text d x,
\end{align*}
where $\overline a$ is the complex conjugate of $a\in \mathbb C.$

    \item A \textbf{linear operator} $A$ on Hilbert space $\mathcal H$ is a linear map 
    that is defined by its \textbf{operator domain} $\mathcal D(A)\le \mathcal H$ 
    and its action on the operator domain $A:\mathcal D(A) \rightarrow \mathcal H.$
For example, the Laplacian operator 
\begin{align*}
   \Delta:\psi\rightarrow \sum_{i \in [n]}\frac{\partial^2 \psi}{\partial x_i^2}
\end{align*}
is an operator on $\mathcal H = L^2(\mathbb R^n)$ that is defined on the domain $\mathcal D({\Delta})$. 
Note that $\Delta\psi$ is 
only defined on twice-differentiable functions 
as per the definition above, therefore, $\mathcal D(\Delta) \subset L^2(\rn)\cap C^2(\rn)$.
Furthermore, not every twice-differentiable $\psi\in L^2(\rn)$
yields $\Delta \psi$ in $L^2(\rn).$
In many cases, we want an operator domain on which the operator is self-adjoint (see below).
Finding such a domain is a nontrivial task. 
For instance, see~\cite{RS75vol2}.

\item 
An important family of operators is \textbf{multiplication operators}. Given a function $f:\mathbb R^n \rightarrow \mathbb C$, multiplication operator $M_f$ is defined on $L_2(\mathbb R^n)$ so that for $x\in\mathbb R^n$,
\begin{align*}
    M_f\psi (x) := f(x) \psi(x).
\end{align*}
For example, the potential operator $M_V$
for a function $V:\rn\rightarrow \mathbb \mathbb R$ is a multiplication operator.
In this paper, we abuse the notation and denote $V$ instead of $M_V$,
following the notation in physics.

\item 
A \textbf{unitary operator} is a surjective linear operator $U:\mathcal H \rightarrow\mathcal H$ that preserves the inner product
\begin{align*}
    \langle\phi|\psi\rangle = \langle U\phi|U\psi\rangle \qquad \forall \phi,\psi\in\mathcal H.
\end{align*}

\item  
Define \textbf{operator norm} $\|A\|$ of an operator $M$ on $\mathcal H$ to be
\begin{align*}
    \|A\|:=\sup_{\psi \in \mathcal D(A)} \frac{\|A\psi\|}{\|\psi\|} = \sup_{\psi \in\mathcal D, \|\psi\| =1} \|A\psi\|.
\end{align*}
For a multiplication operator $M_V$, we have
$    \|M_f\| = \|f\|_{\infty}.$ 
\item An operator $A$ is called \textbf{symmetric} if its domain $\mathcal D (A)$ is dense in $\mathcal H$, and for all $\psi, \phi \in \mathcal D(A)$, we have $    \langle A\psi |\phi\rangle = \langle \psi |A\phi\rangle.$

\item 
Given an operator $A$ such that $\mathcal D(A)$
is dense in $\mathcal H$, its \textbf{adjoint operator} $A^\dagger$ is defined so that 
$A^\dagger \psi = z$ where $z$ is the unique vector satisfying $\langle  A\phi |\psi \rangle=\langle  \phi | z \rangle \ \forall \phi \in \mathcal D(A)$.
The domain $\mathcal D(A^\dagger)$ is the set of $\psi$ for which such $z$ exists.
\item 
The operator $A$ is \textbf{self-adjoint} if $\mathcal D(A) = \mathcal D(A^\dagger)$ and $A\psi = A^\dagger\psi $ for all $\psi \in \mathcal D(A)$.
\item 

A vector $\psi \in\mathcal D(A)\setminus \{0\}$ is an \textbf{eigenvector} of $A$ if there exists $\lambda \in \mathbb C$ such that $A\psi = \lambda \psi. $ The value $\lambda$ is called the \textbf{eigenvalue} of $\psi.$
\item 

The \textbf{spectrum} of $A$ is the set 
\begin{align*}
\text{Spec}(A) :=\{\lambda \ |A - \lambda I \text{ does not have an inverse with a bounded norm\}}.    
\end{align*}
If $\lambda$ is an eigenvalue, then $\lambda \in \text{Spec}(A)$. The inverse of this statement is not necessarily true. 

\item 
A spectrum $\text{Spec}(A)$ is \textbf{purely discrete} if each $\lambda\in \text{Spec}(A)$ is an eigenvalue, the multiplicity of $\lambda$, defined as $\dim \ker ( A - \lambda I)$, is finite, and $\text{Spec}(A)$ accumulates at no other point than $\infty$.

\end{itemize}

\subsection{Schr\"odinger operators}

A Schr\"odinger operator $h=-\Delta + V$ is a linear operator that is the sum of the negative Laplacian $-\Delta$,
and a multiplication operator $V$. 
Its action on $\psi\in \mathcal D(h)$ is defined by 
\begin{align*}
    h\psi(x)=  -\sum_{i\in[n]} \frac{\partial^2}{\partial x_i^2} \psi(x)  + V(x)\psi(x). 
\end{align*}
If $\Omega = B\subset\rn$ is a bounded domain,
we always impose the Dirichlet boundary condition, which asserts that an operator acts only on $\psi\in L^2(\Omega)$ such that 
\begin{align*}
    \psi(x)  = 0 \qquad \forall x\in\partial B.
\end{align*}
We use the superscript $D$ to denote that the operator is under the Dirichlet boundary condition.
For example, $\Delta_B^D$
is the Dirichlet Laplacian operator on the domain $B$.

In  this paper, we are interested in Schr\"odinger operators in the following forms.
\begin{definition}[Schr\"odinger operators]
    We define sets of Schr\"odinger operators on the real space $\mathbb R^n$, on a compact subset with smooth boundary $\Omega \subset \mathbb R^n$ with the Dirichlet boundary condition, 
    and on the length $L$ torus $\mathbb T^n :=\mathbb R^n/L\mathbb Z^n$ as follows:
    \begin{align*}
        S_{\mathbb R^n} :=& \{ h = -\Delta_{\mathbb R^n} + V \ |  \ V:\mathbb R^n \rightarrow \mathbb R_{\ge 0}, V\in \mathcal C^\infty(\mathbb R^n), \  V(x)\rightarrow\infty \text { as } |x| \rightarrow\infty \} \\
        S_{\Omega}^D :=& \{h= - \Delta_{\Omega}^D  +V \  | \  V: \Omega\rightarrow \mathbb R _{\ge 0}, \ V\in  C^\infty (\Omega\mathcal), \ \text{with the Dirichlet boundary condition}  \},   \\
        S_{\mathbb T^n} := &\{h= - \Delta_{\mathbb T^n}  +V \ |\  V: \mathbb T^n \rightarrow \mathbb R _{\ge 0}, \ V\in \mathcal C^\infty (\mathbb T^n) \},    
    \end{align*}
where $\Delta_{\mathbb R^n}$, $\Delta^D_{\Omega}$, $\Delta_{\mathbb T^n}$ denote the Laplacian operators $\sum_{i\in[n]}\partial^2/\partial x_i^2 $ on the respective domain with the respective boundary condition.
For each $h \in S_\rn \cup S^D_\Omega \cup S_\tn$, we assume that $h$ is self-adjoint on its domain $\mathcal D(h)$,
and $\mathcal D(h)$ contains $C^\infty_0(\Omega)$, the set of smooth functions with compact support in $\Omega$.
\end{definition}

\begin{remark}
    It is a non-trivial but well-known fact that for an operator of the form $h \in S_\rn \cup S^D_\Omega \cup S_\tn$, 
    we can define $\mathcal D(h)$ so that $h $ is self-adjoint
    and contains $C^\infty_0(\Omega)$. 
    See Theorem~X.28 of \cite{RS75vol2} for $\rn$, Theorem~3.8 of \cite{Krejcirik23} for a bounded $\Omega \subset \rn$, and the Sobolev space $H^2(\tn)$ forms an operator domain on which $\Delta_\tn$ is self-adjoint. 
\end{remark}

It is important to us that each $h \in S_\rn \cup S^D_\Omega \cup S_\tn$
has a purely discrete spectrum (Lemma~\ref{lem:completeness_of_eigenvecs})
since
the eigenfunctions of an operator with a purely discrete spectrum form a complete basis set of the Hilbert space (Lemma~\ref{lem:completeness}).
Furthermore, an eigenfunction of a Schr\"odinger operator is smooth as a function (Lemma~\ref{lem:smoothness_of_eigenvecs}).

\begin{lemma}[Theorem 3.10 of~\cite{Krejcirik23}]\label{lem:completeness}
Let $A$ be a self-adjoint operator with a purely discrete spectrum on the Hilbert space $\mathcal H$. Then, the eigenvectors of $A$ form an orthonormal basis in $\mathcal H$.
\end{lemma}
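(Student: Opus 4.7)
The plan is to prove completeness of the eigenbasis in two moves: first dispatch the orthogonality bookkeeping (which is essentially forced by self-adjointness), then prove density of the span of eigenvectors, which is where all the real content sits. I would set $\mathcal{K}$ to be the closed linear span of all eigenvectors of $A$ equipped with some choice of orthonormal basis on each eigenspace, and reduce the lemma to showing $\mathcal{K} = \mathcal H$.

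For the orthogonality step, given two eigenvectors $\phi, \psi \in \mathcal{D}(A)$ with distinct eigenvalues $\mu$ and $\lambda$, the self-adjointness identity $\langle A\phi|\psi\rangle = \langle \phi|A\psi\rangle$ gives $\mu\langle\phi|\psi\rangle = \lambda\langle\phi|\psi\rangle$, so $\langle\phi|\psi\rangle = 0$. Within a single eigenspace $\ker(A-\lambda I)$, which is finite-dimensional by the purely-discrete-spectrum hypothesis, Gram--Schmidt produces a finite orthonormal basis. The union over $\lambda \in \mathrm{Spec}(A)$ is therefore an orthonormal set, so it remains only to prove that it is total in $\mathcal{H}$.

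For density, the cleanest route I would take uses the resolvent. Since $A$ is self-adjoint, $\mathrm{Spec}(A) \subseteq \mathbb{R}$, so $i \notin \mathrm{Spec}(A)$ and the resolvent $R := (A - iI)^{-1}$ is a bounded everywhere-defined operator on $\mathcal H$. The spectral mapping $\mathrm{Spec}(R) = \{(\lambda - i)^{-1} : \lambda \in \mathrm{Spec}(A)\} \cup \{0\}$ (the $0$ arises because $\mathrm{Spec}(A)$ accumulates only at $\infty$), together with finite multiplicity, makes $R$ a compact normal operator whose nonzero eigenvalues are exactly $(\lambda - i)^{-1}$, with eigenspaces equal to those of $A$. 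The spectral theorem for compact normal operators then yields an orthonormal basis of $\mathcal H$ consisting of eigenvectors of $R$, which are precisely the eigenvectors of $A$, finishing the proof.

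The main obstacle I anticipate is the justification that $R$ is compact from the ``purely discrete spectrum'' hypothesis as stated in the excerpt; this is the standard equivalence ``compact resolvent $\iff$ purely discrete spectrum'' for self-adjoint operators, but invoking it rigorously requires care with the operator domain $\mathcal{D}(A)$ and the fact that $R:\mathcal H \to \mathcal{D}(A)$ is a bijection. If one prefers to avoid the compact-resolvent machinery, an alternative is a direct orthogonal-complement argument: show that $\mathcal{K}^\perp$ is $A$-invariant (using self-adjointness to move $A$ across the inner product with any eigenvector), that $A$ restricted to $\mathcal{K}^\perp$ inherits self-adjointness and a purely discrete spectrum, and then observe that this restriction would have to possess at least one eigenvector unless $\mathcal{K}^\perp = \{0\}$, contradicting the definition of $\mathcal{K}$. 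Either route lands the same conclusion; I would write the resolvent version since it is shorter and packages the spectral content into a single citation.
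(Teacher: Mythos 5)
Your proposal is correct in substance, but note that the paper itself contains no proof of this lemma: it is imported verbatim as Theorem~3.10 of \cite{Krejcirik23}, so there is no internal argument to compare against, and what you have written is essentially the standard textbook proof of the cited result. The orthogonality step is fine (it silently uses that eigenvalues of a self-adjoint operator are real, which is worth one line), and the resolvent route is sound: $R=(A-iI)^{-1}$ is bounded, normal and injective, and its nonzero eigenvalues $(\lambda-i)^{-1}$ have exactly the eigenspaces of $A$; once $R$ is compact, the spectral theorem for compact normal operators gives $\mathcal H=\ker R\oplus\overline{\operatorname{span}}\{\text{eigenvectors of }R\text{ with nonzero eigenvalue}\}$, and you should state explicitly that $\ker R=\{0\}$ (because $R$ is the inverse of $A-iI$) to kill the first summand. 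The only step with real content is, as you flag, compactness of $R$ from the purely-discrete-spectrum hypothesis; invoking the standard equivalence is legitimate and non-circular provided you cite a version proved via the projection-valued-measure spectral theorem, e.g.\ $R=\sum_n(\lambda_n-i)^{-1}E(\{\lambda_n\})$ is a norm limit of finite-rank operators, rather than a version (such as Reed--Simon XIII.64) whose statement already bundles in the complete orthonormal eigenbasis. In fact, once the spectral measure is on the table, the shortest honest proof skips the compactness packaging entirely: since $\operatorname{Spec}(A)$ is the support of $E$ and consists of countably many isolated eigenvalues, $I=E(\operatorname{Spec}(A))=\sum_n E(\{\lambda_n\})$ strongly, and each $E(\{\lambda_n\})$ is the finite-rank projection onto $\ker(A-\lambda_n I)$, which is completeness. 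Your compact-resolvent write-up is a fine equivalent alternative; the orthogonal-complement sketch at the end would need additional care (reducing subspaces and domains for an unbounded $A$), so staying with the resolvent version is the right call.
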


\begin{lemma}[Spectrum of a Schr\"odinger operator]\label{lem:completeness_of_eigenvecs}
Suppose $h \in S_\rn \cup S^D_\Omega \cup S_\tn$. Then, $\text{Spec}(h)$ is purely discrete, and the eigenvectors form a complete orthogonal basis.
\end{lemma}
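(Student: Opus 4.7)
The plan is to establish pure discreteness of $\text{Spec}(h)$ in each of the three cases, and then invoke Lemma~\ref{lem:completeness} for the completeness claim. Since $h$ is assumed self-adjoint on its domain in all three cases, the only work is verifying that the spectrum is purely discrete; the orthonormal-basis conclusion then drops out of Lemma~\ref{lem:completeness} for free.

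For $h\in S_\tn$, the torus is a compact Riemannian manifold, so the embedding of the form domain into $L^2(\tn)$ is compact by Rellich--Kondrachov. This gives $(-\Delta_\tn + 1)^{-1}$ compact, hence purely discrete spectrum for $-\Delta_\tn$. Because $V$ is smooth on the compact set $\tn$, it is bounded, so $V$ is a bounded self-adjoint perturbation and $h$ inherits the purely discrete spectrum (bounded perturbations preserve compactness of the resolvent). For $h\in S_\Omega^D$, the argument is essentially identical: the compactness of $H^1_0(\Omega)\hookrightarrow L^2(\Omega)$ for a bounded open set with smooth boundary gives the Dirichlet Laplacian a compact resolvent, and $V$ smooth on $\overline\Omega$ is again a bounded perturbation.

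The main obstacle is the case $h\in S_\rn$, where the non-compactness of $\rn$ prevents a direct application of Rellich. The plan here is to use the confining assumption $V(x)\to\infty$ to show that sublevel sets of the quadratic form $Q(\psi) := \langle \psi, h\psi\rangle$ are precompact in $L^2(\rn)$, which forces $(h+1)^{-1}$ to be compact and the spectrum to be purely discrete. Concretely, given a sequence $\{\psi_k\}$ with $Q(\psi_k)\le 1$ and $\|\psi_k\|=1$, I would argue (i) tightness: since $V(x)\to\infty$, for any $\eta>0$ there exists $R$ with $V(x)\ge 1/\eta$ on $\|x\|\ge R$, giving $\int_{\|x\|\ge R}|\psi_k|^2 \le \eta$; and (ii) local compactness: the uniform bound on $\|\nabla \psi_k\|_{L^2(B_R)}$ coming from $-\langle\psi_k,\Delta\psi_k\rangle\le Q(\psi_k)$ together with ordinary Rellich--Kondrachov on the bounded set $B_R$ yields an $L^2(B_R)$-convergent subsequence. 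A diagonal argument over a sequence of radii $R_j\to\infty$ then produces an $L^2(\rn)$-convergent subsequence, establishing compactness of the embedding of the form domain into $L^2(\rn)$.

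Putting the three cases together, $h$ has purely discrete spectrum in all of $S_\rn\cup S^D_\Omega \cup S_\tn$. Combined with the self-adjointness already assumed in the definition, Lemma~\ref{lem:completeness} then yields an orthonormal basis of eigenvectors, which is precisely the conclusion of the lemma. The only technically delicate step is the compact-embedding argument for $\rn$; the torus and bounded-domain cases are essentially direct appeals to classical Rellich-type theorems.
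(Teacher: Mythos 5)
Your proposal is correct, and its skeleton matches the paper's: reduce the lemma to pure discreteness of the spectrum in each of the three cases and then invoke Lemma~\ref{lem:completeness} for the orthonormal eigenbasis. The difference is one of presentation and toolkit. The paper's proof is citation-based: it quotes Reed--Simon Theorem~XIII.16 for $h\in S_\rn$ (confining potential $\Rightarrow$ purely discrete spectrum) and Weyl's essential spectrum theorem applied to the Laplacian for $h\in S^D_\Omega\cup S_\tn$. You instead supply the underlying arguments yourself: for the torus and the bounded Dirichlet domain you use Rellich--Kondrachov compactness of the form-domain embedding to get a compact resolvent for the Laplacian, plus the fact that a bounded perturbation (here $V$, bounded because it is smooth on a compact set) preserves compactness of the resolvent --- which is essentially the content that the paper extracts from Weyl's theorem; and for $\rn$ your tightness-plus-local-Rellich-plus-diagonal argument is exactly the standard proof of the Reed--Simon result the paper cites, with the Markov-type bound $\int_{\|x\|\ge R}|\psi_k|^2\le \eta$ coming from $\int V|\psi_k|^2\le Q(\psi_k)$ and the confining hypothesis. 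What your route buys is self-containedness (the criterion ``compact embedding of the form domain into $L^2$ implies compact resolvent, hence purely discrete spectrum'' is doing all the work, uniformly across the three cases); what the paper's route buys is brevity and delegation of the technical form-domain bookkeeping to the literature. No gaps: the only step you should state explicitly if you write this up in full is the standard functional-analytic fact that, for a nonnegative self-adjoint operator, precompactness in $\mathcal H$ of the form-norm unit ball is equivalent to compactness of $(h+1)^{-1}$, which justifies passing from your sequential compactness argument to discreteness of the spectrum.
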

\begin{proof}
By Lemma~\ref{lem:completeness}, it is enough to show that $h$ has a purely discrete spectrum.
   For $h \in S_\rn$, see \cite[Theorem~XIII.16]{RS78_vol4}. For $h \in S^D_\Omega \cup S_\tn$, Weyl's essential spectrum theorem applied to
   the corresponding Laplacian operators gives the purely discreteness of the spectrum.
\end{proof}

\begin{lemma}[Smoothness of eigenfunctions]\label{lem:smoothness_of_eigenvecs}
    Let $h\in S_\rn\cup S^D_\Omega \cup S_\tn $. Let $\psi \in \mathcal D(h)$ be an eigenfunction of $h$. Then $\psi$ is a smooth function.
\end{lemma}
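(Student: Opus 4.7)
The plan is to invoke elliptic regularity (bootstrapping) for the equation $-\Delta\psi = (\lambda - V)\psi$ that follows from $h\psi = \lambda\psi$. Since $V$ is smooth by the definition of $S_\rn, S^D_\Omega, S_\tn$, the right-hand side inherits whatever regularity $\psi$ has, and the $H^2$-gain from the elliptic operator $-\Delta$ then bumps $\psi$ up one notch. Iterating and applying Sobolev embedding yields $\psi \in C^\infty$.

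Concretely, I would proceed as follows. First, note that $\psi \in L^2(\Omega) = H^0_{\text{loc}}(\Omega)$, where $\Omega \in \{\rn, \Omega, \tn\}$. Since $V$ is smooth and therefore locally bounded, the function $(\lambda - V)\psi$ lies in $L^2_{\text{loc}}$. Rewriting the eigenequation as
\[
-\Delta \psi \;=\; (\lambda - V)\,\psi \quad\text{in the distributional sense}
\]
and applying interior elliptic regularity for the Laplacian (e.g.\ \cite{Evans10}, Theorem 6.3.1.1), we obtain $\psi \in H^2_{\text{loc}}$. Next, since smooth functions act as multipliers on $H^k_{\text{loc}}$ (by the Leibniz rule and the fact that $V$ and its derivatives are locally bounded), the product $(\lambda - V)\psi$ lies in $H^2_{\text{loc}}$, which by elliptic regularity pushes $\psi$ to $H^4_{\text{loc}}$. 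An induction on $k$ gives $\psi \in H^{k}_{\text{loc}}$ for every $k$, and Sobolev embedding ($H^k_{\text{loc}} \hookrightarrow C^{k - \lceil n/2 \rceil - 1}_{\text{loc}}$ for $k$ large) then yields $\psi \in C^\infty$ in the interior.

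To cover each of the three settings, I would handle them uniformly by noting that the above argument is local and uses only the ellipticity of $-\Delta$ plus the smoothness of $V$. For $h \in S_\rn$ and $h \in S_\tn$ there is no boundary, so the interior argument gives smoothness on all of $\rn$ and $\tn$ respectively (on $\tn$, one can alternatively give a one-line Fourier proof, since $(-\Delta + V)\psi = \lambda \psi$ with smooth $V$ forces the Fourier coefficients of $\psi$ to decay faster than any polynomial). For $h \in S^D_\Omega$, interior smoothness follows identically; smoothness up to the boundary then follows from the boundary elliptic regularity theorem (e.g.\ \cite{Evans10}, Theorem 6.3.2.5), which applies because $\partial \Omega$ is assumed smooth in the definition of $S^D_\Omega$ and $\psi$ satisfies the Dirichlet condition.

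The only step that requires any care is verifying that the multiplier estimate ``$V \in C^\infty$, $\psi \in H^k_{\text{loc}} \Rightarrow V\psi \in H^k_{\text{loc}}$'' holds despite $V$ potentially being unbounded (as in $S_\rn$ where $V(x) \to \infty$). This is exactly why I localize: on any compact set $V$ and all its derivatives are bounded, so the multiplier bound holds on each precompact subset, which is all that is needed to conclude $H^k_{\text{loc}}$ regularity. Thus the only genuine input is the standard interior (and, for $\Omega$, boundary) elliptic regularity theorem, and the rest is a routine bootstrap.
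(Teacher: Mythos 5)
Your proof is correct and follows essentially the same route as the paper, which simply cites elliptic regularity results (hypoellipticity of Schr\"odinger operators on $\rn$, interior plus boundary elliptic regularity \`a la Evans on the smooth-boundary Dirichlet domain, and elliptic regularity on $\tn$) where you carry out the standard bootstrap explicitly. Your localization remark handling the unboundedness of $V$ on $\rn$ and your appeal to boundary regularity for $S^D_\Omega$ are exactly the points the cited theorems encapsulate, so nothing is missing.
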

\begin{proof}
An eigenfunction of $h$ is a solution of an elliptic PDE, whose smoothness is a well-studied problem.
The hypoellipticity of Schr\"odinger operators on $\rn$~\cite{hormadner61},
the elliptic regularity on a compact, smooth-boundary domain with the Dirichlet boundary~\cite[Theorem 6.3.6]{Evans10} and the elliptic regularity on $\tn$~\cite{cass16}
give the smoothness of the eigenfunction of $h$ in each case.

\end{proof}

\subsection{Smooth functions}

\begin{definition}[useful smooth functions]\label{def:smooth_functions}

   We define a smooth bump function, a smooth cutoff function, a smooth saturating function,
   and a smooth barrier function
\begin{align*}
   \text{Bump}(x)
    \ &  = \ 
   \begin{cases}
      \ \exp \left(\frac{1}{x(x-1)}\right) \qquad &x \in[0,1],
\\ \ 0 \qquad &x \notin [0,1],
   \end{cases} 
   \\
    \text{Cut}_{\alpha,\beta} (x) 
\ & =   \ 1 - {\int_\alpha^x \Bump \big(\frac{y - \alpha}{\beta } \big) \ \text d y} \ \Big/ {\int_\alpha^{\alpha +  \beta} \Bump \big(\frac{y - \alpha}{\beta } \big) \ \text  d y } ,\\
    \Sat_{\alpha, \beta}(x) 
    \  &= \ \int_0^x \Cut_{\alpha,\beta}(y) \ \text d y,   
    \\
     \Bar_\epsilon(x)
     \ & = \ \int_0^x(1 - \Cut_{0,\epsilon}(y)) \ \text d y.
\end{align*}

\end{definition}

\section{Proof outline and techniques}\label{sec:overview}
The analysis achieves two main goals.
First, we want an inverse-polynomial spectral gap in $H_Q(t)$ throughout $t\in[0,T]$
so that we can apply the Adiabatic Theorem (Theorem~\ref{thm:adiabatic}).
Second, we want the lowest eigenvalue of $h$
to be close to that of the final qubit Hamiltonian $H_Q(T)$.
Once we have established these points, the correctness of the algorithm follows from the Adiabatic Theorem.

In general, proving a lower bound on the spectral gap is challenging.
In our case, we leverage the Fundamental Gap Theorem, which ensures an inverse polynomial gap for
a Dirichlet Sch\"rodinger operator with a convex potential on a bounded convex domain.

We consider the four pairs of a geometric domain and a Hamiltonian 
\[
(\rn,h),\quad  (B, h_B^D),\quad  (\tn,h_{\tn}),\quad (\mathcal N^n,H_Q),
\]
where $B$ is a Euclidean ball in $\rn$. Each plays a different role in the analysis:
$h$ on $\rn$ is the objective,
$h^D_B$ on $B$ provides the spectral gap via the Fundamental Gap Theorem,
$h_\tn$ on $\tn$ plays the role of a ``gearbox'' that connects other Hamiltonians,
and finally $H_Q$ on the finite set $\mathcal N^n$ is efficiently simulable on a quantum computer.
We connect them through truncation lemmas (Section~\ref{sec:truncation}).

\subsection{The Adiabatic and the Fundamental Gap Theorems}
The central facts in proving the correctness of our algorithm are the Adiabatic Theorem
and the Fundamental Gap Theorem.
For the rest of the paper, $\lambda_0(H), \lambda_1(H)$ denote the lowest and the second lowest eigenvalues of $H$,
and
\begin{align*}
    \gap(H):=\lambda_1(H)  - \lambda_0(H).
\end{align*}

\begin{theorem}[The Adiabatic Theorem (adapted from~\cite{rei04, AvD+04})]\label{thm:adiabatic}
Let $H_{ init}$ and $H_{final}$ be two Hamiltonians acting on a finite-dimensional quantum system. 
Consider the time-dependent Hamiltonian
\[
  H(t) := (1 - t/T) H_{ init} + (t/T) H_{final},
\]
that has a unique ground state for all $t\in [0,T]$. 

Then the final state of an adiabatic evolution according to $H(t)$ for $t\in[0,T]$
is $\epsilon$-close to the ground state of $H_{final}$,
if the total evolution time 
\[
  T \ge \Omega\!\left(
    \frac{\|H_{\rm final} - H_{\rm init}\|^{2}}{\epsilon (\min_{t \in [0,T]} \gap (H(s)))^3}
  \right).
\]  
The operator norm is the spectral norm $\|H\| := \max_{\|w\|=1} \|H w\|$.
\end{theorem}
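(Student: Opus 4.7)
The plan is to carry out the classical adiabatic argument: rescale time, expand in the instantaneous eigenbasis, and integrate by parts against the rapidly oscillating dynamical phase. First, I rescale $s = t/T \in [0,1]$, under which $i\partial_t|\psi\rangle = H(t)|\psi\rangle$ becomes $i\partial_s|\psi\rangle = T\,H(s)|\psi\rangle$ with $H(s) = (1-s)H_{init} + s H_{final}$ and $\dot H(s) \equiv H_{final} - H_{init}$ constant along the straight-line path; in particular $\ddot H \equiv 0$, which simplifies later bookkeeping.

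Next, for each $s$ I diagonalize $H(s) = \sum_j \lambda_j(s)|\phi_j(s)\rangle\langle\phi_j(s)|$ (using finite-dimensionality) and write the evolving state in the interaction gauge $|\psi(s)\rangle = \sum_j c_j(s)\,e^{-iT\int_0^s \lambda_j(s')ds'}|\phi_j(s)\rangle$. Plugging into the Schr\"odinger equation, projecting onto $|\phi_j(s)\rangle$, and using the standard identity
\[
\langle\phi_j(s)|\dot\phi_k(s)\rangle \;=\; \frac{\langle\phi_j(s)|\dot H(s)|\phi_k(s)\rangle}{\lambda_k(s)-\lambda_j(s)}, \qquad j \ne k,
\]
which follows from differentiating $H|\phi_k\rangle = \lambda_k|\phi_k\rangle$ and projecting onto $|\phi_j\rangle$, yields the coupled system
\[
\dot c_j(s) \;=\; -\sum_{k\ne j} c_k(s)\, e^{iT\int_0^s(\lambda_j-\lambda_k)ds'}\,\frac{\langle\phi_j(s)|\dot H(s)|\phi_k(s)\rangle}{\lambda_k(s)-\lambda_j(s)}.
\]

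The heart of the proof is integration by parts on the rapid phase. For $j \ne 0$ I write $e^{iT\int_0^s(\lambda_j-\lambda_0)ds'} = \frac{1}{iT(\lambda_j(s)-\lambda_0(s))}\frac{d}{ds}e^{iT\int_0^s(\lambda_j-\lambda_0)ds'}$ and integrate by parts in the ODE for $c_j$. Each such operation produces a boundary contribution together with a bulk term involving a derivative of the slowly varying factor $\langle\phi_j|\dot H|\phi_0\rangle/(\lambda_0-\lambda_j)$; because $\ddot H \equiv 0$, the surviving derivative only brings in factors of $\|\dot H\|/\gap(H)$ via derivatives of the eigenvectors. Summing over $j \ne 0$, invoking Parseval with the completeness of $\{|\phi_j(s)\rangle\}$ to control $\sum_j |\langle\phi_j|\dot H|\phi_0\rangle|^2 \le \|\dot H\|^2$, and closing a Gr\"onwall estimate on the feedback from the excited subspace gives
\[
\sum_{j\ne 0}|c_j(1)|^2 \;\le\; \frac{C\,\|H_{final}-H_{init}\|^2}{T\,\gap(H)^3},
\]
where $\gap(H) := \min_{s\in[0,1]}\gap(H(s))$. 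Demanding this leakage be at most $\epsilon$ yields the claimed lower bound on $T$.

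I expect the main obstacle to be the rigorous treatment of the boundary terms at $s=0$ and $s=1$. The linear path is only $C^0$ when extended by zero outside $[0,1]$, so $\dot H$ has a jump discontinuity and integration by parts produces boundary pieces of size $\|\dot H\|/(T\,\gap(H)^2)$. In the regime $\|\dot H\| \ge \gap(H)$ relevant to the FGA, these are dominated by the bulk $\|\dot H\|^2/(T\,\gap(H)^3)$ term, but absorbing them cleanly typically requires either smoothly mollifying the ramp near the endpoints or a careful Cauchy--Schwarz bookkeeping. A secondary subtlety is verifying that the Gr\"onwall step, which linearizes around $|c_0|^2\approx 1$, is self-consistent throughout $[0,1]$ in the regime where the total excited-state population remains $O(\epsilon) < 1$.
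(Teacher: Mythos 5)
The paper does not actually prove this theorem: it is imported wholesale from the adiabatic-theorem literature (the citations \cite{rei04, AvD+04}), so there is no internal argument to compare line by line. Your sketch follows the classical Born--Fock route --- instantaneous eigenbasis expansion, the identity $\langle\phi_j|\dot\phi_k\rangle=\langle\phi_j|\dot H|\phi_k\rangle/(\lambda_k-\lambda_j)$, integration by parts against the dynamical phase, and a Gr\"onwall closure --- which is indeed one standard way to reach a bound of the form $\|\dot H\|^2/(T\,\gap^3)$, and it correctly exploits $\ddot H\equiv 0$ for the linear ramp. The rigorous treatments behind the cited results (and e.g.\ Jansen--Ruskai--Seiler or Ambainis--Regev) instead track only the ground-state projector via resolvent/commutator equations, precisely to avoid the issues below.

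There are genuine gaps as written. First, your coupled ODE system involves coefficients $\langle\phi_j|\dot H|\phi_k\rangle/(\lambda_k-\lambda_j)$ for \emph{all} pairs $j\neq k$, but the hypothesis only gives a unique ground state: excited levels may cross, so these denominators can vanish and the Gr\"onwall step ``on the feedback from the excited subspace'' is not justified in the form you describe. (Rellich analyticity of the linear family gives a smooth eigenbasis through crossings, but the $1/(\lambda_k-\lambda_j)$ couplings among excited states still blow up; the standard fix is to never decompose the excited subspace and instead bound the reduced ground/excited block dynamics.) Second, the boundary terms you flag are of size $\|\dot H\|/(T\,\gap^2)$ and are dominated by the bulk term only when $\|\dot H\|\ge \gap$; the theorem's hypotheses do not guarantee this, so as sketched you prove a statement with an extra additive $\|\dot H\|/(\epsilon\,\gap^2)$ requirement on $T$ (harmless for the FGA application, but not the statement as written). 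Third, you bound the \emph{leakage probability} $\sum_{j\neq 0}|c_j(1)|^2$ by $C\|\dot H\|^2/(T\,\gap^3)$ and set it equal to $\epsilon$; if ``$\epsilon$-close'' is meant in norm, this only yields $\sqrt{\epsilon}$-closeness, whereas the projector-based proofs give the norm distance itself at order $\|\dot H\|^2/(T\,\gap^3)$, matching the theorem directly. None of these is fatal --- the approach can be repaired along standard lines --- but the proposal as it stands is a plan with the hard steps (excited-level crossings, boundary terms, the self-consistency of the Gr\"onwall linearization) acknowledged rather than carried out.
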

We apply this theorem to the time-dependent qubit theorem $H_Q(t)$
of Algorithm~\ref{alg:ASCOA}.

The Dirichlet Schr\"odinger operator $h^D_B$ is
a Schr\"odinger operator on $B$ with the Dirichlet boundary.
The following theorem~\cite{AC10} provides a lower bound on the spectral gap of $h^D_B$ as an inverse polynomial in the diameter of $B$.
\begin{theorem}[Fundamental Gap Theorem~\cite{AC10}] \label{thm:fundamental_gap}
    Let $\Omega \subset \mathbb R^n$ be a bounded convex domain of diameter $R$, and $V$ a convex potential. 
    Then the eigenvalues of the Dirichlet Schr\"odinger
operator $h^D_\Omega = -\Delta^D_\Omega + V$ satisfy
\begin{align*}
   \gap(h^D_\Omega) \ge \frac{3\pi^2}{R^2},
\end{align*}
where $-\Delta^D_\Omega$ denotes the Dirichlet Laplacian operator on $\Omega$.
\end{theorem}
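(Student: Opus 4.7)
The plan is to reduce the $n$-dimensional problem to a sharp one-dimensional model and invoke a two-point maximum principle in the spirit of Ben Andrews. Let $\phi_0 > 0$ and $\phi_1$ be the first two Dirichlet eigenfunctions of $h^D_\Omega$ with eigenvalues $\lambda_0 < \lambda_1$, and set $w := \phi_1/\phi_0$. A direct computation converts the eigenvalue equation for $\phi_1$ into
\[
-\Delta w \;-\; 2\,\nabla \log \phi_0 \cdot \nabla w \;=\; (\lambda_1 - \lambda_0)\,w,
\]
so bounding $\lambda_1 - \lambda_0$ from below reduces to controlling the oscillation of $w$ across its two nodal domains (which exist by Courant's nodal-domain theorem). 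The comparison model is the one-dimensional Dirichlet Laplacian on $[-R/2, R/2]$, whose gap is exactly $(2\pi/R)^2 - (\pi/R)^2 = 3\pi^2/R^2$, with ground state $\cos(\pi x/R)$ and ratio $w_*(x) = 2\sin(\pi x/R)$.

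The main device is a two-point parabolic maximum principle applied to the drift-diffusion heat flow $\partial_t w = \Delta w + 2\,\nabla \log \phi_0 \cdot \nabla w$. For any pair $x, y \in \Omega$, I would track $w(x,t) - w(y,t)$ against a one-dimensional test profile $\varphi(|x-y|/2)$ whose decay rate under the model dynamics is $3\pi^2/R^2$. Standard parabolic comparison reduces preservation of the inequality $w(x,t) - w(y,t) \le 2\varphi(|x-y|/2)\,e^{-\mu t}$ to two ingredients: (i) the drift $\nabla \log \phi_0$ admits a modulus of concavity at least as strong as that of $\log \cos(\pi x/R)$; and (ii) $\Omega$ is convex, so that chords between interior test points remain in $\Omega$ and the boundary cannot violate the maximum principle. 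Once both hold, letting $t \to \infty$ identifies $\mu = 3\pi^2/R^2$ as a lower bound for the decay rate of $w$, and hence for $\lambda_1 - \lambda_0$.

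The hard part, and the technical heart of Andrews--Clutterbuck, is ingredient (i): a sharp quantitative log-concavity estimate for $\phi_0$. My plan is to prove it by a parallel two-point argument on the parabolic flow $u_t = \Delta u - V u$, showing that the inequality
\[
\bigl(\nabla \log u(x) - \nabla \log u(y)\bigr) \cdot \tfrac{x-y}{|x-y|} \;\le\; -\tfrac{2\pi}{R}\,\tan\!\bigl(\tfrac{\pi |x-y|}{2R}\bigr)
\]
is preserved for all $t > 0$ along the flow. Two obstacles dominate: convexity of $V$ must be used carefully to absorb the nonlinear term $|\nabla \log u|^2$ coming from the evolution equation for $\log u$; and the boundary, where $\log u \to -\infty$, forces an interior-approximation argument that again leans on convexity of $\Omega$ so that the straight segment joining $x$ and $y$ stays inside. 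Passing $t \to \infty$ in this preserved inequality yields the desired modulus-of-concavity bound on $\log \phi_0$; plugging it back into the two-point principle for $w$ then closes the argument and delivers $\gap(h^D_\Omega) \ge 3\pi^2/R^2$.
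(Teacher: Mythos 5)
The paper does not prove this statement at all: it is imported verbatim from Andrews--Clutterbuck~\cite{AC10}, and the ``proof'' in the paper is just the citation. What you have written is, in outline, exactly the Andrews--Clutterbuck argument: pass to $w=\phi_1/\phi_0$, which satisfies the drift equation $-\Delta w - 2\nabla\log\phi_0\cdot\nabla w=(\lambda_1-\lambda_0)w$; run a two-point parabolic maximum principle for the drift heat flow against the one-dimensional model on an interval of length $R$ (whose free Dirichlet gap is $3\pi^2/R^2$); and feed into it the sharp modulus-of-log-concavity estimate $(\nabla\log\phi_0(x)-\nabla\log\phi_0(y))\cdot\frac{x-y}{|x-y|}\le -\frac{2\pi}{R}\tan\bigl(\frac{\pi|x-y|}{2R}\bigr)$, itself obtained by a two-point preservation argument along $u_t=\Delta u - Vu$ using convexity of $V$ and of $\Omega$. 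So you are not taking a different route from the paper's source; you are reconstructing it.

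The issue is that what you present is a roadmap, not a proof: the two steps you label as ``ingredient (i)'' and the preservation of the log-gradient inequality are precisely the content of the fundamental gap conjecture, which resisted proof for decades, and you explicitly defer them (``two obstacles dominate'') rather than resolve them. In particular, the preservation argument requires a careful two-point computation at an interior maximum of the doubled quantity, where convexity of $V$ enters through a comparison of $\nabla V(x)-\nabla V(y)$ along the chord with the (zero) model potential, and the blow-up of $\log\phi_0$ at $\partial\Omega$ forces a nontrivial approximation/localization argument; none of this is carried out. Likewise the final gap comparison needs the oscillation of solutions of the drift flow to be controlled by the 1D modulus-of-continuity model, and the passage from that decay rate to $\lambda_1-\lambda_0$ uses that $e^{-(\lambda_1-\lambda_0)t}w$ solves the flow --- stated but not proved. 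Since the paper's intent is to quote~\cite{AC10} as a black box, the economical and correct move here is to do the same (or to follow the published proof in detail); as it stands, your proposal has a genuine gap exactly at the theorem's technical heart.
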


\subsection{Low-energy truncation}
A disconnection between Theorem~\ref{thm:adiabatic} and Theorem~\ref{thm:fundamental_gap} is that
we need a bound on $\gap(H_Q(t))$,
whereas we have a bound on $\gap(h^D_B)$.

The key idea
for resolving this issue
is that 
to bound the spectral gap,  
we only need to care about the low‑energy properties; 
the rest of the Hilbert space can be ignored, even if infinite‑dimensional.
To formalize this, we introduce $(E,\epsilon)$‑truncated domains that capture the low‑energy subspace up to an error $\epsilon$.

We then prove the Truncation Lemmas showing that 
if 
$(E,\epsilon)$‑truncated domains of two Hamiltonians
admit an isomorphism between them 
that approximately preserves energy, 
then their first two eigenvalues must be close to each other.
Our truncation framework allows a highly modularized analysis
of the spectral gap.

\begin{definition}[Truncation and Equivalence]\label{def:truncation}
For $\epsilon>0$,
a normalized vector $\widetilde{\psi}\in\mathcal D(h)$ is an \textbf{$\epsilon$-truncation} of $\psi$ with respect to $h$, 
if
\begin{align*}
   i)  &  \qquad  h[\widetilde{\psi}] \  \leq \   h[\psi] + \epsilon \qquad \qquad \text{(energy)} \\  
  ii)  & \qquad \|\widetilde{\psi} - \psi\| \  \leq \ \epsilon \qquad \qquad \text{(norm)}.
\end{align*}

A subspace \( \widetilde{\mathcal{D}} \le \mathcal H\) is an \textbf{\((E, \epsilon)\)-truncated domain} of $h$, 
if, for every normalized $\psi \in \mathcal{D}(h)$ such that $h[\psi]\le E$, 
there exists an $\epsilon$-truncation of $\psi$ in $ \widetilde{ \mathcal D}$.

Self-adjoint operators $h_a$ and $h_b$ are \textbf{$(E,\epsilon)$-equivalent}
if there exist $(E,\epsilon)$-truncated domains $\widetilde{\mathcal D}_a $ of $h_a$ and $\widetilde{\mathcal D}_b $ of $h_b$
that are isomorphic via a unitary
$U: \widetilde{\mathcal D}_a \rightarrow \widetilde{\mathcal D}_b$
such that 
\begin{align*}   
            \left| h_a[\widetilde{\psi}]  - h_b[U\widetilde{\psi}] \right|     \leq  \epsilon \qquad \qquad \forall  \ \widetilde{\psi} \in \widetilde{\mathcal D}_a .        
    \end{align*}
\end{definition}

\begin{lemma}[$\lambda_0,\lambda_1$ approximation]\label{lem:truncation_combined}
Let $h_a,h_b\ge 0$ be a self-adjoint linear operator with purely discrete spectrum. 
Suppose $h_a$ and $h_b$ are $(E,\sigma)$-equivalent and the following conditions hold:
\begin{align}
&\gap(h_a)\ge g \ > 0, \nonumber \\
&2({\lambda_1(h_a)} +1) \  \leq \  E,\quad \nonumber\\
&\epsilon, E^{-1}  \ < \ 1 . \nonumber
\end{align}
If $\epsilon \in [0,c]$ for a sufficiently small universal constant $c$
and
\begin{align*}
    \sigma =  O \left(\frac{\epsilon g}{E^{1.5}} \right),
\end{align*} then
\begin{align*}
|\lambda_0(h_a) - \lambda_0(h_b)|, \ 
|\lambda_1(h_a) - \lambda_1(h_b)| \leq \epsilon . 
\end{align*}

\end{lemma}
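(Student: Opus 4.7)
The plan is to apply the Courant-Fischer (min-max) characterization of $\lambda_0$ and $\lambda_1$, transferring trial vectors between the two operators via the isomorphism $U$ guaranteed by $(E,\sigma)$-equivalence. Since $\lambda_0(h_a) \le \lambda_1(h_a) \le E/2 - 1 \le E$, both of the first two orthonormal eigenstates $\psi^a_0, \psi^a_1$ of $h_a$ admit $\sigma$-truncations $\widetilde{\psi}^a_0, \widetilde{\psi}^a_1 \in \widetilde{\mathcal D}_a$; mapping these by $U$ into $\widetilde{\mathcal D}_b \subseteq \mathcal D(h_b)$ yields trial vectors for $h_b$ whose Rayleigh quotients differ from $h_a[\widetilde{\psi}^a_i]$ by at most $\sigma$. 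For $\lambda_0$, the one-dimensional variational principle immediately gives $\lambda_0(h_b) \le h_b[U\widetilde{\psi}^a_0] \le h_a[\widetilde{\psi}^a_0] + \sigma \le \lambda_0(h_a) + 2\sigma$; the reverse inequality follows symmetrically, after checking $\lambda_0(h_b) \le E$ so a truncation on the $h_b$ side exists (which the forward bound itself guarantees).

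For $\lambda_1$, I would use the two-dimensional trial subspace $W := \Span(U\widetilde{\psi}^a_0, U\widetilde{\psi}^a_1) \subseteq \widetilde{\mathcal D}_b$. This is genuinely two-dimensional for small $\sigma$ because the Gram matrix of $\widetilde{\psi}^a_0, \widetilde{\psi}^a_1$ differs from $I$ by entries of size $O(\sigma)$ coming from $\langle \widetilde{\psi}^a_0 | \widetilde{\psi}^a_1\rangle = O(\sigma)$. By the min-max theorem and $(E,\sigma)$-equivalence, $\lambda_1(h_b) \le \max_{\widetilde{\phi} \in \Span(\widetilde{\psi}^a_0,\widetilde{\psi}^a_1),\,\|\widetilde{\phi}\|=1} h_a[\widetilde{\phi}] + \sigma$, so the task reduces to maximizing $h_a[\widetilde{\phi}]$ on a two-dimensional subspace of $\widetilde{\mathcal D}_a$. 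Expanding $\widetilde{\phi} = \alpha \widetilde{\psi}^a_0 + \beta \widetilde{\psi}^a_1$, the diagonal contribution $|\alpha|^2 h_a[\widetilde{\psi}^a_0] + |\beta|^2 h_a[\widetilde{\psi}^a_1]$ is at most $\lambda_1(h_a) + O(\sigma)$ using the energy truncation together with the norm constraint $|\alpha|^2 + |\beta|^2 \le 1 + O(\sigma)$ that follows from $\|\widetilde{\phi}\|=1$ and the near-orthonormality of the two truncations.

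The main technical challenge is controlling the off-diagonal matrix element $\langle \widetilde{\psi}^a_0 | h_a \widetilde{\psi}^a_1\rangle$. Writing $\widetilde{\psi}^a_i = \psi^a_i + \delta_i$ with $\|\delta_i\| \le \sigma$, it expands into four pieces: $\langle \psi^a_0 | h_a \psi^a_1\rangle$ vanishes by orthogonality of eigenstates; the two mixed terms $\langle \delta_0 | h_a \psi^a_1\rangle$ and $\langle \psi^a_0 | h_a \delta_1\rangle$ are bounded by $\lambda_i \sigma \le E\sigma$ using the eigenvalue equation; and $\langle \delta_0 | h_a \delta_1\rangle$ is bounded by $\|\sqrt{h_a}\delta_0\|\,\|\sqrt{h_a}\delta_1\|$ via the Cauchy-Schwarz inequality for the positive quadratic form of $h_a\ge 0$. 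Each factor satisfies $\|\sqrt{h_a}\delta_i\|^2 = h_a[\widetilde{\psi}^a_i] - 2\lambda_i \Re\langle \widetilde{\psi}^a_i | \psi^a_i\rangle + \lambda_i = O(E\sigma)$ after substituting the energy and norm truncation bounds, so $|\langle \widetilde{\psi}^a_0 | h_a \widetilde{\psi}^a_1\rangle| = O(E\sigma)$. A sharper perturbation analysis, diagonalizing the $2 \times 2$ form matrix and exploiting the gap $g$, converts this off-diagonal of order $E\sigma$ into a second-order eigenvalue shift of order $(E\sigma)^2/g$, which matches the hypothesized scaling $\sigma = O(\epsilon g/E^{1.5})$ and produces error at most $\epsilon$. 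The reverse direction for $\lambda_1$ follows by swapping the roles of $h_a$ and $h_b$ after bootstrapping $\lambda_1(h_b) \le E$ from the forward bound. The hardest step is this off-diagonal control; the Cauchy-Schwarz trick for $\sqrt{h_a}$ is what lets the argument go through despite $h_a$ being unbounded, and the gap $g$ enters precisely because shifting away from the diagonal spectrum $\{\lambda_0, \lambda_1\}$ is second order in the perturbation size divided by $g$.
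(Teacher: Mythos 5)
Your forward direction is sound and genuinely different from the paper's: the paper (Lemmas~\ref{lem:ground_energy} and \ref{lem:second_approx}) never forms the full $2\times 2$ form matrix; instead it decomposes the mapped trial vector against the ground state $\mu_0$ of $h_b$ and uses the fact that $\widetilde{\alpha}_0:=U^{-1}\widetilde{\mu}_0$ is $O(\sqrt{\sigma/g})$-close to $\psi_0$, so it never needs to control $\langle \widetilde{\psi}_0|h\,\widetilde{\psi}_1\rangle$ at all. Your route (Gram-matrix-corrected two-dimensional min--max, with the off-diagonal element of size $O(E\sigma)$ absorbed as a second-order shift $O((E\sigma)^2/\gap)$) does check out quantitatively in that direction, since the relevant gap there is $\gap(h_a)\ge g$ and $(E\sigma)^2/g = O(\epsilon^2 g/E)\le O(\epsilon)$.

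The genuine gap is the reverse inequality $\lambda_1(h_a)\le \lambda_1(h_b)+\epsilon$, which you dispatch by ``swapping the roles of $h_a$ and $h_b$.'' The swapped argument needs the same second-order step, but now the denominator is the separation of the diagonal entries of the swapped matrix, i.e.\ essentially $\gap(h_b)$ --- and no lower bound on $\gap(h_b)$ is available at that point (the hypotheses only give $\gap(h_a)\ge g$, and bounding $\gap(h_b)$ from below is tantamount to the very inequality you are proving, so the step is circular). Without it, your bound degrades to $\lambda_1(h_a)\le \lambda_1(h_b)+O(\sigma)+O(E\sigma)$, and $E\sigma = O(\epsilon g/\sqrt{E})$ can be as large as $\Theta(\epsilon\sqrt{E})\gg\epsilon$ when $g=\Theta(E)$, so the conclusion does not follow. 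The paper's proof is deliberately asymmetric here: it pulls the truncated eigenvectors $\widetilde{\mu}_0,\widetilde{\mu}_1$ of $h_b$ back to the $a$-side, uses $\gap(h_a)$ to show $U^{-1}\widetilde{\mu}_0$ is within $2\sqrt{2\sigma/g}$ of $\psi_0$, concludes that $|\langle \psi_0|U^{-1}\widetilde{\mu}_1\rangle|\le 2\sigma+2\sqrt{2\sigma/g}$ because $\mu_1\perp\mu_0$, and hence that $h_a[U^{-1}\widetilde{\mu}_1]\ge (1-|t'|^2)\lambda_1(h_a)$, which gives the reverse bound using only the gap of $h_a$. To repair your proof you need an argument of this type (or some other mechanism that avoids invoking $\gap(h_b)$) for the reverse direction; the rest of your outline can stand.
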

\begin{proof}
    See Appendix~\ref{sec:truncation}. The proof is by elementary linear algebra.
\end{proof}

\subsection{Truncation in position}
We explain how we apply the idea of truncation to Schr\"odinger operators of our interest.
We assume that $V$ is of a bowl shape: low around the center (the condition $B^2_1\subset V^{-1}_a$), 
and high outside the radius $r$ (the condition $V^{-1}_b \subset B^2_{r}$).
This assumption enables Markov's inequality, which is employed to show that a wavefunction with a low energy has a low weight outside the ball $B^2_{r}$,
due to the high potential.
We apply this position-based Markov's inequality to truncate the Schr\"odinger operators on $\rn$, $B:=B^2_{r+1}$, and $\tn$.
These truncations in fact $(E,\epsilon)$-equivalent, showing that the first two eigenvalues of the three Hamiltonians
close to each other.

\begin{lemma}\label{lem:pos_based_equivalences}
Suppose $h_1, h_2, h_3$ are Schr\"odinger operators on $\rn, B :=B^2_{r+1}, \tn$, respectively, and $h_2$ is under the Dirichlet boundary condition
\begin{align*}
h_1&:=-\Delta_{\mathbb R^n} + V_1 \in S_{\mathbb R^n}, \\
h_2&:=- \Delta_B^D + V_2 \in S_{B}^D, \\
h_3 &:= - \Delta_{\mathbb T^n} + V_3 \in S_{\mathbb T^n} , 
\end{align*}
where $2(r+1)< L$.
Furthermore, assume that $\epsilon<1$ and
\begin{align*}
\begin{cases}
    V_1(x)=V_2(x)=V_1(x) \ \ \ \ &\text{if} \ \ \|x\|_2\le r+1, \\
    V_1(x),V_2(x),V_3(x) \ge b \ \ \ \ &\text{if} \ \ \|x\|_2\ge r.
\end{cases}    
\end{align*}

If $b= \Omega( E^7/\epsilon^6)$, then $h_i$ and $h_j$ are $(E,\epsilon)$-equivalent for any $i,j\in \{1,2,3\}.$
\end{lemma}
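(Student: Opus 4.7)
The plan is to exhibit a common truncated domain $\widetilde{\mathcal D}$ consisting of (equivalence classes of) functions supported in the ball $B^2_{r+1}$, and show that it serves as an $(E,\epsilon)$-truncated domain for each of $h_1, h_2, h_3$ simultaneously. Under the natural identifications --- extension by zero from $B^2_{r+1}$ into $\mathbb R^n$ and $\mathbb T^n$ (the latter requiring $2(r+1)<L$ so the ball fits inside a fundamental domain of $\mathbb T^n = \mathbb R^n/L\mathbb Z^n$), and trivial inclusion into $B$ --- these three realizations of $\widetilde{\mathcal D}$ are unitarily isomorphic. Moreover, because $V_1=V_2=V_3$ on $B^2_{r+1}$ and the Laplacian is a local operator, the energy functional agrees \emph{exactly} on $\widetilde{\mathcal D}$ under the isomorphism, so the condition $|h_i[\widetilde\psi]-h_j[U\widetilde\psi]|\le\epsilon$ is satisfied with slack $0$. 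This reduces the problem to verifying that $\widetilde{\mathcal D}$ is an $(E,\epsilon)$-truncated domain of each $h_i$.

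First I would fix the truncation map. Take a smooth radial cutoff $\chi(x):=\rho(\|x\|_2)$, where $\rho:\mathbb R\to[0,1]$ equals $1$ on $(-\infty,r]$ and $0$ on $[r+1,\infty)$, built from $\Cut_{r,1}$ in Definition~\ref{def:smooth_functions}. Since $\rho$ is constant near $0$, $\chi$ is smooth on all of $\mathbb R^n$. Standard calculus gives $\|\nabla\chi\|_\infty=O(1)$ and $\|\Delta\chi\|_\infty=O(n)$ (the $n$ coming from the radial Laplacian term $(n-1)\rho'/\|x\|$, which is bounded since $r\ge 1$). Given a normalized $\psi\in\mathcal D(h_i)$ with $h_i[\psi]\le E$, I define its truncation to be $\widetilde\psi:=\chi\psi/\|\chi\psi\|$. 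The cutoff $\chi$ vanishes in a neighborhood of $\partial B^2_{r+1}$, so $\chi\psi$ lies in $\mathcal D(h_i)\cap\widetilde{\mathcal D}$ in all three settings: it vanishes on $\partial B$ (Dirichlet) and extends smoothly and periodically across the torus.

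The key analytic ingredient is a Markov inequality in position space. Since $-\Delta\ge 0$, we have $\int V_i|\psi|^2\le h_i[\psi]\le E$, and combined with $V_i\ge b$ on $\{\|x\|_2\ge r\}$ this yields the tail bound
\[
\int_{\|x\|_2\ge r}|\psi|^2\,\mathrm dx\;\le\;\frac{E}{b}.
\]
From here, the norm condition $\|\widetilde\psi-\psi\|=O(\sqrt{E/b})$ is immediate (together with $\|\chi\psi\|^2\ge 1-E/b$). For the energy condition, I would use the integration-by-parts identity
\[
\int|\nabla(\chi\psi)|^2\;=\;\int\chi^2|\nabla\psi|^2\;-\;\int\chi(\Delta\chi)|\psi|^2,
\]
and bound the second term by $\|\chi\Delta\chi\|_\infty\int_{\|x\|_2\ge r}|\psi|^2$, since $\chi\Delta\chi$ is supported in the annulus $\{r\le\|x\|_2\le r+1\}$. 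The potential contribution obeys $\int V_i\chi^2|\psi|^2\le\int V_i|\psi|^2$ pointwise, so after dividing by $\|\chi\psi\|^2$ the excess $h_i[\widetilde\psi]-h_i[\psi]$ is a polynomial expression in $E$, $n$, and $E/b$, and becomes at most $\epsilon$ once $b$ exceeds the polynomial in $E,1/\epsilon$ stated in the hypothesis.

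The main obstacle will be the bookkeeping: tracking the annular constants (the $O(n)$ from $\Delta\chi$, the ratio $h_i[\psi]/\|\chi\psi\|^2$ after renormalization, and the $\sqrt{\cdot}$ from the norm condition) carefully enough to certify both the norm and the energy conditions at the $\epsilon$-level for the stated threshold $b=\Omega(E^7/\epsilon^6)$. A secondary subtlety is confirming that the same truncation $\widetilde\psi$ --- viewed across $\mathbb R^n$, $B$, and $\mathbb T^n$ --- lies in the domain of all three self-adjoint realizations, which reduces to checking vanishing on $\partial B$ and smoothness across the torus boundary, both of which are guaranteed by the choice of $\chi$ and by $2(r+1)<L$. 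Once the truncation and domain compatibility are verified, pairwise $(E,\epsilon)$-equivalence for all $i,j\in\{1,2,3\}$ follows from the exact energy match on $\widetilde{\mathcal D}$.
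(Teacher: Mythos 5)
Your high-level plan is the same as the paper's: take the space of $L^2$ functions supported in $B=B^2_{r+1}$ as a common truncated domain, identify its three copies (in $L^2(\rn)$, $L^2(B)$, $L^2(\tn)$) by extension by zero, which is where $2(r+1)<L$ enters, observe that the energies then agree exactly because the potentials coincide on $B$ and the Laplacian is local, and reduce the lemma to showing that this subspace is an $(E,\epsilon)$-truncated domain of each $h_i$ via the position-space Markov bound $\int_{\|x\|\ge r}|\psi|^2\le E/b$. Where you genuinely diverge is in how the energy condition of the truncation is verified. The paper (Lemma~\ref{lem:position_trunc}) first applies a spectral cutoff $\psi\mapsto\psi^\downarrow$ onto eigenvalues $\le E/\sigma^2$ and only then multiplies by the spatial cutoff; that spectral step is what makes the truncated vector smooth (hence manifestly in all three operator domains) and supplies the bound $\|h\psi^\downarrow\|\le E/\sigma^2$ needed for its Cauchy--Schwarz estimate of $\Re\langle f^2\psi^\downarrow|h\psi^\downarrow\rangle$. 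You dispense with the spectral cutoff and bound $\langle \chi\psi|h|\chi\psi\rangle$ directly through the localization identity and the pointwise inequalities $\chi^2\le 1$, $V\ge 0$, so you never need $\|h\psi\|$; this is a more elementary route, and it does go through.

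Two points need repair before it is airtight. First, your assertion that $\chi\psi$ ``extends smoothly'' is unjustified: $\psi$ is an arbitrary vector of $\mathcal D(h_i)$, not an eigenfunction, so it need not be smooth. What you actually need is $\psi\in H^2_{\mathrm{loc}}$ (local elliptic regularity, available since $V$ is smooth hence locally bounded), which makes $\chi\psi$ an $H^2$ function compactly supported in the open ball and therefore a member of all three self-adjoint domains; you also implicitly use the form identity $\langle\psi|h\psi\rangle=\int|\nabla\psi|^2+\int V|\psi|^2$ for every $\psi\in\mathcal D(h)$ on $\rn$, which holds but deserves a sentence. (The paper sidesteps both issues because $\psi^\downarrow$ is a finite sum of smooth eigenfunctions.) Second, bounding the localization error by $\|\chi\Delta\chi\|_\infty\int_{r\le\|x\|_2\le r+1}|\psi|^2$ introduces a factor $O(n)$ from the radial term $(n-1)\rho'/\|x\|$, so your excess energy is $O\bigl((E^2+nE)/b\bigr)$, and the stated threshold $b=\Omega(E^7/\epsilon^6)$ does not by itself absorb the $n$ (the lemma places no relation between $n$ and $E$; the paper's own estimate involves only $\|\nabla f\|_\infty$, a universal constant). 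This is minor and fixable: either estimate the cross term $2\Re\int\chi\bar\psi\,\nabla\chi\cdot\nabla\psi$ by Cauchy--Schwarz, which yields an $n$-free excess of order $E/\sqrt{b}\le\epsilon$, or use the form of the identity featuring $|\nabla\chi|^2$ rather than $\chi\Delta\chi$. With those adjustments your argument is a correct, and somewhat leaner, substitute for the paper's proof of the key sub-lemma.
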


\begin{proof}
    See Section~\ref{sec:truncation_in_position_appendix}.
    The proof is by elementary calculus.
\end{proof}

\subsection{Truncation in frequency and discretization error}
We can similarly apply Markov's inequality in the frequency domain.
The kinetic term on the torus penalizes high-frequency components. 
Therefore, 
a low-energy state should have
low weights
on the high frequency components.  
Hence, the low-frequency spaces are natural truncated domains for the torus and the qubit Schr\"doinger operators.

The discretization error arises in this step, where we upper bound it by a quantifiable notion of smoothness $\log_\partial$, that we explain shortly in the next subsection.
Intuitively, 
the following lemma states that 
we need more qubits per spatial dimension 
as $L$ gets bigger, $V_\tn$ gets less smooth, and the target error gets smaller.

\begin{lemma}[Equivalence of torus and qubit Hamiltonians]\label{lem:torus_qubit_equiv}
   Let $V_\tn:\mathbb T^n \rightarrow \mathbb R$ be a smooth potential on the torus, and
   \begin{align*}
       V_Q:=\sum_{y \in \mathcal N^n}|y\rangle V_\tn \big(\frac{yL}{N}\big)\langle y |.
   \end{align*}
Suppose 
\begin{align*}
h_{\mathbb T^n} & : = -\Delta_{\mathbb T^n} +V_\tn \qquad \qquad  \qquad   \text{on \ }\mathbb T^n, \\
h_{Q}  & : = K_{Q} +V_{Q}  \qquad \qquad \qquad \qquad \text{on \ }n\times\log N \ \text{qubits,}
\end{align*}
where $V_\tn(\mathbb T^n)\subset [0, V_{\max}] $ and $\log_\partial (V_\tn, \mathbb T^n, m) \le \log p(m)$ for some polynomial $p$.

    For $\epsilon<1$, we have that $h_{\mathbb T^n}$ and $h_Q$ are $(E,\epsilon)$-equivalent, if
    \begin{align*}      N\ge \Omega\left( \frac{L^2  (p(2n))^2  (E+V_{\max})^{3/2} }{\epsilon }\right).
    \end{align*}
\end{lemma}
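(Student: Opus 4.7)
The plan is to construct $(E,\epsilon)$-truncated domains $\widetilde{\mathcal D}_\tn$ and $\widetilde{\mathcal D}_Q$ together with a unitary $U:\widetilde{\mathcal D}_\tn\to \widetilde{\mathcal D}_Q$ that preserves energy up to $\epsilon$. The natural choice is $\widetilde{\mathcal D}_Q:=\mathcal H_Q$ (the full $N^n$-dimensional qubit space) and $\widetilde{\mathcal D}_\tn := \Span\{e_k(x) := L^{-n/2} e^{i 2\pi k\cdot x/L} : k\in\nn\}$, with $U:e_k\mapsto U_{\mathcal F}|k\rangle$. Then $U$ is unitary by construction; in position space it sends $\widetilde\psi$ to the vector $\{(L/N)^{n/2}\widetilde\psi(yL/N)\}_{y\in\nn}$ of grid samples up to normalization. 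What remains is to verify the truncation property of $\widetilde{\mathcal D}_\tn$ and to compare the energies $h_\tn[\widetilde\psi]$ and $h_Q[U\widetilde\psi]$ on the low-frequency subspace.

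The truncation property of $\widetilde{\mathcal D}_Q$ is trivial. For $\widetilde{\mathcal D}_\tn$, fix a normalized $\psi\in \mathcal D(h_\tn)$ with $h_\tn[\psi]\le E$. Since $V_\tn\ge 0$, the Fourier expansion of $\psi$ satisfies $\sum_k (4\pi^2|k|^2/L^2)|\widehat\psi_k|^2 \le E$, and Markov's inequality in frequency gives
\[
\eta:=\sum_{k\notin \nn} |\widehat\psi_k|^2 \ \le\ \frac{EL^2}{\pi^2 N^2}.
\]
Setting $\widetilde\psi := P_\nn\psi/\|P_\nn\psi\|$ yields $\|\widetilde\psi-\psi\|=O(\sqrt\eta)$. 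Projection onto low frequencies can only decrease the kinetic contribution, while $\|V_\tn\|_\infty\le V_{\max}$ together with Cauchy--Schwarz on the cross terms controls the potential change by $O(V_{\max}\sqrt\eta)$. Thus $h_\tn[\widetilde\psi]\le h_\tn[\psi]+O((E+V_{\max})\sqrt\eta)$, so $\widetilde{\mathcal D}_\tn$ is an $(E,\epsilon)$-truncated domain provided $N\ge \Omega(L\sqrt{E(E+V_{\max})}/\epsilon)$.

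On the low-frequency subspace the kinetic parts match exactly, because both $-\Delta_\tn e_k$ and $K_Q U_{\mathcal F}|k\rangle$ equal $(4\pi^2|k|^2/L^2)$ times the respective vector. The potential matrix elements, however, differ by aliasing. A direct Fourier computation gives
\[
\langle k'|U_{\mathcal F}^\dagger V_Q U_{\mathcal F}|k\rangle - \langle e_{k'}|V_\tn|e_k\rangle \ =\ L^{-n/2}\sum_{m\in \mathbb Z^n\setminus\{0\}}\widehat V_{(k'-k)+mN},
\]
since sampling $V_\tn$ on the grid periodizes its Fourier coefficients modulo $N$. The main obstacle is to bound the resulting quadratic form $\sum_{k,k'}\bar c_{k'} c_k \cdot (\text{aliasing})_{k,k'}$ uniformly over unit-norm coefficient vectors $c$.

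This is where the smoothness hypothesis $\log_\partial(V_\tn,\tn,m)\le\log p(m)$ enters. Integration by parts $l$ times in the coordinate in which $|j|$ is largest yields the decay $|\widehat V_j|\le L^{n/2}(Lp(l)/2\pi|j|_\infty)^{l}$. For every aliased index $(k'-k)+mN$ with $m\ne 0$ some coordinate has magnitude at least $N-(N-2)=2$, and typically much larger, so choosing $l$ polynomial in $n$ makes the aliasing tail summable. Combining this with the crude inflation $\|c\|_1\le N^{n/2}\|c\|_2$ bounds the full aliasing error by a polynomial expression in $Lp(2n)/N$, and balancing this against $\epsilon$ together with the earlier Markov constraint yields the stated bound $N\ge \Omega(L^2 p(2n)^2(E+V_{\max})^{3/2}/\epsilon)$. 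The delicate part is precisely this balance: the $\ell^2\to\ell^\infty$ inflation must be absorbed by the Fourier decay, which is what forces the $L^2$ and $(E+V_{\max})^{3/2}$ factors rather than $L$ and $E+V_{\max}$.
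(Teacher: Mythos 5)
There is a genuine gap, and it lies exactly at the point you flag as "the delicate part." Your truncated domains are the full aliasing-free band $\widetilde{\mathcal D}_\tn=\Span\{e_k: k\in\nn\}$ on the torus and the full qubit space $\mathcal H_Q$. But the definition of $(E,\epsilon)$-equivalence requires $|h_\tn[\widetilde\psi]-h_Q[U\widetilde\psi]|\le\epsilon$ for \emph{every} $\widetilde\psi$ in the truncated domain, not just for truncations of low-energy states, and on the full band this fails. If $k,k'\in\nn$ sit at opposite edges of the band (say $k_1'-k_1=N-1$ in one coordinate), the aliased index $(k'-k)+mN$ with $m\ne 0$ can have $\|\cdot\|_\infty$ as small as $1$, so the aliasing term is a \emph{low-order} Fourier coefficient such as $\widehat V_{\pm 1}$, which is generically $\Theta(V_{\max})$ and is not suppressed by any smoothness of $V$. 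Concretely, for $V(x)=\tfrac{V_{\max}}{2}(1+\cos(2\pi x_1/L))$ and $\widetilde\psi=(e_{k_1}+e_{k_2})/\sqrt2$ with $k_1=(N/2-1,0,\dots,0)$, $k_2=(-N/2,0,\dots,0)$, the torus potential cross term is $\widehat V_{k_1-k_2}\approx 0$ while the sampled (qubit) cross term picks up $\widehat V_{-1}=V_{\max}/4$, so the energy discrepancy is $\Theta(V_{\max})$, not $O(\epsilon)$. Your claim that every aliased index has a coordinate of magnitude at least $2$ "and typically much larger" is therefore not enough: magnitude $1$ or $2$ gives no decay, and the smoothness-based bound $|\widehat V_j|\lesssim (Lp(l)/2\pi\|j\|_\infty)^l$ only helps once $\|j\|_\infty$ exceeds roughly $Lp(2n)$.

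The missing idea (and what the paper does) is to truncate \emph{both} sides in frequency to a ball $\|k\|_2\le K$ with $K$ chosen from the Markov/energy argument, and to take $N\ge 3K+1$ (the paper uses $N\ge 4K$). Then for any $k,k'$ in the truncated domains one has $\|k'-k\|_\infty\le 2K$, so every aliased index $(k'-k)+mN$, $m\ne 0$, has $\|\cdot\|_\infty>K$, i.e., lies in the genuinely high-frequency regime where the integration-by-parts decay (with $l=2n$ derivatives, using $\log_\partial(V_\tn,\tn,2n)\le\log p(2n)$) makes the tail $\sum_{\|j\|_\infty>K}|\widehat V_j|$ summable and $\le\epsilon$; this is what produces the condition $N\gtrsim K\gtrsim (Lp(2n))^2$ alongside $K\gtrsim L(E+V_{\max})^{3/2}/\epsilon$. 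The qubit side then needs its own (easy, but not vacuous) frequency-truncation lemma so that the two truncated domains have equal dimension and the map $\omega_k\mapsto|p_k\rangle$ is a unitary between them. Your Markov-in-frequency step and the exact matching of kinetic terms are fine and agree with the paper; it is the choice of domains that must be repaired, after which the $N^{n/2}$ inflation you worry about is also unnecessary (a bound of the quadratic form by $\max_{\|k\|,\|k'\|\le K}$ of the single-coefficient discrepancy, or by $\sum_j$ of the aliasing tail, suffices).
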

\begin{proof}
    See Section~\ref{sec:truncation_in_frequency_appendix}. 
\end{proof}

\subsection{Smoothness factor}
By considering the discrepancy between a Fourier base state on $\tn$ and its discretization on $\mathcal N ^n$,
we naturally arrive at the following definition of a quantifiable smoothness.

\begin{definition}
    Given a smooth function $g:A \rightarrow \mathbb R$ on a subset $A$ of $\mathbb R^n$ or $\mathbb T^n$, we define the \textbf{smootheness factor} of $g$ to be
    \begin{align*}
       \log_\partial (g,A,m) =\max_{l\in[m], j\in[n], x\in A} \frac{\left( \log |\partial^{l}_j g(x)| \right)^+}{l} ,
    \end{align*}
where we use the notation $(a)^+ = \max(a,0)$ for $a \in \mathbb R$, and assume that the support of $\partial_j g$ 
is compact for all $j$. 

\end{definition}
Intuitively, the smoothness factor indicates how ``rough" a function is. 
In this paper, it is desired that $V_\tn$ has a logarithmic smoothness factor, so that there exists some polynomial $p(m)$ such that 
\begin{align*}
\log_\partial (g,A,m) \le \log p(m),    
\end{align*}
as needed to apply Lemma~\ref{lem:torus_qubit_equiv}.

Having a logarithmic smoothness factor is closed under summation, scalar multiplication, and most importantly, composition.

\begin{lemma}
The following statements are true: 
\begin{enumerate}
       \item  \emph{(Scalar multiplication and summation)}   Let $f,f_k:\mathbb  T^n \rightarrow\mathbb R$ be smooth for $k\in[r]$, and $c>0$. 
    Then we have
    \begin{align*}
       \log_\partial( cf, \mathbb T^n, m)
      &\  \le \   (\log c)^+ \ + \     \log_\partial( f, \mathbb T^n, m) ,
      \\
       \log_\partial(\sum_{i\in[r]} f_i, \mathbb T^n, m)
      &\  \le \   \log r \ + \  \sum_{i\in[r]}     \log_\partial( f_i, \mathbb T^n, m) .
    \end{align*}
    \item   \emph{(Composition)}  Let $g:\mathbb  T^n \rightarrow\mathbb R$ and $f:g(\mathbb T^n) \rightarrow \mathbb R$  be smooth, where $g(\mathbb T^n)$ is the range of $g$. 
    Then we have
    \begin{align*}
       \log_\partial( f\circ g, \mathbb T^n, m)
      \  \le \   2 \log m \ + \  \log_\partial(f, g(\mathbb T^n), m)   \ +  \ 
       \log_\partial  (g, \mathbb T^n, m). 
    \end{align*}
    \item \emph{(The four smooth functions)} We have
    \begin{align*}
    \log_\partial (\Bump, \mathbb R,m) &\le \log O(m^4), \\    
     \log_\partial (\Sat_{\alpha,\beta}, \mathbb R,m), \log_\partial (\Cut_{\alpha,\beta}, \mathbb R,m),\log_\partial (\Bar_{\beta}, \mathbb R,m)     &\le  \log O(1/\beta + m^4). \end{align*}
\end{enumerate}
 \end{lemma}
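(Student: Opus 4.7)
The plan is to handle the three parts in order, since Part 2 relies on Faà di Bruno's formula and Part 3 relies on both Parts 1 and 2 together with a careful analysis of $\Bump$.

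For Part 1, I would compute derivatives directly. For scalar multiplication, $\partial^l_j(cf) = c\,\partial^l_j f$, so
\[
\frac{(\log|c\,\partial^l_j f|)^+}{l} \le \frac{(\log c)^+}{l} + \frac{(\log|\partial^l_j f|)^+}{l} \le (\log c)^+ + \log_\partial(f,\mathbb{T}^n,m),
\]
using $l \ge 1$. For the sum, $|\partial^l_j \sum_{i\in[r]} f_i| \le r\max_i|\partial^l_j f_i|$, which gives $\frac{\log r}{l}$ plus $\max_i \log_\partial(f_i) \le \sum_i \log_\partial(f_i)$ after taking max over $l,j,x$.

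For Part 2, I would invoke Faà di Bruno's formula:
\[
\partial^l(f\circ g)(x) \;=\; \sum_{\pi} \frac{l!}{\prod_j k_j!\,(j!)^{k_j}}\; f^{(|\pi|)}(g(x))\prod_{j=1}^l\bigl(\partial^j g(x)\bigr)^{k_j},
\]
where the sum ranges over tuples $(k_1,\ldots,k_l)$ with $\sum_j j k_j = l$ and $|\pi| = \sum_j k_j \le l$. Using $|f^{(|\pi|)}(g(x))| \le \exp(|\pi|\log_\partial f) \le \exp(l\log_\partial f)$ and $|\partial^j g(x)|^{k_j} \le \exp(j k_j \log_\partial g)$, each term is bounded by $\frac{l!}{\prod k_j!(j!)^{k_j}} \exp(l\log_\partial f + l \log_\partial g)$, since $\sum_j j k_j = l$. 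The number of partitions is at most $2^l$, and each multinomial coefficient is at most $l! \le l^l$, so the total is at most $(2l)^l \exp(l\log_\partial f + l\log_\partial g)$. Dividing by $l$ and using $l \le m$ yields the $2\log m$ factor.

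For Part 3, the main content is a sharp bound on derivatives of $\Bump$. Writing $\Bump(x) = e^{u(x)}$ on $(0,1)$ with $u(x) = 1/(x(x-1))$, I would show by induction that $u^{(k)}(x)$ is a rational function with denominator $(x(x-1))^{k+1}$ and numerator bounded polynomially, so that $|\Bump^{(l)}(x)| = |e^u \cdot B_l(u',\ldots,u^{(l)})|$ stays bounded on $[0,1]$ because $e^u$ decays faster than any power of $1/(x(x-1))$ near the endpoints. A careful count through Faà di Bruno applied to $e^u$ then gives $|\Bump^{(l)}(x)| \le C^l \cdot l^{3l}$ uniformly, which for $l\le m$ is $\le O(m^{4l})$. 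For the remaining three functions, I would use their integral definitions: $\Cut_{\alpha,\beta}'(x)= -\Bump\bigl((x-\alpha)/\beta\bigr)/Z$ with $Z=\Theta(\beta)$, so $\partial^l \Cut_{\alpha,\beta}$ scales as $\beta^{-l}\Bump^{(l-1)}$, and Part 1 combined with the chain rule yields the $\log(1/\beta + m^4)$ bound. Identical reasoning handles $\Sat_{\alpha,\beta}$ and $\Bar_\beta$ since they are antiderivatives of $\Cut_{\alpha,\beta}$ and $1-\Cut_{0,\epsilon}$ respectively.

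The main obstacle will be the $\Bump$ bound: tracking the polynomial-in-$1/x,\,1/(x-1)$ growth of the derivatives of $u$ and then showing that the exponential decay of $e^u$ near the endpoints compensates sharply enough to give the $l^{3l}$-type bound with an explicit constant. The other pieces are largely mechanical applications of Parts 1 and 2.
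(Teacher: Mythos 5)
Your Parts 1 and 2 follow essentially the same route as the paper: the same direct manipulation of the maxima for scalar multiples and sums (using nonnegativity of $\log_\partial$), and the same Fa\`a di Bruno argument for composition, differing only in bookkeeping (you count $\le 2^l$ partitions and a multinomial factor $\le l!$, the paper uses a cruder $l^l\cdot l^l$ count; both give the $2\log m$ term, modulo the trivial $m=1$ edge case which is handled by noting the sum then has a single term). Part 3 is where you genuinely diverge: the paper factorizes $\Bump(x)=f_0(x)f_0(1-x)$ with $f_0(x)=e^{-1/x}$, applies the Leibniz rule, and runs a clean induction $f_0^{(j)}(x)=p_j(1/x)e^{-1/x}$ with the recursion $p_{j+1}(t)=t^2\bigl(p_j(t)-p_j'(t)\bigr)$, so each factor has a single one-sided singularity and the optimization $\sup_{t>0}t^k e^{-t}=(k/e)^k$ is applied once per factor with explicit degree and coefficient bounds; you instead write $\Bump=e^u$ with $u(x)=1/(x(x-1))$ and push Fa\`a di Bruno through the exponential, tracking both endpoint singularities simultaneously via the denominators $(x(x-1))^{k+1}$ and absorbing them with the same $\sup_s s^N e^{-s}$ trick. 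Your route works, but the bookkeeping is heavier (partition count, Bell-type coefficients, and numerator coefficient growth all multiply), and a careful count lands nearer $C^l l^{4l}$ than your claimed $C^l l^{3l}$ --- this is harmless, since any bound of the form $C^l l^{4l}$ still yields $\log_\partial(\Bump,\mathbb R,m)\le \log O(m^4)$, which is all the lemma needs. For $\Cut_{\alpha,\beta}$, $\Sat_{\alpha,\beta}$, $\Bar_\beta$ you do exactly what the paper does (differentiate the integral definitions, pick up $\beta^{-l}$ and a normalization $\Theta(\beta)$), and like the paper you actually obtain $\log\bigl(O(m^4)/\beta\bigr)$, which is how the stated $\log O(1/\beta+m^4)$ should be read; the slack is the paper's, not yours. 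The only real exposure in your proposal is that the $\Bump$ derivative estimate --- the one piece of genuine technical content --- is deferred rather than executed, but the plan for it is standard and sound; the paper's factorization trick is worth knowing as the lighter-weight alternative.
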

\begin{proof}
    See Section~\ref{sec:smoothness}. The proof is by elementary calculus.
\end{proof}

\subsection{The drum problem}

Our approach is to run ASCOA for a Schr\"odinger operator on $\rn$
with the potential defined by a barrier function that penalizes going outside of $\Omega$, namely
\begin{align*}
    V =\frac{3E}{\mu^6} \sum_{i\in[m]} \Bar_\epsilon( a_i \cdot x -b_i ).
\end{align*}

To analyze, 
we define a slightly expanded region of $\Omega$,
\begin{align*}
    \Omega'  := (1+\epsilon)\Omega  =  \{  x\in \mathbb R^n | \ a_i \cdot x \le b_i(1+\epsilon) , \ \forall i \in [m]\},
\end{align*}
and two associated Schr\"odinger operators
\begin{align} \label{eq:DL_hamiltonians}
      h &:=-\Delta + V \\
    h^D &:=-\Delta^D_{\Omega' } + V|_{\Omega'}, 
  \end{align}
  where $h$ is a Schr\"odinger operator on $\rn$, and $h^D$ is its Dirichlet restriction on $\Omega'$ .

We show that $|\lambda_0(-\Delta^D_\Omega)- \lambda_0(h)|$ 
is small, and hence we only need to compute $\lambda_0(h)$
by running the ASCOA.
We first show 
\begin{align*}
|\lambda_0(h^D) - \lambda_0(-\Delta^D_\Omega)| \le O(\epsilon_0)    
\end{align*}
by using basic facts about Schr\"odinger operators
(Lemma~\ref{lem:drum_hD-DeltaD}).
Then, we show that
\begin{align*}
|\lambda_0(h) - \lambda_0(h^D)| \le O(\epsilon_0)    
\end{align*}
by showing that $h$ and $h^D$ are $O(\epsilon_0)$-equivalent
if $\mu$ is polynomially large
(Lemma~\ref{lem:DL_equivalence}).

Therefore, it is enough to compute $\lambda_0(h)$ through Algorithm~\ref{alg:ASCOA}. 
The parameters in the algorithm for $V$ are polynomial in $n,m,1/\epsilon,R$.

\section{Proofs of Theorem~\ref{thm:main} and Theorem~\ref{thm:DL_main} }\label{sec:proof_of_main_thm}
We first provide some facts on the eigenvalues of the Schr\"odinger operators and prove the two theorems.
\subsection{Monotonicity relations on eigenvalues}
The following lemma characterizes the $i$-th eigenvalue of $h$.

\begin{theorem}[Min-Max] \cite[Theorem 4.10]{teschl09}
     Let $h$ be self-adjoint with a purely discrete spectrum. 
     Let $\lambda_0(h)\le \lambda_1(h)\le \cdots$ 
     be the eigenvalues of $h$.
     
Then, we have
\begin{align*}
    \lambda_n(h) \ =\  \min_{\psi_0,\dots,\psi_{n-1}} 
    \max \{\ \langle \psi|h\psi\rangle \ |\ \|\psi\|=1, \psi\perp\psi_i \ \forall i\in\{0,\dots, n-1\}\}.
    \end{align*}
\end{theorem}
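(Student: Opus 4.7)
The plan is to prove both inequalities constituting the claimed equality via the classical min–max argument, relying on the completeness of eigenvectors supplied by Lemma~\ref{lem:completeness_of_eigenvecs}. Throughout I would fix an orthonormal eigenbasis $\{\phi_k\}_{k\ge 0}\subset \mathcal D(h)$ with $h\phi_k = \lambda_k(h)\phi_k$ indexed so that $\lambda_0(h) \le \lambda_1(h) \le \cdots$, which exists because $h$ is self-adjoint with purely discrete spectrum.

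For one direction, I would exhibit the explicit choice $\psi_i := \phi_i$ for $i\in\{0,\ldots,n-1\}$. Expanding any competing normalized $\psi\in \mathcal D(h)$ orthogonal to all $\phi_i$ in the eigenbasis reduces its Rayleigh quotient to a convex combination of eigenvalues $\lambda_k(h)$ with $k\ge n$; this quantity is bounded below by $\lambda_n(h)$ and attains the value $\lambda_n(h)$ at $\psi = \phi_n$, pinning down the extremal value of the inner optimization for this particular choice of excluded vectors.

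For the opposite direction I would use a dimension count valid for any candidate $\psi_0,\ldots,\psi_{n-1}$: the $(n{+}1)$-dimensional subspace $W := \Span(\phi_0,\ldots,\phi_n)$ must meet the codimension-at-most-$n$ subspace $\Span(\psi_0,\ldots,\psi_{n-1})^\perp$ in at least a line, producing a unit witness $\psi^\ast\in W$ orthogonal to every $\psi_i$. Expanding $\psi^\ast$ in $\phi_0,\ldots,\phi_n$ shows that its Rayleigh quotient is a convex combination of $\lambda_0(h),\ldots,\lambda_n(h)$, hence at most $\lambda_n(h)$. Combining this with the explicit choice of the previous paragraph delivers the equality.

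The main obstacle is the operator-domain subtlety intrinsic to unbounded self-adjoint $h$: a generic vector in the orthogonal complement of $\{\psi_i\}$ need not belong to $\mathcal D(h)$, so the inner extremum in the statement must be read as ranging over $\mathcal D(h)\cap \Span(\psi_0,\ldots,\psi_{n-1})^\perp$ and the resulting Rayleigh quotient may be unbounded for poor choices of excluded vectors. This causes no real trouble for the argument above, because every test vector I construct lies in a finite span of eigenvectors and is therefore automatically in $\mathcal D(h)$; the remainder of the proof reduces to elementary linear algebra on finite-dimensional subspaces, together with the termwise evaluation of $\langle\psi|h\psi\rangle$ afforded by the spectral theorem.
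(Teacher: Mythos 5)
Your two steps are the standard Courant--Fischer moves, but as assembled they do not prove the statement as printed, and the mismatch is a genuine logical gap, not a cosmetic one. In your first paragraph, showing that every admissible $\psi\perp\phi_0,\dots,\phi_{n-1}$ has Rayleigh quotient $\ge\lambda_n(h)$, with equality attained at $\phi_n$, pins down the \emph{minimum} of the inner optimization, not its maximum; in your second paragraph, a feasible witness with Rayleigh quotient $\le\lambda_n(h)$ upper-bounds an inner \emph{minimum}, whereas for an inner maximum it yields nothing. Worse, the identity as printed cannot be proved for the operators relevant here: when the purely discrete spectrum accumulates at $+\infty$ (as for $-\Delta+V$ with confining $V$), then for \emph{every} choice of $\psi_0,\dots,\psi_{n-1}$ the orthogonal complement meets $\Span(\phi_m,\dots,\phi_{m+n})$ for arbitrarily large $m$, so the inner maximum is $+\infty$ and the right-hand side equals $+\infty$, not $\lambda_n(h)$.

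The printed statement is a garbled quotation of Teschl's Theorem 4.10 (the paper cites it and gives no proof, so there is no internal argument to compare against). The correct forms are the max--min characterization $\lambda_n(h)=\max_{\psi_0,\dots,\psi_{n-1}}\min\{\langle\psi|h\psi\rangle \,:\, \|\psi\|=1,\ \psi\in\mathcal D(h),\ \psi\perp\psi_i\ \forall i\}$, or equivalently the min--max over $(n+1)$-dimensional subspaces $V\subset\mathcal D(h)$ of $\max_{\psi\in V,\,\|\psi\|=1}\langle\psi|h\psi\rangle$. Your two paragraphs, with the inner ``max'' read as ``min'' (or repackaged into the subspace form, taking $V=\Span(\phi_0,\dots,\phi_n)$ for one direction and intersecting an arbitrary $V$ with the closed span of $\{\phi_k\}_{k\ge n}$ for the other), are a correct proof of that corrected statement, and the corrected statement is the one the paper actually relies on later (potential comparison, domain monotonicity). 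Your treatment of the domain subtlety is fine: the constructed witnesses lie in finite spans of eigenvectors, and for a general $\psi\in\mathcal D(h)$ the expansion $\langle\psi|h\psi\rangle=\sum_k\lambda_k|c_k|^2$ provided by the spectral theorem is all that is needed.
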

As a corollary, we have a comparison of each eigenvalue of two Schr\"odinger operators, 
if their potentials are comparable~\cite[Corollary~4.13]{teschl09}. 
\begin{lemma}[Potential comparison]\label{lem:potential_comparison}
Let $h_1,h_2 \in S^D_\Omega$ with potentials $V_1,V_2:\Omega\rightarrow \mathbb R_{\ge0}$ such that
$V_1(x)\ge V_2(x)$ for all $x\in\Omega$.
Then, $\lambda_k(h_1)\ge \lambda_k(h_2)$ for all $k\in\mathbb Z_{\ge 0}$.    
\end{lemma}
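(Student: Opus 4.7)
The plan is to derive the lemma as a direct consequence of the Min-Max characterization stated immediately above. The key observation is that the two Schr\"odinger operators $h_1 = -\Delta^D_\Omega + V_1$ and $h_2 = -\Delta^D_\Omega + V_2$ share the same kinetic part, so the pointwise inequality $V_1 \ge V_2$ on $\Omega$ will translate into a pointwise (in the Rayleigh-quotient sense) inequality between the two quadratic forms.

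First I would fix an arbitrary normalized $\psi$ in a common form-domain of $h_1$ and $h_2$ (any $\psi \in C^\infty_0(\Omega)$ works, and more generally $\mathcal D(h_1) = \mathcal D(h_2)$ for these operators on a bounded domain with Dirichlet boundary conditions). Since $V_1$ and $V_2$ are real multiplication operators and $V_1 - V_2 \ge 0$ pointwise,
\begin{align*}
\langle \psi | h_1 \psi\rangle - \langle \psi | h_2 \psi\rangle = \int_\Omega (V_1(x) - V_2(x)) |\psi(x)|^2 \, dx \ge 0,
\end{align*}
so $\langle \psi | h_1 \psi\rangle \ge \langle \psi | h_2 \psi\rangle$ for every admissible $\psi$.

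Next I would plug this into the Min-Max theorem. For every choice of test vectors $\psi_0,\dots,\psi_{k-1}$, the pointwise inequality gives
\begin{align*}
\max\{\langle \psi | h_1 \psi\rangle : \|\psi\|=1,\ \psi \perp \psi_i\} \ge \max\{\langle \psi | h_2 \psi\rangle : \|\psi\|=1,\ \psi \perp \psi_i\},
\end{align*}
because any candidate $\psi$ feasible on the left-hand side has a larger (or equal) value of the inner product than on the right-hand side, and the constraint set is the same. Taking the infimum over $\psi_0,\dots,\psi_{k-1}$ on both sides preserves the inequality and yields $\lambda_k(h_1) \ge \lambda_k(h_2)$.

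I do not expect any substantive obstacle: the only mildly technical point is to ensure the Min-Max formula is applied over the same class of admissible trial functions for both operators, which is immediate because both $h_1$ and $h_2$ are self-adjoint Schr\"odinger operators in $S^D_\Omega$ with smooth non-negative potentials, and the quadratic-form domain is the standard Sobolev space $H^1_0(\Omega)$ in either case. If desired, a form-domain version of Min-Max can replace the operator-domain version to avoid worrying about any regularity mismatch between $V_1$ and $V_2$.
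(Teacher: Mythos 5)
Your proof is correct and follows exactly the route the paper intends: the paper states this lemma as an immediate corollary of the Min-Max theorem (citing Teschl, Corollary 4.13) rather than writing out the argument, and your write-up is precisely that standard argument — pointwise form comparison $\langle\psi|h_1\psi\rangle \ge \langle\psi|h_2\psi\rangle$ fed through the min-max characterization, with the form-domain remark correctly handling the only technical point.
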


The following Lemma is a standard result in the spectral theory~\cite{Welsh1972}
that says, for any $k$, the $k$-th eigenvalue decreases as the domain increases. 
Intuitively, one could view the Dirichlet Schr\"odinger operators on $\Omega$ as
a Schr\"odinger operator on $\rn$ with a potential $V$ such that 
$V(x)=\infty$ for $x\notin \Omega$.
\begin{lemma}[Domain monotonicity]\label{lem:domain_monotonicity}
Suppose for bounded $\Omega_1\subseteq\Omega_2 \subset \rn$,
\begin{align*}
h &:=-\Delta_{\rn} + V \in S_{\rn},
\\
h_1 &:=-\Delta^D_{\Omega_1} +V\big|_{\Omega_1} \in S^D_{\Omega_1},
\\
h_2 &:=-\Delta^D_{\Omega_2} +V\big|_{\Omega_2} \in S^D_{\Omega_2}.
\end{align*}
Then, for any $k\in\mathbb Z_{\ge 0} $, we have 
\begin{align*}
    \lambda_k(h)\le \lambda_k(h_1) \le \lambda_k(h_2).
\end{align*}
If $V =0$, then we have the special case of 
the domain monotonicity on the Dirichlet Laplacian:
\begin{align*}
    \lambda_k(-\Delta^D_{\Omega_1}) \le
    \lambda_k(-\Delta^D_{\Omega_2}) . 
\end{align*}
 \end{lemma}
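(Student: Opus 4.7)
The plan is to combine the Min-Max theorem cited just above the statement with zero-extension of Dirichlet eigenfunctions from a smaller domain into a larger one; zero extension preserves both $\int|\nabla\psi|^2$ and $\int V|\psi|^2$, so Rayleigh quotients are unchanged, and both inequalities in the chain reduce to the general principle that enlarging the set of admissible test functions only lowers the min-max value.

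For the first inequality $\lambda_k(h)\le\lambda_k(h_1)$, I would pick the first $k+1$ Dirichlet eigenfunctions $\psi_0,\dots,\psi_k\in H^1_0(\Omega_1)$ of $h_1$ and zero-extend each to $\widetilde\psi_i$ on $\rn$. Since the support sits inside the bounded set $\bar\Omega_1$ and $V$ is locally bounded (smooth, by the standing hypothesis), the extensions lie in the form domain of $h$; the weak gradient exists across $\partial\Omega_1$ because the trace vanishes, and orthonormality is preserved. Then $h[\widetilde\psi_i]=h_1[\psi_i]$, so applying the Min-Max theorem to $\operatorname{span}\{\widetilde\psi_0,\dots,\widetilde\psi_k\}$ gives $\lambda_k(h)\le\max_i h[\widetilde\psi_i]=\lambda_k(h_1)$. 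The $V\equiv 0$ special case is identical, giving $\lambda_k(-\Delta_{\rn})\le\lambda_k(-\Delta^D_{\Omega_1})$ (the former is just $0$ in any case since $-\Delta_{\rn}$ has purely continuous spectrum starting at $0$, but the variational comparison on the form domain is the substantive content for use elsewhere).

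The main obstacle is the second inequality $\lambda_k(h_1)\le\lambda_k(h_2)$. Running the same zero-extension, now from $H^1_0(\Omega_1)\hookrightarrow H^1_0(\Omega_2)$ (which is permitted precisely because $\Omega_1\subseteq\Omega_2$), produces a $(k+1)$-dim test subspace for $h_2$ with matching Rayleigh quotients, so Min-Max yields $\lambda_k(h_2)\le\lambda_k(h_1)$ — the standard Dirichlet domain monotonicity (``bigger drum, lower frequency''), which is the reverse of the sign written in the lemma. The same reversed sign also emerges from Lemma~\ref{lem:potential_comparison}, by modelling $h_1$ as the $M\to\infty$ limit of $h_2$ with an added nonnegative barrier $M\cdot\mathbf 1_{\Omega_2\setminus\Omega_1}$. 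I therefore read the stated chain as containing a transcription swap — either the containment should read $\Omega_2\subseteq\Omega_1$, or the two bounded-domain eigenvalues should be exchanged to $\lambda_k(h)\le\lambda_k(h_2)\le\lambda_k(h_1)$ — and under either correction the Min-Max plus zero-extension argument used for the first inequality supplies the full chain verbatim, with no further technical input beyond the standard regularity check that the extended functions sit in the relevant form domain.
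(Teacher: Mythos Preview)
Your approach is the standard one and is correct; the paper does not actually prove this lemma but simply cites it as a classical fact from spectral theory, so your Min--Max plus zero-extension argument is exactly what is intended.

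You are also right that the displayed chain contains a transcription error: with $\Omega_1\subseteq\Omega_2$ the correct Dirichlet inequality is $\lambda_k(h_2)\le\lambda_k(h_1)$, not the other way around. Every subsequent use of the lemma in the paper is consistent with the corrected direction (smaller domain $\Rightarrow$ larger Dirichlet eigenvalue): e.g.\ in Lemma~\ref{lem:lambda1_upperbound} the paper deduces $\lambda_1(h)\le\lambda_1(h^D_{\Omega'})$ from $\Omega'\subset\Omega$, and in Step~1 of the proof of Theorem~\ref{thm:main} it deduces $\lambda_1(H_B(t))\le\lambda_1(H_{B'}(t))$ from $B'\subset B$. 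So your reading of the intended statement --- either swap the containment to $\Omega_2\subseteq\Omega_1$ or swap the last two eigenvalues --- is the right fix, and under that fix your argument goes through verbatim.
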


The following lemma is what we want to use in the analysis of the main theorem.
\begin{lemma}\label{lem:lambda1_upperbound}
    Let $h =t h_1 +(1-t)h_2$ for $t\in[0,1]$ be a convex combination 
    of two Dirichlet Schr\"odinger operators 
    $h_1,h_2$ on $\Omega\subset\mathbb  R^n$ 
    with smooth potentials $V_1,V_2:\Omega \rightarrow\mathbb R_{\ge0}$.
    Suppose $V_2(x)  =0$ for all $x\in \Omega'\subset \Omega$, where $\Omega'$ is compact. 

    Then, 
    \begin{align*}
        \lambda_1(h) \le \lambda_1(-\Delta^D_{\Omega'}) +\max_{x\in\Omega'} V_1(x).
    \end{align*}
\end{lemma}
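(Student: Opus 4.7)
The plan is to apply the min-max (Courant--Fischer) characterization of $\lambda_1$ using a two-dimensional trial subspace built from the low eigenfunctions of the smaller-domain Dirichlet Laplacian $-\Delta^D_{\Omega'}$. Concretely, let $\phi_0,\phi_1 \in L^2(\Omega')$ be $L^2$-orthonormal Dirichlet eigenfunctions of $-\Delta^D_{\Omega'}$ with eigenvalues $\lambda_0(-\Delta^D_{\Omega'})\le \lambda_1(-\Delta^D_{\Omega'})$. Since $\Omega'$ is compact with $\Omega'\subset \Omega$ and each $\phi_k$ vanishes on $\partial\Omega'$, the extensions by zero (still denoted $\phi_k$) lie in $H^1_0(\Omega)$, hence in the form domain of $h$, and they remain orthonormal in $L^2(\Omega)$. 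The two-dimensional version of Min-Max (Theorem~4.10 of~\cite{teschl09}) then yields
\begin{equation*}
\lambda_1(h) \;\le\; \max_{\substack{\psi\in W \\ \|\psi\|=1}} \langle \psi|h\psi\rangle, \qquad W:=\mathrm{span}(\phi_0,\phi_1).
\end{equation*}

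Next I would bound this maximum directly. Writing $\psi = a\phi_0 + b\phi_1$ with $|a|^2+|b|^2=1$, I split
\begin{equation*}
\langle\psi|h\psi\rangle \;=\; \langle\psi|-\Delta\psi\rangle \;+\; t\langle\psi|V_1\psi\rangle \;+\; (1-t)\langle\psi|V_2\psi\rangle.
\end{equation*}
The last term vanishes because $\psi$ is supported in $\overline{\Omega'}$ and $V_2\equiv 0$ on $\Omega'$. The kinetic term reduces to $\langle\psi|-\Delta^D_{\Omega'}\psi\rangle = |a|^2\lambda_0(-\Delta^D_{\Omega'}) + |b|^2\lambda_1(-\Delta^D_{\Omega'}) \le \lambda_1(-\Delta^D_{\Omega'})$, using that extending by zero preserves the Dirichlet quadratic form. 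The remaining term satisfies $t\langle\psi|V_1\psi\rangle \le t\,\max_{x\in\Omega'} V_1(x) \le \max_{x\in\Omega'} V_1(x)$ since $V_1\ge 0$ and $t\in[0,1]$. Adding these bounds gives the claim.

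The only real subtlety is the extension-by-zero step: I need to know that a Dirichlet eigenfunction on $\Omega'$, extended by zero, lies in the form domain of $h$ on $\Omega$ and preserves both the $L^2$ norm and the Dirichlet energy. This is standard (extension by zero is a bounded map $H^1_0(\Omega')\hookrightarrow H^1_0(\Omega)$ for $\Omega'\subset\Omega$, and the gradient of the extension is the extension by zero of the gradient), but it is the one point worth stating explicitly. Everything else is a direct computation on the two-dimensional subspace $W$.
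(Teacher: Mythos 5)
Your proof is correct, but it takes a different route from the paper's. The paper proves this lemma in two lines by chaining two previously stated facts: domain monotonicity (its Lemma on $\lambda_k(h)\le\lambda_k(h^D_{\Omega'})$, since one may view the Dirichlet restriction to $\Omega'\subset\Omega$ as raising every eigenvalue) and potential comparison ($\lambda_1(-\Delta^D_{\Omega'}+V_1|_{\Omega'})\le\lambda_1(-\Delta^D_{\Omega'})+\max_{\Omega'}V_1$). You instead unpack both of these in the special case at hand: you build the two-dimensional trial space spanned by the zero-extensions of the first two Dirichlet eigenfunctions of $-\Delta^D_{\Omega'}$, apply the min-max principle to $h$ on $\Omega$, and bound the quadratic form termwise. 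What your version buys is self-containedness, and it handles the coefficient $t$ cleanly ($t\langle\psi|V_1\psi\rangle\le\max_{\Omega'}V_1$), whereas the paper implicitly also uses $tV_1\le V_1$ when it passes to $-\Delta^D_{\Omega'}+V_1|_{\Omega'}$; what the paper's version buys is brevity and avoidance of the one technical point you correctly flag, namely that the zero-extended eigenfunctions lie only in the form domain $H^1_0(\Omega)$ (not the operator domain), so the min-max principle must be used in its quadratic-form formulation (or the trial functions approximated in $C_0^\infty$), a step hidden inside the standard domain-monotonicity lemma the paper cites. With that form-domain caveat made explicit, your argument is complete.
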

\begin{proof}
We have $h = -\Delta^D_\Omega + tV_1+(1-t)V_2$.
    By the domain monotonicity (Lemma~\ref{lem:domain_monotonicity}), we have $\lambda_1(h)\le \lambda_1(h^D_{\Omega'}) $, 
    where $h^D_{\Omega'}:=-\Delta^D_{\Omega'} +V_1\big|_{\Omega'} $
    is a Dirichlet Schr\"odinger operator 
    on $\Omega'$ with the potential given as the restriction of $V_1$.

    By Lemma~\ref{lem:potential_comparison}, we have
    \begin{align*}
     \lambda_1(h^D_{\Omega'}) \le \lambda_1(-\Delta^D_{\Omega'}  +\max_{x\in \Omega'} V_1(x)) =\lambda_1(-\Delta^D_{\Omega'}) +\max_{x\in \Omega'} V_1(x),  
    \end{align*}
    since $V_1(y)\le \max_{x\in \Omega'}V(x)$ at all $y\in\Omega'$.
    Therefore, we have 
    \begin{align*}
        \lambda_1(h)\le \lambda_1(h^D_{\Omega'}) \le \lambda_1(-\Delta^D_{\Omega'}) +\max_{x\in\Omega'} V_1(x).
    \end{align*}
     
\end{proof}

\subsection{Proof of Theorem~\ref{thm:main}}
\textbf{Theorem 1} \ (restate).      
    Algorithm~\ref{alg:ASCOA} computes the lowest eigenvalue of $h = -\Delta + V$ within error $\epsilon_0$ with probability $\ge 2/3$ in polynomial time in $n, 1/\epsilon_0,L,c$,
    if $V$
    satisfies the 
    following conditions, where 
    $B^q_d:=\{x\in\rn \ | \  \|x\|_q\le d\}$ and 
     $V_E^{-1}:= V^{-1}([0,E])$:
    \begin{enumerate}
        \item  \emph{Purely discrete spectrum and convexity:} we have $V(x)\rightarrow\infty$ as $\|x\|\rightarrow\infty$ and $V$ is convex on $B^2_{r+1}$.
        \item  \emph{``Bowl-shaped'' in $B^\infty_{L/2}$:} there exist $a\le b\le c$, and $1 \le r < L/2$ such that
        \begin{align*}
            B^2_1\subset V^{-1}_{a } \subset V^{-1}_{b}\subset B^2_r \subset B^2_{r+1} \subset V^{-1}_{c}\subset V^{-1}_{c+1} \subset B^\infty_{L/2}
        \end{align*}  
         where
        \begin{enumerate}[(i)]
            \item $b = \Theta(E_0^7/\sigma^6)$
            \item $E_0=10(n(n+3)\pi^2 +a) = \Theta(n^2+a)$
            \item $\sigma = \Theta(\epsilon_0/r^4 E_0^{1.5})$
            \item $\epsilon_0<1 < r,E_0,c$. 
        \end{enumerate}
        \item \emph{Bounded high-order derivatives:} the potential $V$ is smooth and 
        \begin{align*}
            \max \left\{ { \frac{1}{l}\log|\partial^l_j V(x)| }  \  \bigg|  \ {l\in[m], j\in[n], x\in [-L/2,L/2]^n }\right\}  \le \log p(m)
        \end{align*} for some polynomial $p$.
        \item \emph{Access to $V$:} there exists a circuit that computes $V(x)$ for all $ x \in B^\infty_{L/2} $ with a negligible error in $n, 1/\epsilon_0,  L, c$ in polynomial time in the same parameters.
    \end{enumerate}

\begin{proof}[Proof of Theorem~\ref{thm:main}]
We define functions $V_{init},V_{final}, W_t:B^\infty_{L/2} \rightarrow\mathbb R_{\ge 0}$ for $t \in [0,T]$ to be
\begin{align*}
    V_{init}(x) &:=       b   \sum_{i\in[n]} \left( 1 - \Cut_{\frac{1}{4\sqrt n},\frac{1}{4\sqrt n}}\!\left(\left| x_i \right| \right) \right)     ,  \\
    V_{final}(x) &:=   \Sat_{c,1}\!\circ\!  V(x),  \\
    W_t(x) &: =  \left(1- \frac{t}{T}\right) V_{init}(x) +  \frac{t}{T}V_{final} (x).  
\end{align*}

Also define time-dependent Hamiltonians on $B:=B^2_{r+1}, \mathbb T^n :=\mathbb R^n/L\mathbb Z^n$,
and $\mathcal N^n$
\begin{align*}
    H_{B}(t)  &: = -\Delta^D_B + W_{B,t} , \\
    H_{\mathbb T^n}(t)  &: = -\Delta_{\mathbb T^n} + W_{\mathbb T^n,t}, \\
    H_{Q}(t) &:= \  \  K_Q  + \sum_{y \in \mathcal N} |y\rangle W_t\big({yL}/{N}\big)  \langle y |,   
\end{align*}
where $W_{B,t}:= W_t|_B$ is the restriction of $W_t$ on $B$,
and
$ W_{\mathbb T^n,t}(x) := W_t(x')$ for $x '\in[-L/2,L/2]^n$, such that $  x =x'\mod L\mathbb Z^n$. 
Note that $H_Q(t)$ is defined consistently as in Algorithm~\ref{alg:ASCOA}.

We use the notation $g:={3\pi^2}/{(2(r+1))^2}$ for the rest of the proof.

\paragraph{Step 1:} show $\gap(H_{B}(t))\ge g$ and $2(\lambda_1(H_B(t))+1)\le E_0$ for all $t\in[0,T]$.  
At an arbitrary time $t \in [0,T]$,
we know from Theorem \ref{thm:fundamental_gap} that
\begin{align*} 
    \gap(H_B(t)) 
    \ge \  \frac{3\pi^2}{(2(r+1))^2}=g, 
\end{align*}
since the diameter of $B$ is $2(r+1).$

To upper bound $\lambda_1(H_B(t))$, we use the fact that $B$ includes the box $B':=B^\infty_{1/2\sqrt{n}}\subset B^2_1$,
where the potential is at most $a$.
We define
\begin{align*}
H_{B'}(t):= -\Delta^D_{B'} + W_{t}|_{B'}    .
\end{align*}

By the domain monotonicity (Lemma~\ref{lem:domain_monotonicity}), we have 
$\lambda_1(H_B(t))\le \lambda_1(H_{B'}(t))$.
We have $ W_{t}|_{B'}(x)  \le a$ for $x \in B'$, by the condition $B^2_1 \subset V^{-1}_{a}$.
Therefore, by Lemma~\ref{lem:lambda1_upperbound},
\begin{align*}
    \lambda_1(H_B (t)) 
    &\le  \lambda_1(H_{B'}(t)) \\
    &\le \lambda_1(-\Delta^D_{B'}) + a \\
    &=  n(n+3) \pi^2 + a\\
    &= E_0/10. 
\end{align*}

\paragraph{Step 2:} show $\gap(H_{\mathbb T^n}(t))\ge 0.99g$, and $2(\lambda_1(H_{\mathbb T^n}(t))+1)\le E_0$ for all $t\in[0,T]$. 
We first show that $H_B(t)$ and $H_{\tn}(t)$ are $(E_0,\sigma)$-equivalent, 
and then, transfer the large gap from $B$ to $\mathbb T^n$ via Lemma~\ref{lem:gap_approx}.

For all $x$ with $\|x\|_2\ge r$, 
we have $V_{init}(x) >b$
and, due to the condition $V^{-1}_b\subset B^2_r $, we have
\begin{align*}
    V_{final}(x) = \Sat(V(x))\ge \min(V(x), c) > b,
\end{align*}
where we denote $\Sat: = \Sat_{c,1}$.
Therefore, we have $W_t(x)\ge \min(V_{init}(x),V_{final}(x))> b $.
By Lemma~\ref{lem:pos_based_equivalences}, it follows that $H_{B}(t)$ and $H_{\mathbb T^n}(t)$ are $(E_0,\sigma)$-equivalent.

In addition, by Lemma~\ref{lem:second_approx}, we have that $\lambda_1(H_{\mathbb T^n}(t))  \le E_0/10 + O(\epsilon_0)  $.
Therefore, we have $2(\lambda_1(H_{\mathbb T^n}(t) +1)\le E_0$, and Lemma~\ref{lem:gap_approx} is applicable.

Since $\sigma = \Theta(\epsilon_0/r^4 E_0^{1.5}) =   O(g^2/E_0^{1.5}) $, 
by Lemma~\ref{lem:gap_approx}, we have 
\begin{align*}
    \gap(H_{\mathbb T^n}(t)) \ge \gap(H_{B}(t)) - 0.01g \ge  0.99g. \label{eq:main_torus_gap}
\end{align*}

\paragraph{Step 3:} show that $\gap (H_Q(t))\ge0.98g$ for all $t\in[0,T]$.
Again, we show that $H_{\tn}(t)$ and $H_{Q}(t)$ are $(E_0,\sigma)$-equivalent
and then apply Lemma~\ref{lem:gap_approx} to lower bound the gap.

To argue equivalence between $H_{\mathbb T^n }(t) $ and $H_{Q}(t)  $ via Lemma~\ref{lem:torus_qubit_equiv}, 
we need an upper bound on $\log_\partial(W_{\mathbb T^n,t},\mathbb T^n, m)$.
We first show that the function $W_{\mathbb T^n,t} $ is smooth at all $t\in[0,T]$, which amounts to 
showing smoothness of $V_{final}$ and $V_{init}$.
For $x$ with $\|x\|_\infty < L/2 $, $\Sat\circ V$ is smooth because it is a composition of two smooth functions.
For $x$ with $\|x\|_\infty = L/2 $, we have $V(x) \ge c +1 $ by 
the condition $V^{-1}_{c+1}\subset B^\infty_{L/2}$;
hence $\Sat(V(x))$ is constant, and all of its derivatives vanish. 
Also, $V_{init} $ is smooth, because each $\Cut (|x_i|) $ is smooth on the torus, where we denote $\Cut :=\Cut_{1/4\sqrt n,1/4\sqrt n}$.
Therefore, $W_{\mathbb T^n,t} $ is smooth on the torus at all $t$.

By Condition 2, we have $V_{final} (\mathbb T^n)\subset [ 0, c +1] $, and 
we have $V_{init}(\mathbb T^n) \subset [0,n b]  $ by definition.
Therefore, $W_{\mathbb T^n,t}(\mathbb T^n) \subset [0,O(c + nb )]. $

We bound the smoothness factor of $W_{\mathbb T^n,t }$ using Lemma~\ref{lem:smoothness_properties} as
\begin{align*}
    \log_\partial( W_{\mathbb T^n,t},\mathbb T^n, m)  
    &=
    \log_\partial( (1 -t/T) V_{init} 
    + (t/T)V_{final}, \mathbb T^n,m) \\
    & \le 
    \log 2 + 
    \log_{\partial}(V_{init}, \mathbb T^n, m)  + \log_{\partial}(V_{final} , \mathbb T^n, m).
\end{align*}
By Lemmas~\ref{lem:smoothness_properties} and \ref{lem:f_smoothness_factor}, we get
\begin{align*}
    \log_{\partial}(V_{init} , \mathbb T^n, m) 
    & \ \le \  \log n  +  \log b  +\log_\partial (\Cut(|x_i|) , \mathbb T^n, m)
    \\
  & \ = \
   \log n  +  \log  b   +\log_\partial (\Cut(|x_i|)  , \mathbb R  , m)
   \\
   &
   \ = \
    \log O (bn^{1.5}  m^4).
\end{align*}
Also, by Lemma~\ref{lem:smoothness_compo},
\begin{align*}
    \log_{\partial}(V_{final} , \mathbb T^n, m) 
    & \ \le \
    2 \log m + \log_\partial (\Sat, [0, c+1], m ) + \log_\partial (V, [-L/2,L/2]^n, m ) \\
    & \ \le \
    \log (\Theta(m^2 \cdot m^4\cdot p(m))).
\end{align*}
Combining the two smoothness factors gives
\begin{align*}
    \log_\partial( W_{\mathbb T^n,t},\mathbb T^n, m)  \ \le \ \log(\Theta(bn^{1.5} m^{10} p(m)) \qquad \forall t \in[0,T].
\end{align*}
Therefore, by Lemma~\ref{lem:torus_qubit_equiv}, encoding each spatial dimension with the
\begin{align*}
N \ge \Theta \left( \frac{L^2 \cdot (b^2 n^{11.5} p(2n))^2\cdot   (c+ nb + E_0^6/\sigma^7)^{1.5} }{\epsilon} \right)    
\end{align*}
dimensional quantum 
using $\log N$ qubits
gives that $H_{\mathbb T^n}(t) $ and $H_Q(t)$ are $(E_0,\sigma)$-equivalent.

We have established that $2(\lambda_1(H_\tn(t))+1)\le E_0 $ in Step 2.
Therefore, by Lemma~\ref{lem:gap_approx},
it follows that 
\begin{align*}
        \gap(H_{Q}(t))\ge 0.98g. 
\end{align*}

\paragraph{Step 4:} show $|\lambda_0(H_{\mathbb R^n}(T))-\lambda_0(H_{Q}(T))|\le 0.02\epsilon_0$. 
By Lemma~\ref{lem:pos_based_equivalences},
the condition $V^{-1}_b \subset B^2_r$ gives that
$H_{\mathbb R^n}(T)$ and $H_{\mathbb T^n}(T)$ 
are $(E_0,\sigma)$-equivalent.
Lemma~\ref{lem:ground_energy} implies that 
\begin{align*}
    |\lambda_0(H_{\mathbb R^n}(T) ) -\lambda_0(H_{\mathbb T^n}(T) )| \le \sigma \le 0.01\epsilon_0.
\end{align*}

By Step 3, $H_{\mathbb T^n }(T)$ and $H_Q(T)$ are $\sigma$-equivalent under our choice of parameters.
Lemma~\ref{lem:ground_energy} implies that 
\begin{align*}
    |\lambda_0(H_{\mathbb T^n}(T) ) -\lambda_0(H_{ Q}(T) )| \le 2\sigma \le 0.01\epsilon_0.
\end{align*}
Therefore, by the triangle inequality,
we have 
\begin{align*}
    |\lambda_0(H_{\mathbb R^n} (T)) - \lambda_0(H_Q(T))| \le 0.02 \epsilon_0
\end{align*}

\paragraph{Step 5:} finish the proof.

In the first step of the algorithm, each register measures the ground state with probability $1- O(1/n)$. Therefore $|\Psi_{init}\rangle$
is correctly prepared with probability $\Omega(1)$.

We showed in Step 3 that the spectral gap of $H_Q(t) $ is at least $0.98\cdot3\pi^2/(2(r+1))^2$. 
By Theorem~\ref{thm:adiabatic}, after evolving $|\Psi_{init}\rangle$ under the time-dependent Hamiltonian $H_Q(t)$ for time
\[
  T = O\!\left(
    \frac{\|H_Q(0) - H_Q(T)\|^{2}}{0.01 \cdot (\min_{t \in [0,T]} \gap(H_Q(t)))^3}
  \right) 
   \le  O\!
  \left(
    {r^6 ( c + nb)^2 }  
  \right),
\]
the final state $|\Psi_{final}\rangle$ is $0.01$-close to the ground state of $H_Q(T)$.
Therefore, an off-the-shelf
Hamiltonian simulation algorithm, such as \cite{KSB18},
simulates $H_Q$ in polynomial time.

The Hamiltonian simulation with error 0.01 results in a state that is 0.02-close
to the ground state of $H_Q(T)$, 
which we measure
in the final step of the Algorithm with probability $\ge 96\%$.
The energy we measure deviates from $\lambda_0(h)$ by 
at most $0.02\epsilon_0$ (Step 4 of this proof) plus the error from
the phase estimation, which can be made negligible.
\end{proof}

\subsection{Proof of Theorem~\ref{thm:DL_main}}

\begin{proof}[Proof of Theorem~\ref{thm:DL_main}]
We specify the parameters $a,b,c,E, \sigma,r, L$ in Theorem~\ref{thm:main}, and show that they are polynomials in $n,m, R,1/\epsilon$.
First,
we set 
\begin{align*}
a = 0, \quad r = 2R,\quad L= 3R, \quad E = \Theta(n^2)   , \qquad \sigma = \Theta(\epsilon_0/r^4E_0^{1.5}).
\end{align*}

Condition 2 of Theorem~\ref{thm:main} enforces that
\begin{align*}
    b = \Theta(E^7/\sigma^6) =   \Theta(E^{16} R^{24}/ \epsilon_0^6)  = \Theta(n^{32}R^{24}/\epsilon_0^6)
\end{align*}
and at the same time that $V^{-1}_b \subset B^2_r$,
which we satisfy 
by choosing a sufficiently small $\mu$ in \eqref{eq:DL-potential}.

Note that, for all $x\in\rn$ such that $\|x\|=R$,
there exists $j\in[m]$ such that $a_j\cdot x - b_j \ge 0$,
or equivalently $a_j\cdot 2x - b_j\ge b_j \ge 1$
Therefore, for all $x $ such that $\|x\| = 2R $, we have
\begin{align*}
    V(x) \ge \frac{E}{\mu^6}\Bar_\epsilon(1) = \Theta\left(\frac{n^2 }{\mu^6}\right),
\end{align*}
and for 
\begin{align}\label{eq:DL_sigma_first}
\mu = O \left(\frac{\epsilon_0} {n^{5}R^{4} }\right),    
\end{align}
 we have
$
    V^{-1}_b \subset B^2_r$.

On the other hand, $\mu$ should be also chosen
so that $|\lambda_0(-\Delta^D_\Omega) - \lambda_0(h)| \le O(\epsilon_0)$.
   By \eqref{eq:laplacian_nsquare}, we define the expanded region $\Omega$ with $\epsilon  =  O(\epsilon_0/n^2)$ to obtain
    \begin{align*}
        |\lambda_0(h^D)  - \lambda_0(-\Delta^D_\Omega)|  = \epsilon_0/3.
    \end{align*}
By Lemma~\ref{lem:ground_energy}, we need $(E,O(\epsilon_0))$-equivalence between $h$ and $h^D$ to obtain
\begin{align*}
        |\lambda_0(h)  - \lambda_0(h^D)|  =\epsilon_0/3,
    \end{align*}
where  $E=\Theta(\lambda_0(h)) \le \Theta(n^2)$.
By Lemma~\ref{lem:DL_equivalence},
setting
\begin{align}\label{eq:DL_sigma_second}
    \mu =  O\left(\frac{\epsilon_0}{E (mn^2)^{1/3}}\right) = O\left(\frac{\epsilon_0}{( m n^8)^{1/3}}\right)
\end{align}
gives $(E_0,O(\epsilon_0))$-equivalence between $h$ and $h^D$,
finally yielding
\begin{align*}
    |\lambda_0(-\Delta^D_\Omega) - \lambda_0(h) |  =  \epsilon_0/3.
\end{align*}
Our algorithm estimates $\lambda_0(-\Delta^D_\Omega)$ up to $\epsilon_0/3$ when it estimates $\lambda_0(h)$ up to $\epsilon_0/3$.

To satisfy both \eqref{eq:DL_sigma_first} and \eqref{eq:DL_sigma_second}, we set
\begin{align*}
\mu =  O\left(\frac{\epsilon_0}{n^5  m^{1/3} R^{4}}\right)   . 
\end{align*}

We are only left with $c$, whose upper bound we find by
\begin{align*}
    c \le \max_{x:\|x\|_\infty \le L} V(x) \le \max_{x:\|x\|_2 \le \sqrt n L} V(x) \le \sum_{j\in[m]} \frac{E}{\mu^6} \sqrt n L  = O\left( \frac{ELm\sqrt n}{\mu^6 }\right) 
\end{align*}
Therefore, each parameter in Theorem~\ref{thm:main} is polynomial in $n,m,R, 1/\epsilon_0$.

Also, since $V$ is a sum of $\Bar_\epsilon$, which has a logarithmic smoothness factor by Lemma~\ref{lem:f_smoothness_factor},
the algorithm requires only polynomial time and qubits in the parameters.

\end{proof}

\section{Discussion: Weyl convexity}\label{sec:discussion}

A notion of noncommutative convexity 
is observed in $h$.
The functional $|\psi\rangle \rightarrow\langle \psi|h\psi\rangle $
is convex under the unitaries of translation $T_a$
and modulation $M_b$
\[
T_a(\psi)(x) := \psi(x+a),
\qquad
 M_b(\psi)(x) := e^{i b\cdot x}\,\psi(x),
\]
and more generally, 
under the Weyl operators
\begin{align*}
   W_{a,b} : =  e^{\frac{i}{2}a \cdot b}  M_b T_a
\end{align*}
that are noncommuting.
This notion is analogous to the fact that
a classical convex function is convex under translation
of a fixed point.

We formalize the idea mentioned above that $h[\cdot]$ is convex under Weyl operators.
\begin{definition}[Weyl convexity]
A functional $f:L^2(\rn) \rightarrow \mathbb R$ is \textbf{Weyl-convex} if
for any fixed $\psi \in \mathcal D(f)$, the function
\begin{align*}
(a,b)\rightarrow f(W_{a,b}\psi)    
\end{align*}
is convex over $(a,b) \in \mathbb R^n \times \mathbb R^n$.
\end{definition}

\begin{lemma}
    Given a Schr\"odinger operator $h = -\Delta + V$ with convex $V:\rn \rightarrow\mathbb R$, 
    the energy functional $h[\psi]$ is Weyl-convex.
\end{lemma}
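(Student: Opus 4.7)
The plan is to evaluate $h[W(a,b)\psi]$ in closed form as a function of $(a,b)$ and verify convexity termwise. Since $W(a,b)$ is unitary, $\|W(a,b)\psi\|=\|\psi\|$, so the denominator of $h[\cdot]$ is constant in $(a,b)$ and it suffices to show that
\[
F(a,b) \;:=\; \langle W(a,b)\psi \,|\, (-\Delta)\,W(a,b)\psi\rangle \;+\; \langle W(a,b)\psi \,|\, V\,W(a,b)\psi\rangle
\]
is a convex function of $(a,b)\in\mathbb R^n\times\mathbb R^n$. I would handle the kinetic and potential pieces separately because the first is convex in $b$ and constant in $a$, while the second is convex in $a$ and constant in $b$, and the sum of two such convex functions is convex on $\mathbb R^{2n}$.

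For the potential piece, note that $|W(a,b)\psi(x)|^2 = |\psi(x+a)|^2$ because $M_b$ and the overall phase $e^{ia\cdot b/2}$ are pure phases. Changing variables $y = x+a$ gives
\[
\langle W(a,b)\psi \,|\, V\,W(a,b)\psi\rangle \;=\; \int_{\mathbb R^n} |\psi(y)|^2\, V(y-a)\,\mathrm dy,
\]
which is independent of $b$, and, since $V$ is convex and $a\mapsto y-a$ is affine, is a nonnegative-weighted integral of convex functions of $a$, hence convex in $a$.

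For the kinetic piece, I would use $\langle \phi|(-\Delta)\phi\rangle = \|\nabla\phi\|_{L^2}^2$ for $\phi = W(a,b)\psi$. A direct computation of $\nabla\bigl(e^{ia\cdot b/2}e^{ib\cdot x}\psi(x+a)\bigr)$ produces the factor $e^{ia\cdot b/2}e^{ib\cdot x}\bigl(\nabla\psi(x+a) + ib\,\psi(x+a)\bigr)$, so after taking squared modulus and changing variables $y = x+a$ I obtain
\[
\|\nabla W(a,b)\psi\|_{L^2}^2 \;=\; \|\nabla\psi\|_{L^2}^2 \;+\; 2\,b\cdot\! \int_{\mathbb R^n}\! \mathrm{Im}\bigl(\overline{\psi(y)}\,\nabla\psi(y)\bigr)\,\mathrm dy \;+\; |b|^2\,\|\psi\|_{L^2}^2.
\]
This is affine-plus-quadratic in $b$ with Hessian $2\|\psi\|_{L^2}^2\, I \succeq 0$, and independent of $a$, hence convex on $\mathbb R^{2n}$.

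I do not anticipate a serious obstacle; the only care needed is in the bookkeeping of the phases in $W(a,b)$ and in a domain remark, namely that the identities above first hold for $\psi$ in a dense core such as the Schwartz space $\mathcal S(\mathbb R^n)$ (where $h$ is essentially self-adjoint under the hypotheses of the paper) and then extend to $\mathcal D(h)$ by continuity of the quadratic form. Summing the two pieces and dividing by $\|\psi\|_{L^2}^2$ gives that $(a,b)\mapsto h[W(a,b)\psi]$ is convex, establishing Weyl-convexity.
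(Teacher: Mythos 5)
Your proposal is correct and follows essentially the same route as the paper: both decompose $h[W(a,b)\psi]$ into a kinetic part that is a convex quadratic in $b$ (with cross term $2b\cdot\langle\psi|{-i\nabla}\psi\rangle$, matching your $2b\cdot\int\mathrm{Im}(\overline{\psi}\nabla\psi)$) plus a potential part $\int V(y-a)|\psi(y)|^2\,\mathrm dy$ that is convex in $a$, the only difference being that the paper obtains the kinetic term via the conjugation identity $W_{a,b}^\dagger p^2 W_{a,b}=(p+b)^2$ while you compute the gradient explicitly. Your added remark about working on a dense core and extending by form continuity is a reasonable refinement the paper leaves implicit.
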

\begin{proof}
Let $p:=-i\nabla$, so $-\Delta=p^2$. The conjugations are well-known to be
\[
T_a^\dagger p\,T_a=p,\quad T_a^\dagger V(x)\,T_a=V(x-a),\qquad
M_b^\dagger p\,M_b=p+b,\quad M_b^\dagger V(x)\,M_b=V(x).
\]
Hence
\[
W_{a,b}^\dagger\,p^2\,W_{a,b}=(p+b)^2.
\]
Since $W_{a,b} $ is unitary, $\|W_{a,b}\psi\|=\|\psi\|$, and therefore
\begin{equation*}
h[W_{a,b} \psi]=\frac{\langle\psi|((p+b)^2 \psi\rangle    +   \langle T_a \psi |   V |T_a  \psi\rangle}{\langle\psi|\psi\rangle}.
\end{equation*}
Set
\begin{equation*}
\bar p:=\left( \frac{\langle\psi | p_j \psi\rangle}{\langle\psi| \psi\rangle}\right)_{j\in[n]} \in\mathbb{R}^n,\qquad
K:=-\frac{\langle\psi|\Delta\psi\rangle}{\langle\psi| \psi\rangle}<\infty.
\end{equation*}
Expanding gives
\begin{equation*}
h[W_{a,b}\psi]=\underbrace{\big(\|b\|^2+2\,b\cdot\bar p+K\big)}_{=:q(b)}
\;+\;\underbrace{\int_{\mathbb{R}^n} V(x-a)\,|\psi(x)|^2\ \text d x}_{=:G(a)}.
\end{equation*}
The function $q(b)$ has Hessian $2I_n\succeq 0$, hence is (strictly) convex in $b$.
For each fixed $x$, the map $a\mapsto V(x-a)$ is convex;
integrating convex functions preserves convexity, so $G(a)$ is convex in $a$.
Therefore $h[W(a,b)\psi]=q(b)+G(a)$ is jointly convex in $(a,b)$.    
\end{proof}

\section{Future directions}\label{sec:future}
\paragraph{Complexity theory.}
 Our current knowledge on the complexity of the Schr\"odinger operators is sparse. 
 As far as we are aware, \cite{ZLLW24} is the only work on the topic.
  They prove that the Schr\"odinger operator is StoqMA-hard for general smooth potential on a bounded Dirichlet domain,
  while simulating Schr\"odinger operators is BQP-hard.
    Is deciding the ground energy of a Schr\"odinger operator with convex potential BQP-complete? 
    What is the complexity of Schr\"odinger operators with the fermionic symmetry?
    What is the complexity of finding the ground energy of a molecule?
    What is the complexity of the convex drum problem?

\paragraph{Optimization of Weyl-convex objectives.}

Let us denote $\widehat x:=(\widehat x_1,\dots,\widehat x_n)$, where $\widehat x_i$ is the multiplicative operator associated with the coordinate function $x_i$ in $\rn$. Similarly, let us denote $\widehat p:=(\widehat p_1,\dots,\widehat p_n)$, where $\widehat p_i:=U_{\mathcal F}^\dagger \widehat x_i U_{\mathcal F}$ for the Fourier transformation unitary $U_{\mathcal F}$ on $L^2(\rn)$.

In this paper, we gave a rudimentary algorithm for computing
\begin{align*}
    \min_{\|\psi\| =1} \langle\psi| \  \|\widehat p\|^2 + C(\widehat x)  \ |\psi\rangle,
\end{align*}
for a convex $C$. A natural extension is to find an algorithm for 
\begin{align*}
    \min_{\|\psi\| =1} \langle\psi| \  C_1(\widehat p) + C_2(\widehat x)  \ |\psi\rangle,
\end{align*}
where $C_1,C_2$ are convex. A step further is to consider
\begin{align*}
    \min_{\|\psi\| =1} \langle\psi| \  C_3(\widehat p,\widehat x)  \ |\psi\rangle,
\end{align*}
for a Weyl-convex $C_3$.

A more ambitious goal is to construct a framework that parallels classical mathematical optimization.
For instance, we can aim to solve constraint problems
\begin{align*}
    \min_{\|\psi\| =1} & \langle\psi| \  C(\widehat p,\widehat x)  \ |\psi\rangle, \\
\text{subject to } &\langle\psi| \  D_i(\widehat p,\widehat x)  \ |\psi\rangle \le a_i \quad \forall i\in[m]. 
\end{align*}
that hopefully are as useful and versatile as linear and semidefinite programming.

\paragraph{Calculus of variations.}
The optimization problems we solve in this paper are 
instantiations of the calculus of variations~\cite{Kot14-CoV}.
We propose the calculus of variations as a venue for
exponential quantum speedups.

A typical problem is
\begin{align*}
    \text{minimize}\qquad  J[u]&:=\int_{x\in\Omega} L(x,u(x),\nabla u(x))\ \text d x \\
    \text{subject to}\qquad  0 & \  =\int_{x\in\Omega} C_i (x,u(x),\nabla u(x))\ \text d x \qquad \forall i \in[m], \\
    u&: \ \Omega \subset \rn \rightarrow \mathbb R,
\end{align*}
where $L,C_i: \rn \times \mathbb R\times \rn$.

The problem is a good candidate for an exponential quantum advantage,
since classical algorithms, such as finite element methods, require exponential time in $n$.
A quantum computer has an advantage in that 
the exponentially large data $u$ can be efficiently 
stored and manipulated in polynomially many qubits
in $n$.

The $n$-dimensional bubble problem~(the minimal surface problem) 
is a more concrete example, 
where we are to compute
\begin{align*}
    \text{minimize }\qquad  & \int_{x\in\Omega}
    \sqrt {1 + | \nabla u (x)|^2 } \ \text d x
    \\
    \text{subject to }\qquad & u: \Omega \subset \rn \rightarrow\mathbb R \\
  &  u(x)  = b(x) \qquad \forall x \in \partial\Omega,
\end{align*}
for some sunccintly described boundary data $b:\partial \Omega\rightarrow\mathbb R$.
Note that this problem is fundamentally different from the above Weyl-convex setting in that the objective is not sesquilinear anymore.
Still, we can find some notion of convexity in the objective,
since a soap bubble surface recovers from 
a displacement.

\paragraph{Dissipation vs. Adiabatic evolution.}
We observe that efficient quantum optimization algorithms 
for finding the ground energy of a quantum Hamiltonian
roughly fall under two categories: dissipation~\cite{CHPZ25, DZPL25}
and adiabatic evolution~\cite{PP14}. 

We conjecture there is a unified framework encompassing both.
Our algorithm solves the Schr\"odinger problem via adiabatic evolution,
but intuitively, a dissipation-based algorithm makes more sense 
for the problem;
it is more natural to imagine a particle in nature
minimizes its mechanical energy via dissipation, rather than by some
adiabatic process.
Furthermore, \cite{CSW25} devices a time-dependent Hamiltonian
in order to simulate quantum friction, alluding that 
friction and adiabatic evolution are interchangeable.

\paragraph{Alone classical, together quantum.}
The kinetic and potential terms, 
each of which can be classically minimized,
are added to give a quantum objective that 
is efficiently optimized by a quantum algorithm.

Does this phenomenon appear in a more general setting?
For instance,
can we add two diagonal Hamiltonians in the 
$X$ and $Z$ basis, each of which is classically efficiently minimized,
to 
get an quantum Hamiltonian that is efficiently minimized by
a quantum algorithm?
Can we add classically approximable diagonal Hamiltonians 
in different basis
to get a quantum approximable Hamiltonian?
Is there a unified framework for these problems?

For a concrete example, the Quantum Max Cut~\cite{GP19} Hamiltonian $\sum_{ij\in E} I-X_iX_j-Y_iY_j-Z_iZ_j $
is the summation of the Max Cut Hamiltonians in the $X,Y,Z$ basis,
each of which can be optimally approximated by 
the Goemans-Williamson algorithm~\cite{GW95}.
A quantum approximiation algorithm for the problem is unknown.

\section*{Acknowledgments}
The author is supported by a KIAS Individual Grant
CG093802 at Korea Institute for Advanced Study.
The author appreciates helpful discussions with Hyukjoon Kwon and Isaac Kim. 

\bigskip 

\bibliography{ref}

\appendix

\section{Omitted proofs: low energy truncation}\label{sec:truncation}
\begin{proof}[Proof of Lemma~\ref{lem:truncation_combined}]
  Since $g\le E$, it suffices to show Lemma~\ref{lem:ground_energy} and Lemma~\ref{lem:second_approx}.
\end{proof}

In addition to proving Lemma~\ref{lem:truncation_combined}, we prove Lemma~\ref{lem:gap_approx}, which relates the spectral gaps of two Hamiltonians.

\begin{lemma}[$\lambda_0$ approximation]
\label{lem:ground_energy}
Let $h_a,h_b$ be self-adjoint linear operators with purely discrete spectrum and $\lambda_0(h_a),\lambda_0(h_b)\ge 0$.
Suppose $h_a$ and $h_b$ are $(E,\epsilon)$-equivalent.
 There exists a constant $c>0$ such that if
$\epsilon \in[0,c]$
and
\begin{align*}
\lambda_0(h_a) + 1 \leq E,    
\end{align*}
 then we have
\[
|\lambda_0(h_a) - \lambda_0(h_b)| \leq 2\epsilon.
\]
\end{lemma}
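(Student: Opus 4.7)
The plan is a symmetric variational argument via the unitary $U:\widetilde{\mathcal D}_a \to \widetilde{\mathcal D}_b$ furnished by $(E,\epsilon)$-equivalence. Because every element of a truncated domain lies in the respective operator domain, and because $U$ preserves norms, any $\widetilde\psi\in\widetilde{\mathcal D}_a$ gives a legitimate trial state $U\widetilde\psi\in\mathcal D(h_b)$ (and vice versa) for the variational principle $\lambda_0(h)=\inf_{\|\psi\|=1,\ \psi\in\mathcal D(h)} h[\psi]$.

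First I would prove $\lambda_0(h_b)\le \lambda_0(h_a)+2\epsilon$. Let $\psi_a\in\mathcal D(h_a)$ be a normalized ground state of $h_a$, so $h_a[\psi_a]=\lambda_0(h_a)\le E-1\le E$. Since $\widetilde{\mathcal D}_a$ is an $(E,\epsilon)$-truncated domain of $h_a$, there exists a normalized $\widetilde\psi_a\in\widetilde{\mathcal D}_a$ with $h_a[\widetilde\psi_a]\le \lambda_0(h_a)+\epsilon$. Applying $(E,\epsilon)$-equivalence,
\begin{equation*}
h_b[U\widetilde\psi_a] \ \le \ h_a[\widetilde\psi_a] + \epsilon \ \le\  \lambda_0(h_a) + 2\epsilon.
\end{equation*}
Because $U\widetilde\psi_a\in\widetilde{\mathcal D}_b\subseteq \mathcal D(h_b)$ is normalized, the variational principle yields $\lambda_0(h_b)\le \lambda_0(h_a)+2\epsilon$.

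For the reverse direction, the main subtlety is that the hypothesis $\lambda_0(h_a)+1\le E$ is assumed only for $h_a$, so I must first \emph{bootstrap} a bound on $\lambda_0(h_b)$ in order to apply truncation to its ground state. From the first step, $\lambda_0(h_b)\le \lambda_0(h_a)+2\epsilon \le (E-1)+2\epsilon$, which is $\le E$ provided $\epsilon\le 1/2$; this is where the universal constant $c$ in the hypothesis is used. Now let $\psi_b\in\mathcal D(h_b)$ be a normalized ground state of $h_b$; since $h_b[\psi_b]=\lambda_0(h_b)\le E$, the truncated-domain property of $\widetilde{\mathcal D}_b$ produces $\widetilde\psi_b\in\widetilde{\mathcal D}_b$ with $h_b[\widetilde\psi_b]\le \lambda_0(h_b)+\epsilon$. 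Applying $U^{-1}$ and $(E,\epsilon)$-equivalence in the other direction gives $h_a[U^{-1}\widetilde\psi_b]\le h_b[\widetilde\psi_b]+\epsilon\le \lambda_0(h_b)+2\epsilon$, and the variational principle for $h_a$ closes the loop: $\lambda_0(h_a)\le \lambda_0(h_b)+2\epsilon$.

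The only place real care is needed is the bootstrap in the reverse direction — checking that $\lambda_0(h_b)\le E$ so that truncation of $\psi_b$ is available. The norm-closeness condition $\|\widetilde\psi-\psi\|\le\epsilon$ in the definition of truncation is actually not used in this lemma; it will instead play a role in the $\lambda_1$ approximation, where it ensures that a truncation of the ground state remains essentially orthogonal to candidate truncations of excited states. Everything else reduces to one application of the variational principle in each direction.
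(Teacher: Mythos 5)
Your proposal is correct and follows essentially the same route as the paper's own proof: truncate the ground state of $h_a$, transfer it via $U$ to bound $\lambda_0(h_b)\le\lambda_0(h_a)+2\epsilon$, then bootstrap $\lambda_0(h_b)\le E$ (using $\epsilon\le 1/2$) to truncate the ground state of $h_b$ and transfer back via $U^{-1}$. Your observation that only the energy condition of $\epsilon$-truncation is needed here also matches the paper's argument.
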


\begin{proof}
Let $\widetilde{\mathcal D}_a, \widetilde{\mathcal D}_b$ be $(E,\epsilon)$-truncated domains of $h_a,h_b$ that are isomorphic to each other via 
unitary $U:\widetilde{\mathcal D}_a\rightarrow\widetilde{\mathcal D}_b$ 
such that $|h_a[\widetilde\psi]-h_b[U\widetilde\psi]|\le \epsilon$ for all $\psi \in\widetilde{D}_a$.
Suppose $\lambda_0(h_a) = h_a[\psi_0]$ for a normalized $\psi_0 \in \mathcal D(h_a)$.
Because $h_a[\psi_0] < E$, there is an $\epsilon$-truncation of ${\psi_0}$, denoted $\widetilde{\psi}_0\in\widetilde{\mathcal D}_a$. 
By the definition of $(E,\epsilon)$-equivalence, 
we have
$$\lambda_0(h_b)  \le h_b[U\widetilde{\psi}_0]\le h_a[\widetilde{\psi}_0] + \epsilon\le h_a[\psi_0]+2\epsilon = \lambda_0(h_a) +2\epsilon,$$
and therefore 
\begin{align}
    \lambda_0(h_b)- \lambda_0(h_a) \le 2\epsilon. \label{eq:ground_energy_thm}
\end{align}

Similarly, let $\mu_0\in\mathcal D_b$ be a vector such that $h_b[\mu_0]  =\lambda_0(h_b).$
The fact that  $\lambda_0(h_b) \le \lambda_0(h_a)+2\epsilon \le E$, for $\epsilon \in[0,1/2]$, 
allows an $\epsilon$-truncation of $\mu_0$, which we denote $\widetilde{\mu}_0\in\widetilde{\mathcal D}_b$.
 By the definition of $(E,\epsilon)$-equivalence, we have 
\begin{align*}
    \lambda_0(h_a)\le h_a[U^{-1}\widetilde{\mu}_0]\le h_b[\widetilde{\mu}_0] + \epsilon\le h_b[\mu_0]+2\epsilon = \lambda_0(h_b) +2\epsilon,
\end{align*}
and therefore 
\begin{align*}
    \lambda_0(h_a)- \lambda_0(h_b) \le 2\epsilon. 
\end{align*}
Together with \eqref{eq:ground_energy_thm}, we prove the lemma.
\end{proof}

\begin{lemma}[$\lambda_1$ approximation]
\label{lem:second_approx}
Let $h_a,h_b\ge 0$ be a self-adjoint linear operator with purely discrete spectrum and
$\lambda_1(h_a)>\lambda_0(h_a) \ge0$. 
Suppose $h_a$ and $h_b$ are $(E,\sigma)$-equivalent,
and also satisfy the conditions
\begin{align}
&g:= \gap(h_a) \ > \ 0, \nonumber \\
&2({\lambda_1(h_a)} +1) \  \leq \  E,\quad   \label{eq:thm_gap_lambda1_condi} \\
&\epsilon, E^{-1}  \ < \ 1 . \nonumber
\end{align}

Suppose $c>0$ is a sufficiently small universal constant. If $\epsilon \in [0,c]$
and
\begin{align*}
    \sigma =  O \left(\frac{\epsilon g}{E^{1.5}} \right),
\end{align*} then
\begin{align}
|\lambda_1(h_a) - \lambda_1(h_b)| \leq \epsilon . 
\label{eq:thm_gap_ineq}    
\end{align}
\end{lemma}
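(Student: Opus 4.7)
The plan is to combine the min–max characterization of $\lambda_1$ with carefully chosen $2$D trial subspaces, using the gap $g$ of $h_a$ to suppress the off-diagonal entries of the resulting $2\times 2$ Ritz matrices. First I would apply Lemma~\ref{lem:ground_energy} to obtain $|\lambda_0(h_a)-\lambda_0(h_b)|\le 2\sigma$, which is then used to align the low-energy sectors of $h_a$ and $h_b$.

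For the upper bound $\lambda_1(h_b)\le \lambda_1(h_a)+\epsilon$, I would take $\sigma$-truncations $\widetilde\psi_0^a,\widetilde\psi_1^a\in\widetilde{\mathcal D}_a$ of the two lowest eigenvectors of $h_a$, and consider $V_a = \Span\{\widetilde\psi_0^a,\widetilde\psi_1^a\}$ together with its image $V_b = UV_a\subseteq \widetilde{\mathcal D}_b$. Applying $(E,\sigma)$-equivalence to each of $\widetilde\psi_0^a\pm\widetilde\psi_1^a$ and $\widetilde\psi_0^a\pm i\widetilde\psi_1^a$ (a polarization identity) shows that the $2\times 2$ Ritz matrices $M_a,M_b$ and their Gram matrices differ entrywise by $O(\sigma)$. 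Expanding the off-diagonal of $M_a$ as $\lambda_0(h_a)\langle\psi_0^a,\delta_1\rangle+\lambda_1(h_a)\langle\delta_0,\psi_1^a\rangle+\langle\delta_0,h_a\delta_1\rangle$ with $\delta_i = \widetilde\psi_i^a-\psi_i^a$, and combining $\|\delta_i\|\le\sigma$ with the bound $\langle\delta_i, h_a\delta_i\rangle\le O(\sigma + E\sigma^2)$ (from $h_a[\widetilde\psi_i^a]\le\lambda_i(h_a)+\sigma$) via Cauchy--Schwarz in the $h_a$-seminorm, gives $|(M_a)_{01}|\le O(E\sigma)$. The diagonals of $M_a$ lie within $O(\sigma)$ of $\lambda_0(h_a),\lambda_1(h_a)$, so their spread is at least $g/2$ for $\epsilon$ below the universal constant. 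The closed-form $2\times 2$ eigenvalue formula then gives a larger generalized eigenvalue at most $\lambda_1(h_a) + O\!\left(\sigma + \tfrac{E^2\sigma^2}{g}\right) = \lambda_1(h_a) + O(\epsilon)$ under $\sigma=O(\epsilon g/E^{3/2})$, and min–max on $V_b$ yields $\lambda_1(h_b)\le \lambda_1(h_a)+O(\epsilon)$.

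For the reverse inequality $\lambda_1(h_b)\ge \lambda_1(h_a)-\epsilon$, the analogous Ritz argument on the $b$-side would need a lower bound on $\gap(h_b)$ that is not a priori available, so I would instead argue by contradiction. Assume $\lambda_1(h_b) \le \lambda_1(h_a) - C\epsilon$ for a large universal $C$, truncate the two lowest eigenvectors of $h_b$ and pull them back via $U^{-1}$ to obtain $\xi_0,\xi_1 \in \widetilde{\mathcal D}_a$ with $h_a[\xi_i]\le \lambda_i(h_b)+2\sigma$. Since $h_a[\xi_0]\le \lambda_0(h_a) + O(\sigma)$, the standard energy-gap estimate (applied in $h_a$) forces $\xi_0$ to lie within $O(\sqrt{\sigma/g})$ of $\Span\{\psi_0^a\}$. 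Under the contradiction hypothesis, $h_a[\xi_1]\le \lambda_1(h_a) - C\epsilon/2$ and the inequality $h_a[\xi_1]\ge \lambda_1(h_a) - |c_{1,0}|^2 g$ together force $|c_{1,0}|^2\ge C\epsilon/(2g)$, where $c_{1,0}=\langle\psi_0^a,\xi_1\rangle$. Combining these two facts with the near-orthogonality $|\langle \xi_0,\xi_1\rangle|\le 2\sigma$ inherited from $\psi_0^b\perp\psi_1^b$, one derives $\sqrt{C\epsilon/g} \lesssim \sqrt{\sigma/g}+\sigma$, which contradicts $\sigma = O(\epsilon g/E^{3/2})$ once $C$ is large and $\epsilon$ is below the universal constant.

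The main technical obstacle is the tight control of the Ritz off-diagonal: the combination $|(M_a)_{01}|=O(E\sigma)$ with the gap denominator in the $2\times 2$ eigenvalue formula is precisely what produces the $E^{3/2}/g$ scaling in $\sigma$. A secondary delicate point is ensuring that all intermediate vectors arising in the polarization step (including $\widetilde\psi_0^a\pm i\widetilde\psi_1^a$) lie in the respective truncated domains, so that $(E,\sigma)$-equivalence is directly applicable throughout and no stronger energy hypotheses on $\delta_i$ are needed.
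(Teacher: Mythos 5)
Your proposal is correct in substance, but it is only half a different route from the paper's. For the direction $\lambda_1(h_a)-\lambda_1(h_b)\le\epsilon$, your contradiction argument is essentially the paper's own computation rearranged: the paper also pulls back $\sigma$-truncations of $\mu_0,\mu_1$ through $U^{-1}$, uses the gap of $h_a$ to force the pulled-back ground vector to lie within $O(\sqrt{\sigma/g})$ of $\psi_0^a$, transfers the near-orthogonality $|\langle\widetilde\mu_0,\widetilde\mu_1\rangle|\le 2\sigma$, and then runs the variational estimate $h_a[\xi_1]\ge\lambda_1(h_a)-|c_{1,0}|^2 g$ — just stated directly rather than as a contradiction (and, like you, it needs an a priori bound on $\lambda_1(h_b)$ to know the truncation of $\mu_1$ exists; your hypothesis $\lambda_1(h_b)\le\lambda_1(h_a)-C\epsilon$ supplies this automatically, which is a small simplification). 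The genuinely different part is your upper bound $\lambda_1(h_b)\le\lambda_1(h_a)+O(\epsilon)$: the paper pushes forward a single truncated excited state $U\widetilde\psi_1$, shows it is nearly orthogonal to the \emph{exact} ground state $\mu_0$ of $h_b$ (again via the $O(\sqrt{\sigma/g})$ alignment), and applies min--max after deflating against $\mu_0$; you instead use a two-dimensional Ritz subspace $U\,\Span\{\widetilde\psi_0^a,\widetilde\psi_1^a\}$ with polarization. Your route buys a more symmetric, "matrix perturbation" presentation that never references the eigenvectors of $h_b$ in this direction, at the cost of having to control the Ritz off-diagonal $O(E\sigma)$ and then beat it down by the gap denominator; the paper's route avoids matrix elements entirely and only ever uses Rayleigh quotients, which is why its error terms come out as $O(\sigma+\sigma E/g)$ rather than $O(\sigma+E^2\sigma^2/g)$ — both are $O(\epsilon)$ under $\sigma=O(\epsilon g/E^{1.5})$, so the two give the same final scaling. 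Two points to be careful about when writing your version up: (i) the Gram correction must be treated to second order — the crude bound $\max_v v^*M_bv/v^*Gv\le\lambda_{\max}(M_b)/(1-O(\sigma))$ only yields an additive error $O(E\sigma)=O(\epsilon\sqrt{E})$, which is \emph{not} $\le\epsilon$; you must instead note that the Gram off-diagonal enters the effective $2\times2$ matrix only through an $O(E\sigma)$ shift of the off-diagonal and an $O(E\sigma^2)$ shift of the diagonals (e.g.\ by Gram--Schmidt), after which your claimed bound $\lambda_1(h_a)+O(\sigma+E^2\sigma^2/g)$ is indeed correct; (ii) your conclusions come with constants ($O(\epsilon)$, $C\epsilon$), which — exactly as in the paper — must be absorbed into the implicit constant in $\sigma=O(\epsilon g/E^{1.5})$ and the universal constant $c$. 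Also, the polarization worry you flag is harmless: $\widetilde{\mathcal D}_a$ is a subspace, so all combinations $\widetilde\psi_0^a\pm\widetilde\psi_1^a$, $\widetilde\psi_0^a\pm i\widetilde\psi_1^a$ lie in it and the $(E,\sigma)$-equivalence applies to them with no extra hypotheses.
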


\begin{proof}
Let $\widetilde{\mathcal D}_a, \widetilde{\mathcal D}_b$ be $(E,\epsilon)$-truncated domains of $h_a,h_b$ that are isomorphic to each other via a unitary $U:\widetilde{\mathcal D}_a\rightarrow\widetilde{\mathcal D}_b$

Let $\mu_0\in \mathcal D(h_b)$ be a normalized vector such that $h_b[\mu_0]  = \lambda_0(h_b) .$ 
From Lemma~\ref{lem:ground_energy}, we know that $h_b[\mu_0]\le \lambda_0(h_a) + 2 \sigma \le E.$
Let $\widetilde{\mu}_0\in \widetilde{\mathcal{D}}_b$ be a $\sigma$-truncation of $\mu_0$, and let  $\widetilde{\alpha}_0:=U^{-1}\widetilde{\mu}_0 \in \widetilde{\mathcal{D}}_a$.
Then $h_a[\widetilde{\alpha}_0]$ is a good approximation of $\lambda_0(h_a)$: 
\begin{align}
h_a[\widetilde{\alpha}_0] \le  h_b[\widetilde{\mu}_0] +\sigma  \le  h_b[{\mu}_0] +2\sigma \le \lambda_0(h_a) + 4\sigma. \label{eq:alpha_energy}    
\end{align}

Let $\psi_0\in \mathcal{D}(h_a)$ be a normalized vector such that  $h_a[\psi_0] = \lambda_0(h_a)$.
We show that $\|\widetilde{ \alpha}_0 - \psi_0\|$ is small.
Since $h_a$ has a purely discrete spectrum, 
its eigenvectors form a complete set of basis. 
Therefore, we can write 
$$\widetilde{\alpha}_0 = \sqrt{1- t^2} \psi_0 + t {\psi}_0^\perp$$
for some $t\ge 0$
and a normalized vector ${\psi}_0^\perp \perp \psi_0$ in $\mathcal D(h_a)$.
We are assuming that $\lambda_0(h_a)$ has a nonzero gap $g$,
so we have $h_a[\psi_0^\perp] \ge \lambda_1(h_a) = \lambda_0(h_a) + g.$ 
Since $\langle \psi_0|h_a|\psi_0^\perp\rangle = 0$, we have 
\begin{align}
 h_a[\widetilde{\alpha}_0] =  (1- t^2) \lambda_0(h_a) + t^2
h_a[{\psi}_0^\perp]
\ge (1- t^2) \lambda_0(h_a) + t^2(\lambda_0(h_a) + g)
= \lambda_0(h_a) + t^2 g.    \label{eq:alpha_energy_lowerbound}
 \end{align}
 Combining \eqref{eq:alpha_energy} with \eqref{eq:alpha_energy_lowerbound}, 
we get
$$t^2 \le 4\sigma/g. $$ 
 Because $1 - \sqrt{1-t^2}  = t^2/({1+\sqrt{1-t^2}})\le t^2$, we have
\begin{align}
  \|\widetilde{\alpha}_0 - \psi_0\| = \sqrt{(1 - \sqrt{1-t^2})^2 +t^2} \le \sqrt{ t^4 + t^2} \le \sqrt{2}t \le 2\sqrt {2 \sigma/g}.      \label{eq:gap_alpha_distance}     
\end{align}

Let $\psi_1\in \mathcal{D}(h_a) $ be a normalized eigenvector such that $h_a[\psi_1] = \lambda_1(h_a)$. 
Since $\lambda_1(h_a) < E$, there exists an $\sigma$-truncation of $\psi_1$, which  we denote $\widetilde{\psi}_1 \in \widetilde{\mathcal D}_a$. 
We know that $\psi_1$ is orthogonal to $\psi_0$.
We now show that $\widetilde{\beta}_1:= U\widetilde{\psi}_1$ is approximately 
orthogonal to $\mu_0$, which in turn will approximately lower bound $\lambda_1(h_b)$ by $h_1[\widetilde{\beta}_1]$:
\begin{align}
    |\langle\widetilde{\beta}_1|\mu_0\rangle| 
    &\;=\;
    |\langle\widetilde{\beta}_1|((|\mu_0\rangle -|\widetilde{\mu}_0\rangle) +| \widetilde{\mu}_0\rangle)|   \nonumber \\
    &\;\le\;
    |\langle\widetilde{\beta}_1| \widetilde{\mu}_0\rangle | +  \|\langle\widetilde{\beta}_1| \| \cdot \| |\mu_0\rangle -|\widetilde{\mu}_0\rangle \|   \nonumber \\
    &\;\le\;
     |\langle\widetilde{\beta}_1|\widetilde{ \mu}_0\rangle|  + \sigma \label{eq:thm_gap_mu_dist}\\
    &\;=\; 
     |\langle U^{-1} \widetilde{\beta}_1|U^{-1}\widetilde{ \mu}_0 \rangle|  + \sigma  \nonumber \\
     &\;=\; 
     |\langle \widetilde{\psi}_1|\widetilde{ \alpha}_0 \rangle|  + \sigma  \nonumber \\
     &\;\le\; 
     |\langle {\psi}_1|\widetilde{ \alpha}_0 \rangle|  + 2\sigma  \label{eq:thm_gap_psi1_dist}\\
     &\;\le\; 
     |\langle {\psi}_1|\psi_0 \rangle|  + 2\sigma + 2\sqrt{2\sigma/g}  \label{eq:thm_gap_psi0_dist} \\
     &\; = \;  2\sigma + 2\sqrt{2\sigma/g} \label{eq:thm_gap_betamu_inner}
\end{align}
Here, \eqref{eq:thm_gap_mu_dist} and \eqref{eq:thm_gap_psi1_dist} 
follow from Cauchy-Schwarz and the norm condition of $\sigma$-truncation, 
and \eqref{eq:thm_gap_psi0_dist} follows from \eqref{eq:gap_alpha_distance}.
Since $\epsilon, E^{-1}, g/E<1$, we have
\begin{align}\label{eq:gaps_trunc_sigma_choice}
    \sigma = O(\epsilon g/E^{1.5}) =O(e \cdot E^{-0.5} \cdot g/E) \le O(\min(\epsilon,\sqrt{\epsilon/E}, \epsilon g/E)). 
\end{align}
It follows that \eqref{eq:thm_gap_betamu_inner} is less than 1.

Therefore, we can write $\widetilde{\beta}_1 = s\mu_0 + \sqrt{1- |s|^2}  {\mu}_0^\perp$ 
for some $s\in \mathbb C$ such that 
$|s|\le 2\sigma + 2\sqrt{2\sigma/g} $ and a normalized vector ${\mu}_0^\perp\in \mathcal D(h_b)$ such that $ \mu_0^\perp \perp \mu_0$. 
Then we have 
\[
\begin{aligned}
 (1-|s|^2)h_b[\mu_0^\perp] \; \le \; |s|^2 \lambda_0(h_b) + (1 - |s|^2)h_b[\mu_0^\perp] 
 &\;=\; h_b[\widetilde{\beta}_1]\\
 &\;=\; h_b[U\widetilde{\psi}_1] \\
 &\;\le\; h_a[\widetilde{\psi}_1] + \sigma \\
 &\;\le\; h_a[\psi_1] + 2\sigma \\
 &\;=\; \lambda_1(h_a) + 2\sigma,
\end{aligned}
\]
and therefore
\begin{align}
\lambda_1(h_b) \le h_b[\mu_0^\perp] 
&\;\le\; \frac{\lambda_1(h_a) +2\sigma}{1-|s|^2}     \nonumber     \\ 
&\;\le\; \frac{\lambda_1(h_a) + 2\sigma}{1- (2\sigma + 2\sqrt{2\sigma/g})^2}  
 \label{eq:lambda1hb_less_than_lambda1ha} \\
 & \ \le \ 2{\lambda_1(h_a)+1}  \nonumber\\
&\;\le\; E  ,     \label{eq:thm_gap_lambda1hb_upperbound}
\end{align}
by \eqref{eq:gaps_trunc_sigma_choice} the condition \eqref{eq:thm_gap_lambda1_condi}.
From Inequalities \eqref{eq:gaps_trunc_sigma_choice} and \eqref{eq:lambda1hb_less_than_lambda1ha},
we have
\begin{align}
 \lambda_1(h_b) - \lambda_1(h_a) 
 &\ \le  \  
 \frac{(2\sigma +2 \sqrt{2\sigma/g})^2\lambda_1(h_a) +  2\sigma}{1-(2\sigma +2 \sqrt{2\epsilon/g})^2} \nonumber\\
&\ \le  \ (2\sigma +2 \sqrt{2\sigma/g})^2E +  
 \frac{ 2\sigma}{1-(2\sigma +2 \sqrt{2\sigma/g})^2} \nonumber\\ 
 &\ \le \ \epsilon
 .  \label{eq:thm_gap_lambda1hb-lambda1ha}
\end{align}

Now we upper bound $\lambda_1(h_a) - \lambda_1(h_b)$ in a similar manner. Let $\mu_1\in \mathcal D(h_b)$ be a normalized eigenvector of $h_b$ 
with the second lowest eigenvalue $\lambda_1(h_b).$ 
From \eqref{eq:thm_gap_lambda1hb_upperbound},
we know that there exists an $\sigma$-truncation of $\mu_1$, 
denoted $\widetilde{\mu}_1 \in \widetilde{\mathcal D}_a$.
Let 
\begin{align*}
\widetilde{\alpha}_1 \;: =\; U^{-1}\widetilde{\mu}_1 \;= \; t'\psi_0 + \sqrt{1-t'^2}\psi_0'^\perp,   
\end{align*}
for some $t'$ such that
$|t'|\in [0,1]$ and $\psi_0'^\perp \in \mathcal D(h_a)$
such that $ {\psi_0'} ^\perp \perp \psi_0$.
By \eqref{eq:gap_alpha_distance} and the norm condition of $\epsilon$-truncation, 
\begin{align}
    |t'| 
    &\;=\;
    |\langle\psi_0| \widetilde{\alpha}_1\rangle|  \nonumber\\
    &\;\le\;        
    |\langle\widetilde{\alpha}_0| \widetilde{\alpha}_1\rangle| + 2\sqrt{2\sigma/g}  \nonumber\\
    &\;=\;        
    |\langle\widetilde{\mu}_0| \widetilde{\mu}_1\rangle| + 2\sqrt{2\sigma/g}  \nonumber\\
    &\;\le\;        
    |\langle{\mu}_0| {\mu}_1\rangle| +2\sigma +  2\sqrt{2\sigma/g}  \nonumber\\
    &\;=\;        
    2\sigma +  2\sqrt{2\sigma/g}.  \nonumber
\end{align}
By similar arguments as before, we get
\begin{align*}
(1 - |t'|^2)\lambda_1(h_a) 
 \le  |t'|^2\lambda_0(h_a) + (1 - |t'|^2)\lambda_1(h_a) 
 =  h_a[\widetilde{\alpha}_1] 
 \le  h_b[\widetilde{\mu}_1] + \sigma 
 \le  \lambda_1(h_b) + 2\sigma
\end{align*}
and,
\begin{align*}
    \frac{\lambda_1(h_b) + 2\sigma}{ 1- (2\sigma + 2\sqrt{2\sigma/g})^2} \ge \lambda_1(h_a).
\end{align*}
By similar calculation that led to \eqref{eq:thm_gap_lambda1hb-lambda1ha},
we get
\begin{align}
 \lambda_1(h_a) - \lambda_1(h_b) 
\le \epsilon,  \nonumber
\end{align}
and combining with \eqref{eq:thm_gap_lambda1hb-lambda1ha} gives
\begin{align}
 |\lambda_1(h_a) - \lambda_1(h_b) |
&\ \le  \ \epsilon.  \nonumber
\end{align}
\end{proof}

\begin{lemma}[gap approximation]\label{lem:gap_approx}
Let $h_a,h_b\ge 0$ be a self-adjoint linear operator with purely discrete spectrum. 
Suppose $h_a$ and $h_b$ are $(E,\sigma)$-equivalent and the following conditions hold:
\begin{align}
&\gap(h_a)  \  \ge g \ >  0, \nonumber \\
&2({\lambda_1(h_a)} +1) \  \leq \  E,\quad    \\
&\epsilon, E^{-1}  \ < \ 1 . \nonumber
\end{align}

Then there exist a universal constant $c>0$ such that, 
if $\epsilon \in \cap[0,c]$ and
\begin{align*}
    \sigma \le \Theta \left(\frac{g^2}{E^{1.5}} \right)
\end{align*}
with a sufficiently small constant factor,
then
\begin{align*}
        |\gap(h_a) - \gap(h_b) | \le 0.01g  .  
\end{align*}
\end{lemma}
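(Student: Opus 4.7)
The plan is to derive the gap approximation directly from the two eigenvalue-approximation lemmas already established, namely Lemma~\ref{lem:ground_energy} and Lemma~\ref{lem:second_approx}. Since $\gap(h) = \lambda_1(h) - \lambda_0(h)$, the triangle inequality immediately gives
\[
|\gap(h_a) - \gap(h_b)| \ \le \ |\lambda_0(h_a) - \lambda_0(h_b)| \ + \ |\lambda_1(h_a) - \lambda_1(h_b)|,
\]
so it suffices to make each term on the right at most $0.005 g$.

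First I would verify that the hypotheses of Lemma~\ref{lem:ground_energy} hold: we have $\lambda_0(h_a) \le \lambda_1(h_a) \le E/2 - 1 \le E - 1$, which gives the required $\lambda_0(h_a) + 1 \le E$. Then Lemma~\ref{lem:ground_energy} yields $|\lambda_0(h_a) - \lambda_0(h_b)| \le 2\sigma$. Next, I would invoke Lemma~\ref{lem:second_approx} with the auxiliary parameter (playing the role of $\epsilon$ in that lemma) set to $\epsilon' := 0.005 g$. The required equivalence tolerance there is $\sigma = O(\epsilon' g / E^{1.5}) = O(g^2 / E^{1.5})$, which matches precisely the hypothesis of the present lemma when the hidden constant is chosen sufficiently small. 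This produces $|\lambda_1(h_a) - \lambda_1(h_b)| \le 0.005 g$.

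Combining these bounds via the triangle inequality gives
\[
|\gap(h_a) - \gap(h_b)| \ \le \ 2\sigma \ + \ 0.005 g.
\]
Finally, since $g \le \lambda_1(h_a) \le E/2$ and therefore $g^2/E^{1.5} \le g/\sqrt{E} \cdot (g/E) \le g$, shrinking the constant in $\sigma = O(g^2/E^{1.5})$ further so that $2\sigma \le 0.005 g$ closes the argument and delivers the claimed bound $|\gap(h_a) - \gap(h_b)| \le 0.01 g$.

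The main obstacle is bookkeeping rather than substance: one must confirm that the single $\sigma = O(g^2/E^{1.5})$ assumed here is simultaneously strong enough for both invocations. For Lemma~\ref{lem:ground_energy} any $\sigma \le c$ suffices, while Lemma~\ref{lem:second_approx} demands $\sigma = O(\epsilon' g/E^{1.5})$ with $\epsilon' = 0.005 g$; these are compatible precisely because we are allowed to hide a small universal constant in the asymptotic bound on $\sigma$. No deeper spectral argument is needed beyond the two previously proved lemmas.
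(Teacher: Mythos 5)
Your proposal is correct and follows essentially the same route as the paper: the paper's proof likewise applies the triangle inequality to $\gap = \lambda_1 - \lambda_0$, invokes Lemma~\ref{lem:ground_energy} for the ground energies (using that $\sigma = O(g^2/E^{1.5})$ forces $\sigma = O(g)$ since $g \le \lambda_1(h_a) < E$) and Lemma~\ref{lem:second_approx} with target accuracy a small multiple of $g$ for the first excited energies, each contributing at most $0.005g$. Your bookkeeping about the single $\sigma$ serving both invocations matches the paper's handling of the constants.
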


\begin{proof}
We have $\sigma = O(g)$, since
$\sigma \le \Theta(g\cdot g/E\cdot 1/E^{0.5})$ and
$g\le \lambda_1(h_a) < E$.
By Lemma~\ref{lem:ground_energy}, 
$\sigma\le \Theta(g)$ implies that 
\begin{align*}
    |\lambda_0(h_a) - \lambda_0(h_b)| \le 0.01g/2.
\end{align*}
By Lemma~\ref{lem:second_approx}, 
$\sigma\le \Theta(g\cdot g/E^{1.5})$ implies that 
\begin{align*}
    |\lambda_1(h_a) - \lambda_1(h_b)| \le 0.01g/2.
\end{align*}
Therefore, the triangle inequality gives 
    \begin{align*}
        |\gap(h_a) - \gap(h_b) | 
        \le |\lambda_0(h_a) - \lambda_0(h_b)| + |\lambda_1(h_a) - \lambda_1(h_b)| 
        \le 0.01g.
    \end{align*}
\end{proof}

\section{Omitted proofs: truncation in position}\label{sec:truncation_in_position_appendix}

\begin{proof}[Proof of Lemma~\ref{lem:pos_based_equivalences}]
Let us denote $L^2(B\subset \Omega_i)$ 
for the space of $L^2$ functions on $\Omega_i\in\{\mathbb R^n, B, \mathbb T^n\}$ that is supported on $B\subset \Omega_i$.
    By Lemma~\ref{lem:position_trunc}, $L^2(B\subset \Omega_i)$ is an $(E,\epsilon)$-truncated domain for $h_i$ for all $i\in\{1,2,3\}$.
    The two $(E,\epsilon)$-truncated spaces $L^2(B\subset \Omega_i)$ and $L^2(B\subset \Omega_j)$
    are identified by the unitary 
    $U: L^2(B\subset \Omega_i)\rightarrow L^2(B\subset \Omega_j)$, such that 
    \begin{align*}
        U\psi(x)  =\begin{cases}
            \psi(x)  \qquad  &\|x\|_2\le r+1 \\
            0 \qquad & \|x\|_2> r+1.
        \end{cases}
    \end{align*}

We have $h_i[\psi] = h_j[U\psi]$ because $V_i(x) = V_j(x)$ when $x \in B.$ 
Therefore, $h_i$ and $h_j$ are $\epsilon$-equivalent.
\end{proof}

\begin{lemma}[Markov's in position]\label{lem:rn_state_approx}
Let $h :=-\Delta + V \in S_{\mathbb R^n}\cup S^D_{B} \cup S_{\mathbb T^n}$ be a Schr\"odinger operator 
on the domain $\Omega\in\{\mathbb R^n, B, \mathbb T^n\}$, where $B:=B^2_{r+1}$ and $L>2(r+1)$. 
Let $\psi \in \mathcal D(h)$ be a normalized state with $h[\psi] \le E$.
 Let $f:\Omega \rightarrow \mathbb [0,1]$ be any measurable function such that 
\begin{align*}
    \begin{cases}
        f(x) \  = \  1 & \text{if }\ \ \|x\|_2 \le r, \\
        f(x) \ \in \ [0,1]  &\text{if }\ \ \| x\|_2 \in [r, r+1],\\
        f(x)  \ =0   &\text{if }  \ \  \|x\|_2\ge r+1.
     \end{cases}
\end{align*}
For a parameter $\sigma \in[0,1/2]$, suppose $V(x)\ge E/\sigma^2
$ for all $x$ with $ \|x\|_2 \ge r$.
Then, 
we have
\begin{align}
    \left\| \psi - f\psi \right\| \  &\le \  \sigma.  \label{eq:(1-f)psi}  
\end{align}
\end{lemma}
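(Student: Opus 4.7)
The plan is to use Markov's inequality in position space, with the potential $V$ playing the role of the random variable and $|\psi(x)|^2\,\mathrm dx$ as the probability measure. The key observation is that in every one of the three cases ($\rn, B, \tn$), the negative Laplacian is a nonnegative operator: on $\rn$ and $\tn$, integration by parts gives $\langle\psi|-\Delta\psi\rangle = \int|\nabla\psi|^2\,\mathrm dx\ge 0$, and on $B$ with Dirichlet boundary the boundary term vanishes so the same identity holds. Consequently
\[
\langle\psi|V\psi\rangle \;=\; h[\psi] \;-\; \langle\psi|(-\Delta)\psi\rangle \;\le\; h[\psi] \;\le\; E.
\]

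Next I would apply Markov's inequality. Since $V\ge 0$ and $\psi$ is normalized, $|\psi(x)|^2\,\mathrm dx$ is a probability measure on $\Omega$ and $V$ is a nonnegative measurable function with expectation $\int_\Omega V(x)|\psi(x)|^2\,\mathrm dx\le E$. The hypothesis $V(x)\ge E/\sigma^2$ whenever $\|x\|_2\ge r$ implies
\[
\int_{\|x\|_2\ge r}|\psi(x)|^2\,\mathrm dx \;\le\; \int_{\{V\ge E/\sigma^2\}} |\psi(x)|^2\,\mathrm dx \;\le\; \frac{E}{E/\sigma^2}\;=\;\sigma^2,
\]
which is exactly the desired position-space tail bound.

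To finish, note that $1-f$ vanishes on $\{\|x\|_2\le r\}$ and is bounded by $1$ everywhere (since $f\in[0,1]$). Therefore
\[
\|\psi-f\psi\|^2 \;=\; \int_\Omega (1-f(x))^2\,|\psi(x)|^2\,\mathrm dx \;\le\; \int_{\|x\|_2\ge r} |\psi(x)|^2\,\mathrm dx \;\le\; \sigma^2,
\]
and taking the square root yields \eqref{eq:(1-f)psi}.

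The argument is essentially routine; the only subtle point is uniformity across the three geometries $\rn$, $B$, $\tn$. The nonnegativity of $-\Delta$ and the validity of integration by parts must be invoked in each setting, and in the bounded-domain case one must use that $\psi\in\mathcal D(h)$ encodes the Dirichlet condition. All of this is standard for the spaces $S_\rn$, $S^D_B$, $S_\tn$ defined in the preliminaries, so no real obstacle arises. I would state the integration-by-parts step once at the beginning and then treat the three cases uniformly.
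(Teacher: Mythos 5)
Your proposal is correct and follows essentially the same route as the paper's proof: nonnegativity of $-\Delta$ via integration by parts (with the Dirichlet boundary term vanishing on $B$), Markov's inequality applied to the measure $|\psi(x)|^2\,\mathrm dx$ using $V\ge E/\sigma^2$ outside $B^2_r$, and then bounding $\|\psi-f\psi\|^2$ by the tail mass since $1-f$ vanishes on $\{\|x\|_2\le r\}$ and is bounded by $1$. No gaps; your explicit remark that $\langle\psi|V\psi\rangle\le h[\psi]\le E$ is exactly the paper's estimate.
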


\begin{proof}
The proof is by Markov's inequality.
In each case of $\Omega \in \{\mathbb R^n, B, \mathbb T^n\}$, we have
\begin{align*}
   \langle \psi| (-\Delta)|\psi\rangle \ = \  -\int_\Omega \overline{\psi}\nabla \cdot \nabla \psi \ \text d x \ = \ \int_\Omega |\nabla \psi|^2 \ \text d x  \ \ge \  0,
\end{align*}
by integration by parts.
Since $\psi$ is normalized, we have 
\begin{align*}
     E \ =  \ h[\psi] \  
     =  \int_\Omega \overline{\psi}(x) (-\Delta + V(x) ) \psi(x) \ \text d x 
     \ \ge  \int |\psi(x) |^2 V(x) \ \text{d} x  
     \ \ge  \frac{E}{\sigma^2}\int_{\|x\| \ge r} |\psi(x )|^2 \ \text d x  
\end{align*}
Therefore,
\begin{align*} 
\sigma^2    \  \ge 
\  \int _{\|x\| \ge r} |\psi(x)|^2 \ \text d x. 
\end{align*}
Now we can bound
\begin{align*}
    \|\psi - f_r\psi\|^2 
    &
    \ = \ \int \ \ \ \ \ (1 - f_r(x))^2 |\psi(x)|^2 \ \text d x \\
    &
    \ = \  \int_{\| x\| \ge r} (1 - f_r(x))^2 |\psi(x)|^2 \ \text d x \\
    & \ \le \ \int_{\|x\| \ge r} |\psi(x)|^2 \ \text d x \\
    & \ \le  \ \sigma^2.
\end{align*}
\end{proof}

\begin{lemma}[Truncation in position]\label{lem:position_trunc}
Let $h :=-\Delta + V \in S_{\mathbb R^n}\cup S^D_{B} \cup S_{\mathbb T^n}$ be a Schr\"odinger operator 
on the domain $\Omega\in\{\mathbb R^n, B, \mathbb T^n\}$, where $ B:=\{x\in\Omega \ | \ \|x\|_2\le r+1 \} $ and $L>2(r+1)$. 

Suppose $V^{-1}([0,E/\sigma^6])\subset B^2_r$.
Then, the Hilbert space $\widetilde{\mathcal D}:= L^2(B) $ is an $(E,\epsilon)$-truncated domain of $h$, where 
$E>1$, and $\sigma = O(\epsilon/E)$. 
\end{lemma}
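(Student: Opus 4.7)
The plan is to build $\widetilde\psi\in L^2(B)$ by multiplying $\psi$ by a smooth spatial cutoff and renormalizing. The hypothesis $V^{-1}([0,E/\sigma^6])\subset B^2_r$ says $V\ge E/(\sigma^3)^2$ wherever $\|x\|_2>r$, so a state of energy at most $E$ must already be overwhelmingly concentrated inside $B^2_r$; the cutoff should therefore barely change $\psi$ while enforcing support inside $B=B^2_{r+1}$.

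Concretely, I would first fix a smooth $f:\Omega\to[0,1]$ with $f\equiv 1$ on $B^2_r$, $f\equiv 0$ outside $B^2_{r+1}$, and $\|\nabla f\|_\infty,\|\Delta f\|_\infty\le C$ for a universal constant $C$; on the torus, $L>2(r+1)$ ensures $f$ descends from $\rn$ to a smooth function on $\tn$. Define $\widetilde\psi:=f\psi/\|f\psi\|$. Applying Lemma~\ref{lem:rn_state_approx} with parameter $\sigma^3$ in place of $\sigma$ (permitted since $V\ge E/(\sigma^3)^2$ outside $B^2_r$) gives $\|\psi-f\psi\|\le\sigma^3$ and, from the intermediate estimate in its proof, $\int_{\|x\|_2\ge r}|\psi|^2\le \sigma^6$. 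Hence $\|f\psi\|^2\ge 1-\sigma^6$ and
\[
\|\widetilde\psi-\psi\|\;\le\;\|\widetilde\psi-f\psi\|+\|f\psi-\psi\|\;\le\;(1-\|f\psi\|)+\sigma^3\;\le\;2\sigma^3\;\le\;\epsilon
\]
whenever $\sigma=O(\epsilon/E)$ and $E>1$, settling the norm condition.

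For the energy condition, I would expand $\int|\nabla(f\psi)|^2$ and integrate the cross term by parts via $2\Re\int f\bar\psi\,\nabla f\cdot\nabla\psi=\tfrac12\int\nabla(f^2)\cdot\nabla(|\psi|^2)=-\int|\psi|^2(|\nabla f|^2+f\Delta f)$, yielding the clean identity
\[
\int|\nabla(f\psi)|^2\;=\;\int f^2|\nabla\psi|^2\;-\;\int|\psi|^2\,f\Delta f.
\]
Since $0\le f\le 1$ gives $\int f^2|\nabla\psi|^2\le\int|\nabla\psi|^2$ and $\int V f^2|\psi|^2\le\int V|\psi|^2$, while $f\Delta f$ is supported in the annulus $\{r\le\|x\|_2\le r+1\}$ where $|f\Delta f|\le C$, the Markov bound $\int_{\|x\|_2\ge r}|\psi|^2\le\sigma^6$ yields $\langle f\psi,h(f\psi)\rangle\le\langle\psi,h\psi\rangle+C\sigma^6$. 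Dividing by $\|f\psi\|^2\ge 1-\sigma^6$ and using $h[\psi]\le E$,
\[
h[\widetilde\psi]-h[\psi]\;\le\;\frac{(E+C)\,\sigma^6}{1-\sigma^6}\;\le\;\epsilon,
\]
which holds once $\sigma=O(\epsilon/E)$ and $E>1$, since then $\sigma^6\le(\epsilon/E)^6\le \epsilon/E$.

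The main obstacle is not the algebra above but verifying that $f\psi$ lies in $\mathcal D(h)$ in each of the three settings and that the integration by parts produces no boundary contribution. On $\rn$ and $\tn$ this follows from the compact support of $f$ and periodicity, respectively; on $B$ with the Dirichlet boundary condition it follows because $f\equiv 0$ in a neighborhood of $\partial B$, so $f\psi\in H^2(B)\cap H^1_0(B)\subset\mathcal D(h_B^D)$ whenever $\psi\in\mathcal D(h_B^D)$. With these regularity points in hand, the above manipulations are rigorous and the lemma follows.
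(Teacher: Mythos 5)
Your route genuinely differs from the paper's: you cut $\psi$ off directly in position and control the localization error with the exact identity $\int|\nabla(f\psi)|^2=\int f^2|\nabla\psi|^2-\int f\Delta f\,|\psi|^2$, whereas the paper first projects $\psi$ onto the eigenspaces with eigenvalue at most $E/\sigma^2$ (obtaining $\psi^\downarrow$ with $\|h\psi^\downarrow\|\le E/\sigma^2$) precisely so that it can estimate the term $\Re\langle (f^2-1)\psi^\downarrow , h\psi^\downarrow\rangle$ by Cauchy--Schwarz; your identity uses only quadratic-form quantities, so the spectral pre-truncation becomes unnecessary. That is a real simplification, and your handling of the norm condition (applying Lemma~\ref{lem:rn_state_approx} with parameter $\sigma^3$) and of the domain and boundary issues ($f\psi\in\mathcal D(h)$, no boundary terms in the integration by parts) is adequate at the paper's level of rigor.

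There is, however, a quantitative gap: the claim that $\|\Delta f\|_\infty\le C$ for a \emph{universal} constant is false in $\rn$. For a radial cutoff $f(x)=g(\|x\|_2)$ one has $\Delta f=g''(\|x\|_2)+\frac{n-1}{\|x\|_2}\,g'(\|x\|_2)$, so on the annulus $r\le\|x\|_2\le r+1$ (with $r\ge1$) the honest bound is $\|f\Delta f\|_\infty=O(n)$. Your energy error is then $O(n\sigma^6)$, and the lemma's hypotheses ($E>1$, $\sigma=O(\epsilon/E)$) do not force $n\sigma^6\le\epsilon$: nothing in the statement ties $n$ to $E$ (only in the application is $E=\Theta(n^2+a)\ge n$, which would rescue the estimate). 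The paper's bookkeeping avoids this because after its integration by parts the only derivative of $f$ that survives is $\|\nabla f\|^2=g'(\|x\|_2)^2$, which is dimension-free. The cheapest repair that preserves your structure is to drop the exact identity and bound the cross term by weighted Cauchy--Schwarz, $2|f\bar\psi\,\nabla f\cdot\nabla\psi|\le\delta f^2|\nabla\psi|^2+\delta^{-1}|\nabla f|^2|\psi|^2$ with $\delta=\sigma^3$, giving an error of order $\sigma^3E+\sigma^3$ that involves only $|\nabla f|$; alternatively, carry the factor $n$ and strengthen the requirement to $\sigma=O\bigl(\min(\epsilon/E,(\epsilon/n)^{1/6})\bigr)$, or revert to the paper's spectral cutoff.
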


\begin{proof}
By Theorem~\ref{lem:completeness} and Lemma~\ref{lem:smoothness_of_eigenvecs},
there exists an orthonormal basis $\{  \  \psi_i \ | \ h\psi_i = \lambda_i\psi _i, \ i \in \mathbb Z_{\ge 0} \ \}$ 
for the Hilbert space such that $0\le \lambda_0\le \lambda_1 \le \lambda_2 \le \cdots$, and each $\psi_i$ is smooth.
Therefore, for a normalized vector $\psi \in \mathcal D(h)$ with $h[\psi]\le E$, we can write 
\begin{align*}
    \psi = \sum_{i\in\mathbb Z_{\ge 0}} c_i\psi_i
\end{align*}
for $c_i\in\mathbb C$ such that $\sum_i |c_i|^2 = 1$.
We apply the Markov's inequality on $\{|c_i|^2\}_i$ as follows. 
For $\sigma>0$, we have
\begin{align*}
 E\ge   h[\psi]  = \sum_{i \in \mathbb Z_{\ge 0}}  |c_i|^2\lambda_i \ge \frac{E}{\sigma^2}
 \sum_{i: \lambda_i >E/\sigma^2} |c_i|^2 ,
 \end{align*}
which gives us
\begin{align}
   \sigma^2 \ge  \sum_{i: \lambda_i >E/\sigma^2} |c_i|^2 = \|\psi - \psi^\downarrow\|^2, \  \ \ \ \ \ \text{where} \ \ \  \ \  \      \psi^\downarrow:=\sum_{i: \lambda_i \le E/\sigma^2 } c_i\psi_i.
  \label{eq:position_trunc_markov_ortho}
\end{align}
Since the spectrum of $h$ is purely discrete, the number of $i$ such that $\lambda_i\le  E/\sigma^2$ is finite.
Otherwise, there would be an accumulation point of $\lambda_i$'s in the compact set $[0,E/\sigma^2]$,
which is a contradiction
to the fact that purely discrete spectrum accumulates only at $\infty$.
Therefore, $\psi^\downarrow$ is smooth.

To prove the theorem, it is sufficient to prove the following claim: the vector $\widetilde{\psi}\in  \widetilde{\mathcal D}$ is an $\sigma$-truncation 
of $\psi$, where
\begin{align*}
\widetilde{\psi} \ & := \ \frac{f\psi^\downarrow
}{\|f \psi^\downarrow  \|} , \\
f(x)\ &:= \Cut_{r,1}(\|x\|_2).
\end{align*}
The function $\widetilde{\psi}$ is supported on $B$, and $\|f\psi^\downarrow\| \le \|\psi^\downarrow\| \le \|\psi\|$.
Therefore, we have $\psi^\downarrow \in \widetilde{\mathcal D}$.
We show that $\widetilde{\psi}$ satisfies the norm and energy conditions of $\sigma$-trunction in the rest of the proof.

We first verify the norm condition of $\sigma$-truncation.
By the triangle inequality,
\begin{align}
  \|  \psi -  \widetilde{\psi}  \| 
  &\ \le\ \|\psi -f\psi^\downarrow\| + \|f\psi^\downarrow     -f\psi^\downarrow
  /\|f\psi^\downarrow \|\|\ \nonumber \\
  &\ =\ \|\psi -f\psi^\downarrow \| + |\|f\psi^\downarrow\| - 1| \\
  &\ =\ \|\psi -f\psi^\downarrow\| + |\|f\psi^\downarrow\| - \|\psi\|| \nonumber \\
  &\ \le\ 2\|\psi - f\psi^\downarrow\| \\  \label{eq:thm_position_trunc_psitilde_nrom}
   & \  \le \  2(\|\psi - \psi^\downarrow\|  +  \| \psi^\downarrow - f\psi^\downarrow\| 
  \  )    
\end{align}
We have $\|\psi - \psi^\downarrow\| \le \sigma$ due to \eqref{eq:position_trunc_markov_ortho}.
Also, we have $h[\psi^\downarrow] \le h[\psi]<E$, because $h[\psi]$ is a convex combination of 
$h[\psi^\downarrow]$ and $h[\psi - \psi^\downarrow]$, and we know $h[\psi^\downarrow]\le E/\sigma^2 \le  h[\psi - \psi^\downarrow]$ by definition.
By applying Lemma \ref{lem:rn_state_approx} on $\psi^\downarrow/\|\psi^\downarrow\|$,
we get
$\|\psi^\downarrow - f\psi^\downarrow\|/\|\psi^\downarrow\| \le \sigma^3. $ 
Therefore, we verify that $\sigma  =O(\epsilon/E)$ for a small enough constant factor 
satisfies the norm condition of $(E,\epsilon)$-truncation as
\begin{align}
    \|\psi - \widetilde{\psi} \| \le 2(\sigma +  \sigma^3\|\psi^\downarrow\|) \le 2(\sigma +  \sigma^3) = O(\epsilon/E)\le \epsilon. \label{eq:position_based_norm_condi}
\end{align}

Now we prove the energy condition.
Since $\nabla f^2 =2f \nabla f = 0$ on the boundary of $B$,   
we can apply the divergence theorem to $    |\psi^\downarrow|^2\nabla f^2 $ on $B$ and get
\begin{align*}
     0 =\oint_{\partial B} (|\psi^\downarrow|^2\nabla f^2 )\cdot \frac{ x }{\|x\|} \; \text d^{n-1}x 
    &=
    \int_{B} \nabla \cdot( |\psi^\downarrow|^2\nabla f^2)  \; \text d^{n}x  \\
    &=
    \int_{B}  2\Re(\overline{\psi^\downarrow} \nabla \psi^\downarrow) \cdot \nabla f^2 +  |\psi^\downarrow|^2\Delta f^2  \; \text d^{n}x.  
\end{align*}
Since $f$ is 0 outside the ball, 
\begin{align}
        -\int_{\Omega} \ \ \Re(\overline{\psi^\downarrow}\nabla \psi^\downarrow) \cdot \nabla f^2 \ \text d x 
        & \ =\    \nonumber \\
        -\int_{B} \Re( \overline{\psi^\downarrow}\nabla \psi^\downarrow) \cdot \nabla f^2  \ \text d x
        & \ = \ \frac{1}{2} \int_{B}  |\psi^\downarrow|^2\Delta f^2  \; \text d x \nonumber
        \\
        & \ = \ \frac{1}{2} \int_{\Omega} \ \  |\psi^\downarrow|^2\Delta f^2  \; \text d x  \ =  \ \langle \psi^\downarrow |\Delta f^2| \psi^\downarrow\rangle/2  \label{eq:divergence}.
\end{align}
Since $h[\cdot]$ always gives a real number, we have
\begin{align}
  &\ \ \ \ \  \ \|f\psi^\downarrow\| \ h[f\psi^\downarrow] \\ \nonumber
  & \ = \    
  \Re(\|f\psi^\downarrow\| \ h[f\psi^\downarrow] ) \\
  & \ = \ 
  \Re \int    -f \overline{\psi^\downarrow} \Delta (f\psi^\downarrow)  + f\overline{\psi^\downarrow} V f\psi^\downarrow \ \text{d}x  \nonumber
    &\\
    &\ = \ 
    \int
     -f |\psi^\downarrow|^2 \Delta f   
     - 2\Re(f\overline{\psi^\downarrow} \nabla f\cdot \nabla \psi^\downarrow) 
     - \Re( f^2 \overline {\psi^\downarrow} \Delta {\psi^\downarrow}) + f^2\overline {\psi^\downarrow} V\psi^\downarrow \ \text{d}x \nonumber
    &\\ 
    &\ = \ 
    \int
     -f |\psi^\downarrow|^2 \Delta f   - \Re(\overline{\psi^\downarrow}\nabla \psi^\downarrow)\cdot \nabla f^2  - \Re( f^2 \overline {\psi^\downarrow} \Delta \psi^\downarrow) + f^2\overline {\psi^\downarrow} V\psi^\downarrow \ \text{d}x \nonumber
    &\\ 
    &\ = \
     \int 
    |\psi^\downarrow|^2 ( - f\Delta f + \Delta f^2/2)
    +  \Re (f^2  \overline{\psi^\downarrow}(- \Delta  +V) \psi^\downarrow) \ \text{d}x \nonumber
    &\\ 
    &\ = \
    \langle \psi^\downarrow | \|\nabla f\|^2 |\psi^\downarrow\rangle  + \Re\langle f^2\psi^\downarrow|h\psi^\downarrow\rangle ,\label{eq:thm_position_trunc_hcmu}
\end{align}
where the second to the last equation follows from \eqref{eq:divergence}, and the last equation is by applying the identity $\Delta f^2 = 2|\nabla f|^2  + 2f \Delta f $.

We will show that the first term of \eqref{eq:thm_position_trunc_hcmu} is small, 
and the second term is approximately $\langle\psi^\downarrow|h\psi^\downarrow\rangle.$
To upper bound the first term, 
note that $\nabla f $ is 
nonzero only outside $B^2_r$ where we show $\psi^\downarrow$ has a low weight.
We define a multiplication operator $p_r:\mathcal H \rightarrow \mathcal H$ such that for $\phi \in \mathcal H,$
\begin{align*}
(p_r\phi)(x) =      \begin{cases}
        \phi(x)  \  & \  \text{if } \ \|x\|_2 \le r, \\
        0  \  & \  \text{if } \ \|x\|_2> r.
    \end{cases}
\end{align*}
We apply Lemma \ref{lem:rn_state_approx} on $\psi^\downarrow/\| \psi^\downarrow\|$ and $p_r$, and obtain 
\begin{align*}
\|(1-p_r)\psi^\downarrow\|  =      \left\|(1-p_r) \frac{\psi^\downarrow}{ \|\psi^\downarrow\|} \right\| \|\psi^\downarrow\| \le \sigma^3 \|\psi^\downarrow\| \le \sigma^3,
\end{align*}
which is equivalent to
\begin{align*}
    \int_{x: \|x\|_2\ge r} |\psi^\downarrow(x) |^2 \ \text d x  \le \sigma^6.
\end{align*}
Therefore, we have
\begin{align}
 \langle \psi^\downarrow | \|\nabla f\|^2 |\psi^\downarrow\rangle
 =
 \int_{x : \|x\|_2\ge r}  \| \nabla f(x)\|^2 |\psi^\downarrow(x)|^2 \ \text d r   
 \ \le \sigma^6 \max_{x: \|x\|_2\ge r}\left|\nabla f(x) \right|^2   . \label{eq:thm_position_trunc_nablaf_upperbound}
\end{align}
We compute the gradient in the polar coordinate with the radius $t = \|x\|_2$ and get
\begin{align*}
 \max_{\|x\|\ge r}   \|\nabla f(x)\|^2 
 &= 
 \max_{t\ge r} \left(\frac{\text d \Cut_{r,1}(t)}{\text d t}\right)^2
 = 
 \left(\frac{  \Bump (t - r) } {\int_r^{r+1} \Bump ( {y - r}) \ \text  d y } \right)^2
 \le
\left(
\frac{\max_{t \in [0,1]} \Bump(t) }{\int_0^1 \Bump(t) \text d t }  \right)^2 
\\
&= {C},
 \end{align*}
where $C$ is a universal constant. 
Applying this upper bound to \eqref{eq:thm_position_trunc_nablaf_upperbound} gives 
\begin{align}
    \langle\psi^\downarrow|\|\nabla f\|^2|\psi^\downarrow\rangle  \le {\sigma^6 C}{}. \label{eq:thm_position_trunc_first_term} 
\end{align}

Now we show that the second term of \eqref{eq:thm_position_trunc_hcmu} is close to $\langle \psi^\downarrow | h\psi^\downarrow\rangle $.
By the Cauchy-Schwarz inequality, 
\begin{align}
    |\langle f^2\psi^\downarrow|h\psi^\downarrow\rangle -\langle \psi^\downarrow | h\psi^\downarrow\rangle | 
\  = \ 
|(\langle f^2\psi^\downarrow| -\langle \psi^\downarrow |)|h\psi^\downarrow\rangle  |  \  \le \
\|f^2 \psi^\downarrow - \psi^\downarrow\|\|h\psi^\downarrow\| \label{eq:thm_position_trunc_secondterm}
\end{align}
By applying Lemma \ref{lem:rn_state_approx} on $\psi^\downarrow/\|\psi^\downarrow\|$ with respect to $1 -f^2$, we bound the first norm 
\begin{align}
    \|f^2\psi^\downarrow - \psi^\downarrow\| \le  \sigma^3 \| \psi^\downarrow\| \le  \sigma^3. \label{eq:thm_position_trunc_secondterm_firstterm}
\end{align}
The second norm is bounded as:
\begin{align}
 &\|h\psi^\downarrow\|       
 \ =\    \sqrt{\sum_{i: \lambda_ i \le E/\sigma^2}\lambda_i^2 |c_i|^2}  
 \le 
     \frac{E}{\sigma^2}. \label{eq:thm_position_trunc_secondterm_secondterm}
\end{align}

Plugging \eqref{eq:thm_position_trunc_secondterm_firstterm}, and \eqref{eq:thm_position_trunc_secondterm_secondterm}
into \eqref{eq:thm_position_trunc_secondterm} gives
\begin{align*}
    |\langle f^2\psi^\downarrow|h\psi^\downarrow\rangle -\langle \psi^\downarrow | h\psi^\downarrow\rangle | 
\  \le \
E\sigma
\end{align*}
Since $\langle \psi ^\downarrow|h\psi^\downarrow\rangle$ is real, we have 
\begin{align*}   
    |\Re\langle f^2\psi^\downarrow|h\psi^\downarrow\rangle -\langle \psi^\downarrow |h\psi^\downarrow\rangle| \le  |\langle f^2\psi^\downarrow|h\psi^\downarrow\rangle -\langle \psi^\downarrow | h\psi^\downarrow\rangle | ,
\end{align*}
and therefore
\begin{align}
    \Re\langle f^2\psi^\downarrow|h\psi^\downarrow\rangle
    \ \le \ & \langle \psi^\downarrow|h\psi^\downarrow\rangle +      \sigma E \nonumber
    \\
    \ = \ &  h[\psi^\downarrow]\|\psi^\downarrow\|^2 +      \sigma E \nonumber
    \\
    \ \le \ &  h[\psi] +      \sigma E . \label{eq:thm_position_trunc_second_term}
\end{align}
By \eqref{eq:thm_position_trunc_hcmu}, \eqref{eq:thm_position_trunc_first_term},  
and \eqref{eq:thm_position_trunc_second_term},  we get 
\begin{align*}
  \|f\psi^\downarrow\| \ h[f\psi^\downarrow]  
    &\ \le \
     h[\psi] +      \sigma E  + {\sigma^6 C}{} 
     .
\end{align*}
The triangle inequality implies
\begin{align*}
    \|f\psi^\downarrow\| \ \ge \ \|\psi^\downarrow\|   -\|(1 - f) \psi^\downarrow\| \ \ge \  1 - \sigma  - \sigma^3.
\end{align*}
Therefore, we have
\begin{align*}
h[\widetilde{\psi}]  
=    
h[f\psi^\downarrow] 
 \le 
\frac{   h[\psi] +      \sigma E+ {\sigma^6 C}{} }{  1 - \sigma  - \sigma^3} \le h[\psi] + \epsilon 
\end{align*}
for $\sigma = O(\epsilon/E) $.

\end{proof}

\section{Omitted proofs: truncation in frequency}\label{sec:truncation_in_frequency_appendix}

In this section, we show that $h_\tn$ and $h_Q$ are equivalent.
We first show that 
\begin{align*}
     {\widetilde{\mathcal D}}_{\mathbb {T}^n,f}  &:= \text{span}\left\{ \  \omega_k \ | 
       \ k\in \mathbb Z^n, \ \|k\|_2\le  K  \right\}  \\
     {\widetilde{\mathcal D}}_{Q}  &:= \text{span}\{ \ |p_k\rangle \ | \  k\in \mathbb Z^n,\|k\|_2\le  K \}, 
\end{align*}
are $(E,\epsilon)$-truncated domains of $h_{\mathbb T^n}$ (Lemma~\ref{lem:freq_trunc_torus}) and $h_Q$ (Lemma~\ref{lem:frequency_trunc_qubit}) respectively,
where
\begin{align*}
      \omega_k(x) \ :=& \  \frac{1}{\sqrt {L^n}}e^{i2\pi k\cdot x/L} \ \ \ \text{for } x \in \mathbb T^n, \\
      |p_k\rangle   &\ 
:= \frac{1}{\sqrt{N^n}} \sum_{y \in\mathcal N^n} e^{i 2\pi k\cdot y/N } | y\rangle 
\end{align*}
are the Fourier basis of the respective space.
They are identified by the unitary $U:{\widetilde{\mathcal D}}_{\mathbb {T}^n,f}\rightarrow {\widetilde{\mathcal D}}_{Q}$
such that $ U\omega_k  = |p_k\rangle.$

\begin{proof}[Proof of Lemma~\ref{lem:torus_qubit_equiv}]
By Lemma~\ref{lem:freq_trunc_torus} and Lemma~\ref{lem:frequency_trunc_qubit},
\begin{align*}
     {\widetilde{\mathcal D}}_{\mathbb {T}^n,f}  &:= \text{span}\left\{ \  \omega_k \ | 
       \ k\in \mathbb Z^n, \ \|k\|_2\le  K  \right\},  \\
     {\widetilde{\mathcal D}}_{Q}  &:= \text{span}\{ \ |p_k\rangle \ | \  k\in \mathbb Z^n,\|k\|_2\le  K \} 
\end{align*}
 are $(E,\epsilon)$-truncated domains of $h_{\mathbb T^n}$ and $h_Q$, respectively,
provided that 
\begin{align*}
    K  =  \Omega\left(\frac{L (E+V_{\max})^{3/2}) }{ \epsilon }\right).
\end{align*}
By Lemma~\ref{lem:torus_qubit_energy_approx},
the unitary $U:\widetilde{\mathcal D}_{\mathbb {T}^n,f}\rightarrow  {\widetilde{\mathcal D}}_{Q}$
defined by $U \omega_k= |p_k\rangle$ 
satisfies $|h[\psi] - h[U\psi]|\le \epsilon$ for all $\psi\in\widetilde{\mathcal D}_{\mathbb {T}^n,f}$,
provided that
    \begin{align*}
     N\ge  4 K \ge 16\left(\frac{L  p(2n)  }{2\pi} \right)^2
        \frac{1}{e^{1/n}}.
    \end{align*}
Therefore,
\begin{align*}
    N = 4K  = \Omega\left(  L^2 (E + V_{\max})^{1.5}(p(2n))^2/\epsilon \right)  \ge    \Omega\left( \max \left\{  {L (E+V_{\max})^{1.5}) }/{ \epsilon } , ({L  p(2n)  } )^2 /{e^{1/n}} \right\}   \right) 
\end{align*}
    gives that $h_{\mathbb T^n}$ and $h_Q$ are $(E,\epsilon)$-equivalent.
\end{proof}

\begin{lemma}[Frequency truncation on torus]\label{thm:frequency_trunc_torus}\label{lem:freq_trunc_torus}
Let $\mathbb T^n :=\mathbb R^n/L\mathbb Z^n $ be the $n$-dimensional torus of length $L$, 
$h = -\Delta + V$ be a Schr\"odinger operator 
on $\mathbb T^n$ such that $V(\mathbb T^n ) \subset [0,V_{\max} ]$ for some $V_{\max}>1$.
Then 
\begin{align*}
     &{\widetilde{\mathcal D}}_{\mathbb {T}^n,f}  := \text{span}\left\{ \  \omega_k \ | 
       \ k\in \mathbb Z^n, \ \|k\|_2\le  K  \right\} 
\end{align*}
is an $(E, \epsilon)$-truncated domain of $h$, where
\begin{align*}
    K \ =& \  \Omega\left( \frac{L (E+V_{\max})^{3/2})}{\epsilon}  \right).
\end{align*}
\end{lemma}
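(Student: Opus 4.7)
\textbf{Proof proposal for Lemma~\ref{lem:freq_trunc_torus}.} The plan is to take $\widetilde\psi$ to be the normalized projection of $\psi$ onto the low-frequency subspace and verify the two truncation conditions using a Markov-type bound in the frequency domain. Concretely, for a normalized $\psi\in\mathcal D(h)$ with $h[\psi]\le E$, expand in the Fourier basis $\psi=\sum_{k\in\mathbb Z^n}c_k\omega_k$, and set $\psi_{\le K}:=\sum_{\|k\|_2\le K}c_k\omega_k$, $\psi_{>K}:=\psi-\psi_{\le K}$, and $\widetilde\psi:=\psi_{\le K}/\|\psi_{\le K}\|$. Clearly $\widetilde\psi\in\widetilde{\mathcal D}_{\mathbb T^n,f}$ (it is a finite sum of smooth Fourier modes, so it lies in $\mathcal D(h)$).

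The key quantitative input is Markov's inequality in frequency. Since $V\ge 0$ and $-\Delta\,\omega_k=(4\pi^2\|k\|^2/L^2)\omega_k$, we get
\begin{equation*}
E\ \ge\ \langle\psi|-\Delta|\psi\rangle\ =\ \sum_{k}\frac{4\pi^2\|k\|^2}{L^2}|c_k|^2\ \ge\ \frac{4\pi^2 K^2}{L^2}\|\psi_{>K}\|^2,
\end{equation*}
so $\delta^2:=\|\psi_{>K}\|^2\le EL^2/(4\pi^2K^2)$. The norm condition is then immediate: as in the calculation leading to~\eqref{eq:thm_position_trunc_psitilde_nrom}, one has $\|\psi-\widetilde\psi\|\le 2\|\psi-\psi_{\le K}\|=2\delta$, which is at most $\epsilon$ whenever $K\ge\Omega(L\sqrt E/\epsilon)$.

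The delicate step is the energy condition, because $V$ mixes frequencies. Since $-\Delta$ is diagonal in the Fourier basis, $\langle\psi_{\le K}|-\Delta|\psi_{\le K}\rangle\le\langle\psi|-\Delta|\psi\rangle$. For the potential part, I expand $\langle\psi|V|\psi\rangle$ over the splitting $\psi=\psi_{\le K}+\psi_{>K}$ and bound the cross and high-frequency terms via $\|V\|\le V_{\max}$ and Cauchy-Schwarz:
\begin{equation*}
\langle\psi_{\le K}|V|\psi_{\le K}\rangle\ \le\ \langle\psi|V|\psi\rangle+2V_{\max}\delta+V_{\max}\delta^2.
\end{equation*}
Combining the two bounds and dividing by $\|\psi_{\le K}\|^2=1-\delta^2$,
\begin{equation*}
h[\widetilde\psi]-h[\psi]\ \le\ \frac{h[\psi]\delta^2+2V_{\max}\delta+V_{\max}\delta^2}{1-\delta^2}\ \le\ \frac{(E+V_{\max})\delta^2+2V_{\max}\delta}{1-\delta^2}.
\end{equation*}
With $\delta\le \sqrt E\,L/(2\pi K)$, the dominant term is $2V_{\max}\delta\lesssim V_{\max}\sqrt E\,L/K$, and using $V_{\max}\sqrt E\le(E+V_{\max})^{3/2}$ (valid since $E,V_{\max}\ge 1$) gives the stated rate $K=\Omega(L(E+V_{\max})^{3/2}/\epsilon)$, which makes both the energy gap and the norm gap at most $\epsilon$.

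The main obstacle is the cross term $2\mathrm{Re}\langle\psi_{\le K}|V|\psi_{>K}\rangle$ in the potential, which is only first order in $\delta$ and thus dictates the final $K$; the kinetic contribution, being diagonal, only decreases under truncation and causes no trouble. The rest is routine bookkeeping of the $1/(1-\delta^2)$ normalization factor, which is harmless once $\delta$ is chosen sufficiently small.
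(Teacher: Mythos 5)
Your proposal is correct and follows essentially the same argument as the paper: the same Fourier decomposition, the same Markov-type bound on the high-frequency weight via the kinetic term, the same $V_{\max}$ Cauchy--Schwarz bound on the cross terms, and the same handling of the normalization factor, yielding the identical choice of $K$. The only cosmetic difference is that you keep $\delta\le \sqrt{E}\,L/(2\pi K)$ and absorb $V_{\max}\sqrt{E}$ into $(E+V_{\max})^{3/2}$ at the end, whereas the paper fixes $\sigma=O(\epsilon/(E+V_{\max}))$ first; both are fine.
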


\begin{proof}
Let $\psi \in \mathcal D(h)$ be a normalized vector with energy $h[\psi]  \le E$. 
We show that there exists an $(E,\epsilon)$-truncation of $\psi$ in $\widetilde{ \mathcal D} _{\mathbb T^n, f}$.
Since $\psi$ is also in $L^2(\mathbb T^n)$, $\psi$ has the Fourier decomposition
\begin{align*}
    \psi = \sum_{k\in \mathbb Z^n} \widehat{\psi}_k\omega_k.
\end{align*}
Since
$\langle \psi |V|\psi\rangle \ge 0$, we have 
\begin{align}
   E \ge  \langle \psi | (-\Delta)\psi\rangle 
   = 
   \sum_{k \in \mathbb Z^n} \frac{4\pi^2\|k\|_2^2 \widehat{\psi}_k}{L^2} \langle\psi|  \omega_k \rangle  
   = 
   \sum_{k \in \mathbb Z^n} \frac{4 \pi^2 \|k\|^2 }{L^2}  | \widehat{\psi}_k |^2 .  \label{eq:thm_frequency_trunc_kinetic}
\end{align}
Given the parameter $K$, we define low- and high-frequency vectors
\begin{align*}
\psi^\downarrow \  &: = \sum_{\|k\|\le K} \widehat{\psi}_k\omega_k   \in \widetilde{\mathcal D}_{\mathbb T^n, f} , \\
\psi^\uparrow \  &: = \sum_{\|k\| > K} \widehat{\psi}_k\omega_k    = \psi - \psi^\downarrow.
\end{align*}
Since $\{\omega_k\}_k$ forms a complete orthonormal basis set in $L^2(\mathbb T^n)$, 
Parseval's theorem gives $ \sum_{k \in \mathbb Z^n} |\widehat{\psi}_k |^2  = 1$.
By the Markov's inequality on $|\widehat \psi_k|^2$, we obtain
\begin{align}
\|\psi^\uparrow\|^2 
\ = \  \|\psi^\downarrow - \psi   \|^2
\ = \ \sum_{\|k\| \ge K} |\widehat{\psi}_k |^2  \ \le \ \frac{EL^2}{4\pi^2 K^2} :=\sigma^2    \label{eq:thm_frequency_trunc_sigma}
\end{align}
where $\sigma \ge 0.$
We claim that the vector $\widetilde{\psi}:=\psi^\downarrow/\|\psi^\downarrow\|$ is an $\epsilon$-truncation of $\psi.$ 
For the norm condition:
\begin{align}
    \left\|\frac{\psi^\downarrow}{\|\psi^\downarrow\|} - \psi \right\| 
     & \  \le \ 
    \left\|\frac{\psi^\downarrow}{\|\psi^\downarrow\|} - \psi^\downarrow \right\| + \| \psi ^\downarrow  - \psi \| \nonumber 
    \\
    & \ = \ 
    \left|\| \psi\| -{\|\psi^\downarrow\|} \right| +  \| \psi ^\downarrow  - \psi \| \nonumber 
    \\
    &  \ \le \  2 \| \psi ^\downarrow  - \psi \|  \nonumber
    \\
    &\  \le \  2\sigma.
    \label{eq:thm_frequency_trunc_norm}
\end{align}
We choose  
$\sigma =O( {\epsilon}/{(E + V_{\max}   )} )$, so that $2\sigma\le \epsilon$.

For the energy condition of $(E,\epsilon)$-truncation, we first bound the Laplacian term.
From \eqref{eq:thm_frequency_trunc_kinetic}, we have 
\begin{align}
\langle \psi^\downarrow |(-\Delta)|\psi^\downarrow\rangle - \langle \psi |(-\Delta)|\psi\rangle  =  - \sum_{k :\|k\|_2 > K} \frac{4\pi^2\|k\|^2 }{L^2}|\widehat{\psi}_k |^2 \le 0. \label{eq:thm_frequency_trunc_torus_diff_kinetic}
\end{align}
In order to upper bound $\langle \widetilde{ \psi}|V|\widetilde{\psi}\rangle  
 - \langle \psi|V|\psi\rangle $, we use the fact
$   \| V|\psi\rangle \|  \ \le \ V_{\max} \  \| \psi  \|.$
By linearity and norm considerations, we have
\begin{align}
   | \langle  \psi^\downarrow|V|{\psi}^\downarrow\rangle  
 - \langle \psi|V|\psi\rangle|
 &
 \  = \ 
 |\langle  \psi^\downarrow|V|{\psi}^\downarrow\rangle  
 - (\langle \psi^\downarrow | + \langle \psi^\uparrow | )V (| \psi^\downarrow \rangle + | \psi^\uparrow \rangle  ) |   \nonumber \\
 &
 \ = \
 | - \langle  \psi^\downarrow|V|{\psi}^\uparrow\rangle  
- \langle  \psi^\uparrow|V|{\psi}^\downarrow\rangle  
- \langle  \psi^\uparrow|V|{\psi}^\uparrow\rangle  |  \nonumber \\
&
\ \le \
V_{\max}  (\|\psi^\downarrow \| \|\psi^\uparrow \| + \|\psi^\uparrow \| \|\psi^\downarrow\| +\|\psi^\uparrow \| \|\psi^\uparrow \| ) \nonumber \\
&
\ \le \
V_{\max}(2\sigma +  \sigma^2). \label{eq:thm_frequency_trunc_torus_diff_potential}
\end{align}
Combining \eqref{eq:thm_frequency_trunc_torus_diff_kinetic} and \eqref{eq:thm_frequency_trunc_torus_diff_potential} gives
\begin{align*}
\langle \psi^\downarrow |h|\psi^\downarrow\rangle - \langle \psi |h|\psi\rangle 
& 
\ =  \ 
\langle \psi^\downarrow |(-\Delta)|\psi^\downarrow\rangle - \langle \psi |(-\Delta)|\psi\rangle  + \langle \psi^\downarrow |V|\psi^\downarrow\rangle - \langle \psi |V|\psi\rangle  \\
&
\ \le \ V_{\max}(2\sigma + \sigma^2).
\end{align*}
By the fact that $\|\psi^\downarrow\| \ge \|\psi\| - \|\psi^\downarrow - \psi\| \ge 1 - \sigma$, we have
\begin{align}
    h[\psi^\downarrow] 
    \ = \ 
    \frac{\langle \psi^\downarrow |h|\psi^\downarrow\rangle}{\| \psi^\downarrow\|^2} 
    &
    \ \le \
    \frac{h[\psi] + V_{\max}(2\sigma + \sigma^2 )}{\|\psi^\downarrow\|^2} \nonumber\\
    &
    \ \le \
    \frac{h[\psi] + V_{\max}(2\sigma + \sigma^2)}{( 1- \sigma)^2} \nonumber\\
    &
    \ = \
    h[\psi]  +  \frac{h[\psi](2\sigma - \sigma^2) + V_{\max}(2\sigma + \sigma^2)}{( 1- \sigma)^2} \nonumber 
    \\
    &
    \ \le \
    h[\psi]  +  \frac{E(2\sigma - \sigma^2) + V_{\max}(2\sigma + \sigma^2)}{( 1- \sigma)^2} 
 \nonumber
 \\
 &
    \ \le \ 
   h[\psi]  +  \frac{2\sigma  + \sigma^2 }{( 1- \sigma)^2} (E + V_{\max}) \nonumber \\
   &
   \ \le \ \epsilon \nonumber,
\end{align}
where the last line is by our choice
$\sigma =O( {\epsilon}/{(E + V_{\max}   )} )$.
By \eqref{eq:thm_frequency_trunc_sigma}, 
we have $\sigma \le L\sqrt{E + V_{\max}}  /( 2\pi K)$.
Therefore, we have that $\widetilde{ \mathcal D} _{\mathbb T^n, f}$ is an $(E,\epsilon)$-truncated domain, when we have
\begin{align*}
K  =\Omega\left(\frac{L (E+V_{\max})^{3/2}) }{ \epsilon }\right).     
\end{align*}  
\end{proof}

\begin{lemma}[Frequency truncation on qubit]\label{lem:frequency_trunc_qubit}
Let $h_Q$ be a qubit Schr\"odinger operator on $N^n$-dimensional space such that $V(x) \in [0,V_{\max} ]$ for all $x\in \mathbb T^n$ and $V_{\max}>1$.
Let $\mathcal D_Q$ be the $N^n$-dimensional Hilbert space and let $|p_k\rangle$ be defined as above. 

Then 
\begin{align*}
     &{\widetilde{\mathcal D}}_{Q}  := \text{span}\{ \ |p_k\rangle \ | \  k\in \mathbb Z^n,\|k\|_2\le  K \}, 
\end{align*}
is an $(E, \epsilon)$-truncation of $\mathcal D_Q$ with respect to $h_Q$, where
\begin{align*}
    K \ =& \  \Theta\left( \frac{L (E+V_{\max})^{3/2} }{\epsilon}  \right)
\end{align*}
for some constant factor.
\end{lemma}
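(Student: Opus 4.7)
The plan is to mimic the proof of Lemma~\ref{lem:freq_trunc_torus} essentially verbatim, replacing integrals over the torus with sums over the discretized lattice $\mathcal N^n$ and replacing the continuous Fourier basis $\{\omega_k\}_{k\in\mathbb Z^n}$ with the discrete Fourier basis $\{|p_k\rangle\}_{k\in\mathcal N^n}$. The underlying structure is identical: $K_Q$ is diagonal in the $|p_k\rangle$ basis with eigenvalues $4\pi^2\|k\|_2^2/L^2$, and $V_Q$ is a multiplication operator bounded in operator norm by $V_{\max}$. Throughout I will implicitly assume $K < N/2$ so that $\widetilde{\mathcal D}_Q$ is a nontrivial subspace of the qubit Hilbert space (this will be absorbed into the $\Theta$ bound).

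First I would take an arbitrary normalized $|\psi\rangle \in \mathcal D_Q$ with $h_Q[\psi]\le E$ and expand it as $|\psi\rangle = \sum_{k\in\mathcal N^n} \widehat\psi_k |p_k\rangle$. Since $V_Q \ge 0$, one has $\langle\psi|K_Q|\psi\rangle \le E$, so applying Markov's inequality to $\{|\widehat\psi_k|^2\}$ using the eigenvalues of $K_Q$ yields
\[
\|\psi^\uparrow\|^2 := \sum_{\|k\|_2 > K} |\widehat\psi_k|^2 \le \frac{EL^2}{4\pi^2 K^2} =: \sigma^2,
\]
where $\psi^\uparrow := \psi - \psi^\downarrow$ and $\psi^\downarrow := \sum_{\|k\|_2\le K} \widehat\psi_k |p_k\rangle \in \widetilde{\mathcal D}_Q$. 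Define the candidate truncation $\widetilde\psi := \psi^\downarrow/\|\psi^\downarrow\|$; the norm condition follows from the same triangle-inequality computation as in Lemma~\ref{lem:freq_trunc_torus}, giving $\|\widetilde\psi - \psi\| \le 2\sigma$.

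For the energy condition I would split $h_Q[\widetilde\psi] - h_Q[\psi]$ into kinetic and potential contributions. The kinetic part is only decreased by truncation, since
\[
\langle \psi^\downarrow | K_Q | \psi^\downarrow \rangle - \langle \psi | K_Q | \psi \rangle = -\sum_{\|k\|_2 > K} \frac{4\pi^2 \|k\|_2^2}{L^2} |\widehat\psi_k|^2 \le 0.
\]
For the potential part, using $\|V_Q\| \le V_{\max}$ and the cross-term bounds as in the torus proof, I get $|\langle \psi^\downarrow | V_Q | \psi^\downarrow \rangle - \langle \psi | V_Q | \psi \rangle| \le V_{\max}(2\sigma + \sigma^2)$. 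Combining and renormalizing by $\|\psi^\downarrow\|^2 \ge (1-\sigma)^2$ yields
\[
h_Q[\widetilde\psi] \le h_Q[\psi] + \frac{(2\sigma + \sigma^2)(E + V_{\max})}{(1-\sigma)^2}.
\]
Choosing $\sigma = O(\epsilon/(E+V_{\max}))$ makes this at most $h_Q[\psi] + \epsilon$, and solving $\sigma = L\sqrt{E}/(2\pi K)$ for this value of $\sigma$ yields the stated threshold $K = \Theta(L(E+V_{\max})^{3/2}/\epsilon)$.

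Honestly, there is no real obstacle here: the whole argument is a transcription of Lemma~\ref{lem:freq_trunc_torus} to the finite-dimensional setting, where Parseval's theorem on $L^2(\tn)$ is replaced by unitarity of the discrete Fourier transform $U_{\mathcal F}$ and integration by the corresponding sum over $\mathcal N^n$. The only point worth double-checking is that the same Markov step works despite the finite-dimensional discretization — but this is automatic because $K_Q$ is by construction diagonal in $\{|p_k\rangle\}$ with the prescribed eigenvalues, which is exactly the property used in the torus case.
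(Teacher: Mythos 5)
Your proposal is correct and follows essentially the same route as the paper, whose own proof simply expands $|\psi\rangle$ in the $|p_k\rangle$ basis, applies the Markov step to the kinetic term, and then declares the remainder verbatim identical to the torus case (Lemma~\ref{lem:freq_trunc_torus}); your transcription of the norm and energy conditions, the $V_{\max}(2\sigma+\sigma^2)$ cross-term bound, and the choice $\sigma = O(\epsilon/(E+V_{\max}))$ matches that argument step for step.
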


\begin{proof}
Let us assume that an element $|\psi\rangle \in \mathcal D_{Q}$ has energy $h[|\psi\rangle]  \le E$. 
We write $|\psi\rangle$ in the Fourier basis:
\begin{align*}
    |\psi\rangle = \sum_{k\in \mathbb Z^n} \widehat{\psi}_k |p_k\rangle .
\end{align*}
 Because 
$\langle \psi |V|\psi\rangle \ge 0$, we have 
\begin{align}
   E \ge  \langle \psi | (-\Delta_Q)|\psi\rangle 
   = 
   \sum_{k \in \mathbb Z^n} \frac{4 \pi^2 \|k\|^2 }{L^2}  | \widehat{\psi}_k |^2 .  \nonumber
\end{align}
After defining vectors
\begin{align*}
|\psi^\downarrow\rangle \  &: = \sum_{\|k\|\le K} \widehat{\psi}_k |p_k\rangle   \in \widetilde{\mathcal D}_{Q} , \\
|\psi^\uparrow\rangle \  &: = \sum_{\|k\| > K} \widehat{\psi}_k |p_k\rangle    = \psi - \psi^\downarrow,
\end{align*}
the rest of the proof is verbatim from the proof of Theorem~\ref{lem:freq_trunc_torus} and therefore omitted. 
\end{proof}

Our main goal in this section is to show that $(h_\mathbb T, \widetilde{\mathcal D}_{\mathbb T,2})$ and $(h_Q, \widetilde{\mathcal{ D}}_Q)$ are $\epsilon$-equivalent, given that 
$V$ is ``smooth enough'' according to a quantifiable notion of smoothness that we call smoothness factor.
The Fourier decomposition for $V_\mathbb T$ plays a crucial role in doing so.

\begin{lemma}\label{lem:torus_qubit_energy_approx}
Let $V:\mathbb T^n \rightarrow \mathbb R$ be a smooth potential on the torus.
Suppose 
\begin{align*}
h & : = -\Delta +V \qquad \qquad \qquad \qquad \quad  \text{on \ }\mathbb T^n, \\
h_{Q}  & : = \  K_{Q} +V_{Q}  \qquad \qquad \qquad \qquad \text{on \ }\log N\times n \ \text{qubits,} \\
    \widetilde{\mathcal D}_{\mathbb T^n,f} 
    & :=
    \Span  \{\omega_k |   \  \ k \in \mathbb Z^n, \ \| k\|_2 \le K \} \\
    \widetilde{ D}_{Q}
    & :=
    \Span  \{|p_k\rangle  |  \   \ k \in \mathbb Z^n, \ \| k\|_2 \le K \}.
\end{align*}
Furthermore, suppose $\log_\partial (V_{\mathbb T}, \mathbb T^n, m) \le \log p(m)$ for some polynomial $p$,
Define a unitary $U:\widetilde{\mathcal D}_{\mathbb T^n,f}\rightarrow \widetilde{ D}_{Q}$ by $U\omega_k:= |p_k\rangle$.

Then, for all $\psi\in\widetilde{\mathcal D}_{\mathbb T^n,f}$, we have $|h_{\mathbb T^n}[\psi] - h_Q[U\psi] | \le \epsilon$, given tha
\begin{align*}
     N\ge  4 K \ge 16\left(\frac{L  p(2n)  }{2\pi} \right)^2
        \frac{1}{e^{1/n}}.
    \end{align*}
\end{lemma}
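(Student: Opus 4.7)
The plan is to diagonalize both sides in the Fourier basis. Expand $\psi = \sum_{\|k\|_2 \le K} c_k\,\omega_k$, so $U\psi = \sum_{\|k\|_2 \le K} c_k\,|p_k\rangle$ and $\|U\psi\|^2 = \sum_k|c_k|^2 = \|\psi\|^2$. The first step is to verify that the kinetic pieces agree exactly: every $k$ appearing in the expansion satisfies $\|k\|_\infty \le \|k\|_2 \le K \le N/4$, so $k \in \mathcal N^n$ and $|p_k\rangle$ is an eigenvector of $K_Q$ with eigenvalue $4\pi^2\|k\|_2^2/L^2$, matching the eigenvalue of $-\Delta_{\mathbb T^n}$ on $\omega_k$. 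Hence $\langle\psi|(-\Delta)|\psi\rangle = \langle U\psi|K_Q|U\psi\rangle$, and the task reduces to bounding the potential-energy discrepancy $|\langle\psi|V|\psi\rangle - \langle U\psi|V_Q|U\psi\rangle|$.

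For this discrepancy, I would compute matrix elements explicitly. Set $\widehat V_m := L^{-n}\int_{\mathbb T^n} V(x)\,e^{-i2\pi m\cdot x/L}\,dx$ and $\widetilde V_m := N^{-n}\sum_{y\in\mathcal N^n} V(yL/N)\,e^{-i 2\pi m\cdot y/N}$. A direct calculation gives $\langle\omega_{k'}|V|\omega_k\rangle = \widehat V_{k'-k}$ and $\langle p_{k'}|V_Q|p_k\rangle = \widetilde V_{k'-k}$, while the Poisson summation formula yields the aliasing identity $\widetilde V_m = \sum_{j\in\mathbb Z^n}\widehat V_{m+jN}$. The potential-energy discrepancy then becomes the quadratic form $\sum_{k,k'}\overline{c_{k'}}c_k\,(\widehat V_{k-k'}-\widetilde V_{k-k'})$. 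Since the kernel depends only on $k-k'$ (Toeplitz structure), Schur's test bounds its operator norm on the truncated coefficient space by the $\ell^1$ sum $\sum_{\|m\|_\infty \le 2K}\bigl|\sum_{j\ne 0}\widehat V_{m+jN}\bigr|$, giving an energy-difference bound linear in $\|\psi\|^2$.

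The crucial estimate is controlling $|\widehat V_{m+jN}|$ for $j\ne 0$ using the smoothness factor. Integrating by parts $2n$ times in the coordinate $i$ that maximizes $|m_i + j_iN|$ and invoking the hypothesis $\|\partial_i^{2n}V\|_\infty \le p(2n)^{2n}$ gives $|\widehat V_{m+jN}| \le (L\,p(2n)/(2\pi\|m+jN\|_\infty))^{2n}$. For $\|m\|_\infty \le 2K < N/2$ and $\|j\|_\infty = J \ge 1$, the triangle inequality yields $\|m+jN\|_\infty \ge (J-1/2)N$, and the number of $j$ at $\ell^\infty$-distance $J$ is $O(n\,J^{n-1})$, so the aliasing sum is dominated geometrically by the $J=1$ terms. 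The assumption $N \ge 16(L\,p(2n)/2\pi)^2/e^{1/n}$ is precisely calibrated so that $(L\,p(2n)/(\pi N))^{2n}$, multiplied by the lattice-counting factors, falls below $\epsilon$.

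The main obstacle is the combinatorics of the multi-dimensional aliasing sum: the $O((4K)^n)$ admissible differences $m$ and the $O(n\,J^{n-1})$ lattice sites at $\ell^\infty$-distance $J$ must both be absorbed by the $(J-1/2)^{-2n}$ decay obtained from integration by parts. The exponent $2n$ is chosen precisely to cancel the $n$-dimensional volume growth (which is why the hypothesis involves $p(2n)$ rather than $p$ evaluated at some other integer), and the specific quadratic-in-$Lp(2n)$ form of the condition $N \ge 16(Lp(2n)/2\pi)^2/e^{1/n}$ is exactly the threshold at which this cancellation becomes favorable after the $e^{1/n}$ slack is spent on the $(4K)^n$ counting factor.
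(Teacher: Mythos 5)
Your proposal is correct, and its analytic core is the same as the paper's: expand in the Fourier bases, note the kinetic terms agree exactly, reduce everything to the discrepancy between the continuous Fourier coefficients $\widehat V_m$ and their discrete Riemann-sum counterparts, and kill that discrepancy by integrating by parts $2n$ times using $|\partial_j^{2n}V|\le p(2n)^{2n}$, which is exactly the paper's choice $m=2n$. Where you genuinely diverge is the reduction of the quadratic form: you use the Poisson-summation aliasing identity $\widetilde V_m-\widehat V_m=\sum_{j\ne 0}\widehat V_{m+jN}$ and Schur's test on the Toeplitz kernel, so you must sum the aliasing error over all $O((4K)^n)$ differences $\|m\|_\infty\le 2K$; the paper instead splits $V=V^\downarrow+V^\uparrow$ at frequency $K$, shows the low-pass Riemann sum reproduces $\widehat V_m$ exactly (this is where $N\ge 3K+1$ enters), and bounds only the worst single difference by $\sum_{\|l\|_\infty>K}|\widehat V_l|$ after a Cauchy--Schwarz step that silently discards the factor $\sum_{k,k'}|c_k||c_{k'}|$. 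Your route is actually the more careful one at that step (the discarded factor can be as large as the number of retained modes), and the extra $(4K+1)^n\le (N+1)^n$ counting is indeed absorbed because each aliased coefficient decays like $N^{-2n}$ with $N\ge 4K$, so the two arguments land on the same condition up to constants. One point to tighten: the displayed hypothesis reads $1/e^{1/n}$, but—as the paper's own proof shows by taking $K=\Theta\bigl((Lp(2n)/2\pi)^2\,\epsilon^{-1/n}\bigr)$—this is a typo for $1/\epsilon^{1/n}$; your closing sentence about the condition being ``calibrated'' should say explicitly that it is the $\epsilon^{-1/n}$ per-dimension factor in $K$ (hence in $N$), raised to the $n$th power, that produces the final factor of $\epsilon$, with the constant $16$ and the $3^{n-1}$ lattice-counting factor absorbed into a prefactor at most $1$; taken literally with Euler's $e$ the bound could not depend on $\epsilon$ at all.
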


\begin{proof}
For a normalized state $\psi = \sum_{k: \|k\|_2 \le K} c_k \omega_k \in \widetilde{\mathcal D}_{\mathbb T^n,f} $ and $|\phi\rangle :=U\psi=  \sum_{k: \|k\|_2 \le K} c_k |p_k\rangle \in \widetilde{\mathcal D}_Q$, we prove that
\begin{align*}
    |h[\psi]  - h_Q[\phi]| \le \epsilon.
\end{align*}

We first compute 
\begin{align*}
    h_{{\mathbb T}}[\psi] 
    \ &= \ 
    \langle \psi|(-\Delta) | \psi \rangle  + \langle \psi | V|\psi\rangle  \\
    \ &= \ 
    \sum_{k:\|k\|_2\le K} |c_k|^2 \frac{4\pi^2 \|k\|_2^2}{L^2} + \sum_{\|k\|,\|k'\|\le K} \overline{c_{k'}} c_k \langle \omega_{k'}|V| \omega_k\rangle \\
    \ &= \ 
    \sum_{k:\|k\|_2\le K} |c_k|^2 \frac{4\pi^2 \|k\|_2^2}{L^2} + \sum_{\|k\|,\|k'\|\le K} \overline{c_{k'}} c_k  \int_{x\in\mathbb T^n} \frac{V(x) e^{i 2\pi (k - k')\cdot x/L}}{L^n}    \  \text{ d} x \\
        \ &= \ 
    \sum_{k:\|k\|_2\le K} |c_k|^2 \frac{4\pi^2 \|k\|_2^2}{L^2} + \sum_{\|k\|,\|k'\|\le K} \overline{c_{k'}} c_k  \widehat{V}_{k' - k}    ,
\end{align*}
and similarly,
\begin{align*}
    h_{Q}[\phi] 
    \ &= \ 
    \langle \phi|K_Q | \phi \rangle  + \langle \phi | V_Q|\phi\rangle  \\
    \ &= \ 
    \sum_{k:\|k\|_2\le K} |c_k|^2 \frac{4\pi^2 \|k\|_2^2}{L^2} + \sum_{\|k\|,\|k'\|\le K} \overline{c_{k'}} c_k \langle p_{k'}|V_Q| p_k\rangle \\
    \ &= \ 
    \sum_{k:\|k\|_2\le K} |c_k|^2 \frac{4\pi^2 \|k\|_2^2}{L^2} + \sum_{\|k\|,\|k'\|\le K} \overline{c_{k'}} c_k  \sum_{y\in\mathbb [N]^n} \frac{V(yL/N) e^{i 2\pi (k - k')\cdot y/N}}{N^n}  .  
    \end{align*}
Therefore, 
\begin{align}
    |h[\psi] - h_Q[\phi]|
    & \ = \  \left |\sum_{\|k\|,\|k'\|\le K} \overline{c_{k'}} c_k  \bigg( \widehat{V}_{k' - k}   -  \sum_{y\in\mathbb [N]^n} \frac{V(yL/N) e^{i 2\pi (k - k')\cdot y/N}}{N^n} \bigg) \right|
    \nonumber
    \\
    & \ \le \  \sum_{\|k\|,\|k'\|\le K} |c_{k'}| |c_k|  \max_{\|k\|,\|k'\| \le K} \left| \widehat{V}_{k' - k}   -  \sum_{y\in\mathbb [N]^n}  \frac{V(yL/N) e^{i 2\pi (k - k')\cdot y/N}}{N^n} \right|  
    \nonumber
    \\
    & \ \le \    \max_{\|k\|,\|k'\| \le K} \left| \widehat{V}_{k' - k}   -  \sum_{y\in\mathbb [N]^n}  \frac{V(yL/N) e^{i 2\pi (k - k')\cdot y/N}}{N^n} \right| 
    \label{eq:torus_qubit_fourier_diff1}
    \\
    & \ \le \    \max_{\|k\| \le 2K} \left| \widehat{V}_{k}   -  \sum_{y\in\mathbb [N]^n}  \frac{V(yL/N) e^{-i 2\pi k\cdot y/N}}{N^n} \right| 
     \label{eq:torus_qubit_fourier_diff2}
    \\
    & \ = \    \max_{\|k\|_2 \le 2K} \left|
    \int_{\mathbb T^n} 
    \frac{V (x)e^{-i2\pi k\cdot x /L} }{L^n}  \ \text d x
    -  \sum_{y\in\mathbb [N]^n}  \frac{V(yL/N) e^{-i 2\pi k\cdot y/N}}{N^n} \right|  \nonumber
\end{align}
where \eqref{eq:torus_qubit_fourier_diff1} is by the Cauchy-Schwarz inequality on $|c_k|$, 
and  \eqref{eq:torus_qubit_fourier_diff2} is by the change of variables $k'-k\rightarrow k.$
The expression in the last line 
is the maximum discrepancy between the integral average and the corresponding Riemann sum average of $V(x)e^{-i2\pi k \cdot x /L} $.
The key observation for bounding this quantity is that the discrepancy vanishes whenever the function has no high-frequency Fourier components.
We consider the decomposition $V = V^\downarrow +  V^\uparrow $, where
\begin{align*}
    V ^\downarrow(x) = \sum_{l: \|l\|_\infty \le K} \widehat {V}_l e^{i2\pi l\cdot x/L}, \\ 
    V ^\uparrow(x) = \sum_{l: \|l\|_\infty > K} \widehat {V}_l e^{i2\pi l\cdot x/L}.
\end{align*}
Then for $k $ such that $\|k\|_2 \le 2K$,
\begin{align}
    \sum_{y\in\mathbb [N]^n} \frac{V^\downarrow (yL/N) e^{-i2\pi k\cdot y /N}}{N^n}
    \ &= \ 
    \sum_{l: \|l\|_\infty \le K } \sum_{y\in\mathbb [N]^n}  \frac{\widehat{V}_l e^{i2\pi l\cdot y/N} e^{-i2\pi k\cdot y /N}}{N^n} 
    \nonumber
    \\
     \ &= \ 
    \sum_{l: \|l\|_\infty \le K } \widehat{V}_l \sum_{y\in\mathbb [N]^n}  \frac{ e^{i2\pi (l-k)\cdot y/N} }{N^n} 
    \nonumber
    \\
     \ &= \
   \sum_{\substack{l:\|l\|_\infty \le K \\ l = k \mod N} } \widehat{V}_k. \nonumber 
    \\
      \ &= \ 
    \widehat{V}_k. \nonumber 
\end{align}
The last equation holds because we enforce $N\ge3K+1$. 
Since $\|k \| _\infty \le \|k\|_2 \le 2K$ and $\|l\|_\infty \le K$, we have $\|k - l\|_\infty\le \|k\| _\infty + \|l\|_\infty \le 3K.$
Therefore, $k=l \mod N$ if and only if $k=l \in \mathbb Z^n$.

We conclude that 
\begin{align*}
     |h[\psi] - h_Q[\phi]| 
         & \ \le \    \max_{\|k\| \le 2K} \left| \widehat{V}_{k}   -  \sum_{y\in\mathbb [N]^n}  \frac{V(y L/N) e^{-i 2\pi k\cdot y/N}}{N^n} \right|  \\
         & \ = \    \max_{\|k\| \le 2K} \left| \widehat{V}_{k}   -  \sum_{y\in\mathbb [N]^n}  \frac{V^\downarrow(yL/N) e^{-i 2\pi k\cdot y/N}}{N^n}  -   \sum_{y\in\mathbb [N]^n}  \frac{ V^\uparrow(yL/N)  e^{-i 2\pi k\cdot y/N}}{N^n}   \right| 
         \\
         & \ = \    \max_{\|k\| \le 2K} \left|  \sum_{y\in\mathbb [N]^n}  \frac{ V^\uparrow(yL/N)  e^{-i 2\pi k\cdot y/N}}{N^n}   \right| 
         \\
         & \ \le \    \max_{\|k\| \le 2K}   \sum_{y\in\mathbb [N]^n}  \frac{ |V^\uparrow(yL/N) | |e^{-i 2\pi k\cdot y/N}|} {N^n}    
         \\
         & \ = \    \max_{y\in[N]^n} V^\uparrow(yL/N)  \  \le\   \max_{x\in[0,L]^n} V^\uparrow(x)  
         \\
          & \ = \    \max_{x\in[0,L]^n} \sum_{l:\|l\|_\infty >K}  \widehat{V}_k e^{i2\pi k\cdot x/L} \\
          &\  \le \ \sum_{l:\|l\|_\infty>K}  |\widehat{V}_k|.
\end{align*}

We write $x = (x_1,\dots,x_n) = (x_j, x_{ j*}) \in [0,L]^n$, $l = (l_1,\dots,l_n) = (l_j, l_{j*}) \in \mathbb Z^n$ where $x_{ j*} := (x_1,\dots, x_{j-1},x_{j+1},\dots, x_n)$, and $l_{j*}$ is defined similarly. 
Then 
\begin{align}
    \widehat V_l 
    &= 
    \frac{1}{L^n} \int_{x\in\mathbb [0,L]^n} V(x) e^{-i2\pi l\cdot x/L} \ \text d x \nonumber \\
        & = 
        \frac{1}{L^n}\int_{ x_{j*} \in [0,L]^{n-1} } \ e^{-i2\pi l_{j*}  \cdot \  x_{j*} /L}  \int_{x_j \in [0,L]} V(x_j , x_{j*}) e^{-i2\pi l_j x_j /L}  \ \text d x_j  \ \text d x_{j*}  \label{eq:vl_integration} 
\end{align} 
Fubini's theorem lets us integrate the whole expression first in the variable $x_j$ and then in the other variables sequentially.
By repeatedly integrating by parts in $x_j$ and assuming $l_j \ne 0,$ we have, with the notation $\partial_j^m:=\frac{\partial^m}{\partial x_j^m}$, 
\begin{align*}
     &
     \int_{0}^L V(x_j , x_{j*}) e^{-i2\pi l_j x_j /L}  \ \text d x_j 
    \\ 
    =
    & \frac{L}{-i2\pi l_j} V(x_j, x_{j*})e^{-2\pi l_j x_j/L} \Bigg|_0^L 
    - \frac{L}{-i2\pi l_j}\int_{0}^L e^{-i2\pi l_j x_j /L}   {\partial_j V(x_j,x_{j*})}  \ \text d x_j
    \\
    =
    &  \frac{L}{i2\pi l_j}\int_{0}^L  e^{-i2\pi l_j x_j /L} \ {\partial_j V(x_j,x_{j*})} \   \text d x_j
    \\
    =
    &  \left(\frac{L}{i2\pi l_j} \right)^2 \int_{0}^L   e^{-i2\pi l_j x_j /L} \   {\partial_j^2 V (x_j,x_{j*} )} \ \text d x_j 
    \\
    &\vdots
    \\
    = &
    \left(\frac{L}{i2\pi l_j} \right)^m \int_{0}^L   e^{-i2\pi l_j x_j /L} \ \partial_j^m V(x_j,x_{j*}) \  \text d x_j 
\end{align*}
Plugging the result into \eqref{eq:vl_integration},
we get 
\begin{align*}
    \widehat V_l 
    &
    \ = \ 
    \frac{1}{L^n} \left(\frac{L}{i2\pi l_j} \right)^m 
    \int_{ x \in [0,L]^n } 
     e^{-i2\pi l\cdot  x /L} \ {\partial^m_j V(x)} \  \text d x 
    \end{align*}
 for all $m \in \mathbb N$.    
Since $j\in [n]$ is arbitrary, we take $j$ such that $l_j = \|l\|_\infty$ 
and obtain
\begin{align*}
   | \widehat V_l |
    &
    \ \le \ 
    \frac{1}{L^n} \left(\frac{L}{2\pi \|l\|_\infty} \right)^m 
    \int_{ x \in [0,L]^n } 
      | {\partial^m_j V(x)} | \   \text d x
      \\
      &
      \ \le \
 \left(\frac{L}{2\pi \|l\|_\infty} \right)^m 
     \max_{x\in [0,L]^n} | {\partial^m_j V(x)} | 
 \\
      &
      \ \le \
 \left(\frac{L}{2\pi \|l\|_\infty} \right)^m  V_{\max} ^{(m)}
    \end{align*}
 for all $m \in \mathbb N$
where $V^{(m)}_{\max} := \max \{|\partial_j^m V(x)| \ | \  x\in \mathbb T^n, j\in[n]\}$.
Assuming $m\ge n+1$, we have
\begin{align*}
\sum_{l:\|l\|_\infty > K} |\widehat V_l| &\le  V^{(m)}_\text{max} \sum_{\|l\|_\infty > K}   \left(\frac{L}{2\pi \|l\|_{\infty}}\right)^m 
\\
&=  V^{(m)}_\text{max} \sum_{t= K+1}^\infty \sum_{l: \|l\|_\infty = t}   \left(\frac{L}{2\pi \|l\|_{\infty}}\right)^m 
\\
&
\le   V^{(m)}_\text{max}     \sum_{t = K+1}^\infty 2n(2t+1)^{n-1} \left(\frac{L}{2\pi t}\right)^m 
\\
&
\le 2n\cdot  4^{n-1} V^{(m)}_\text{max}   \left(\frac{L}{2\pi }\right)^m  \sum_{t =K+1}^\infty  \frac {1}{t^{m-n+1 }} 
\\
&
\le   4^{n}n V^{(m)}_\text{max}   \left(\frac{L}{2\pi }\right)^m  \frac {1}{(m-n) {K}^{m-n }}, 
\end{align*}
where the last inequality is by integration over $t \in [K,\infty)$. 
We set $m =2n$.
We want the last quantity to be less than $\epsilon,$ or equivalently
\begin{align*}
    n\log 4  + \log V^{(2n)}_{\max} + m \log \left( \frac{L}{2\pi} \right) - n\log K \le \log \epsilon.
\end{align*}
This inequality is satisfied by choosing $K$ so that
\begin{align*}
  &  2 \frac{\log V^{(2n)} _{\max}}{2n}  + \log 4 + 2\log\left(\frac{L}{2\pi}\right) + \log \frac{1}{\epsilon^{1/n}} \\
 \  \le \  &2 \log_\partial(V, \mathbb T^n, 2n)  + \log 4 + 2\log\left(\frac{L}{2\pi}\right) + \log \frac{1}{\epsilon^{1/n}}  \\
 \  \le \  &2 \log p(2n)  + \log 4 + 2\log\left(\frac{L}{2\pi}\right) + \log \frac{1}{\epsilon^{1/n}}  \\ 
   \ \le \ &\log K. 
\end{align*}
Therefore, we set $K =\Theta\left(\left( p(2n)  \frac{L}{2\pi}\right)^2 \frac{1}{\epsilon^{1/n}} \right)$ to get the desired bound
\end{proof}

\section{Omitted proofs: smoothness}\label{sec:smoothness}

The following lemmas state that having a logarithmic smoothness factor is closed under sum and composition.

\begin{lemma}\label{lem:smoothness_properties}
    Let $f,f_k:\mathbb  T^n \rightarrow\mathbb R$ be smooth for $k\in[r]$, and $c>0$. 
    Then we have
    \begin{align*}
       \log_\partial( cf, \mathbb T^n, m)
      &\  \le \   (\log c)^+ \ + \     \log_\partial( f, \mathbb T^n, m) ,
      \\
       \log_\partial(\sum_{i\in[r]} f_i, \mathbb T^n, m)
      &\  \le \   \log r \ + \  \sum_{i\in[r]}     \log_\partial( f_i, \mathbb T^n, m) .
    \end{align*}
\end{lemma}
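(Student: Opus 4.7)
The plan is to verify both inequalities by direct computation, using only the linearity of differentiation, the triangle inequality, and two elementary facts about $(\cdot)^+$: namely that $(a+b)^+ \le a^+ + b^+$ and that $a \le a^+$ for all real $a$. No analytic input is needed beyond smoothness of $f$ and the $f_k$, which merely ensures the partial derivatives $\partial^l_j$ exist.

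For the scalar multiplication inequality, I would start from the identity $\partial^l_j(cf) = c\,\partial^l_j f$, take $\log$ of the absolute value to obtain $\log|\partial^l_j(cf)(x)| = \log c + \log|\partial^l_j f(x)|$, and then apply the subadditivity of $(\cdot)^+$ to get $(\log|\partial^l_j(cf)(x)|)^+ \le (\log c)^+ + (\log|\partial^l_j f(x)|)^+$. Dividing by $l \ge 1$ and using that $(\log c)^+ \ge 0$ gives $\frac{(\log c)^+}{l} \le (\log c)^+$, while the remaining term is bounded by $\log_\partial(f,\mathbb{T}^n,m)$ by definition. Taking the max over $l\in[m]$, $j\in[n]$, and $x\in\mathbb{T}^n$ yields the desired bound.

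For the summation inequality, the key step is the crude bound $|\partial^l_j(\sum_{i\in[r]} f_i)(x)| \le \sum_{i\in[r]} |\partial^l_j f_i(x)| \le r\,\max_{i\in[r]} |\partial^l_j f_i(x)|$. If the left-hand side is $\ge 1$, taking $\log$ gives $\log|\partial^l_j(\sum_i f_i)(x)| \le \log r + \max_i \log|\partial^l_j f_i(x)| \le \log r + \max_i (\log|\partial^l_j f_i(x)|)^+$; otherwise the $(\cdot)^+$ is zero and the bound holds trivially. Dividing by $l \ge 1$, using $\log r / l \le \log r$ (valid since $r \ge 1$), and replacing the max over $i$ by a sum of nonnegative terms gives
\[
\frac{(\log|\partial^l_j(\textstyle\sum_i f_i)(x)|)^+}{l} \le \log r + \sum_{i\in[r]} \frac{(\log|\partial^l_j f_i(x)|)^+}{l} \le \log r + \sum_{i\in[r]} \log_\partial(f_i,\mathbb{T}^n,m).
\]
Taking the max over $l, j, x$ yields the claim.

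There is no genuine obstacle here; the only mild care needed is in handling the positive-part operator and in verifying that dividing by $l \ge 1$ preserves each bound. Both inequalities follow from purely algebraic manipulation of the definition of $\log_\partial$.
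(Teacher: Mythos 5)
Your proposal is correct and follows essentially the same route as the paper: bound $\partial^l_j(cf)$ and $\partial^l_j(\sum_i f_i)$ by linearity and the crude estimate $\sum_i|\partial^l_j f_i|\le r\max_i|\partial^l_j f_i|$, split the logarithm using subadditivity of $(\cdot)^+$, and use $l\ge 1$ to absorb the $(\log c)^+$ and $\log r$ terms before maximizing over $l,j,x$. The paper in fact ends with the slightly sharper bound $\log r + \max_i \log_\partial(f_i,\mathbb T^n,m)$, which your replacement of the max by the sum of nonnegative terms recovers as stated in the lemma.
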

\begin{proof}
The proof is by manipulating the maximums:
\begin{align*}
           \log_\partial( cf, \mathbb T^n, m)
           =
    &\max_{l\in[m], j\in[n],x \in\mathbb T^n} \frac{(\log c| \partial^l_j  f(x)|)^+}{l }
    \\
     = &
    \max_{l\in[m], j\in[n],x \in\mathbb T^n} \frac{(\log c + \log | \partial^l_j  f(x)|)^+}{l }
\\
     \le & (\log c)^+ + 
    \max_{l\in[m], j\in[n],x \in\mathbb T^n} \frac{(  \log | \partial^l_j  f(x)|)^+}{l }
=  (\log c)^+ + \log_\partial ( f, \mathbb T^n, m),
\end{align*}
and
\begin{align*}
     \log_\partial( \sum_{k \in[r]} f_k, \mathbb T^n, m)
     \ =\
    &\max_{l\in[m], j\in[n],x \in\mathbb T^n} \frac{(\log | \sum_{k\in [r]} \partial^l_j f_k(x)|)^+}{l }
    \\
    \ \le \ &
    \max_{l\in[m], j\in[n],x \in\mathbb T^n} \frac{(\log (\sum_{k\in [r]} | \partial^l_j  f_k (x)|    )^+  }{l }
    \\
    \  \le \ &
        \max_{l\in[m], j\in[n],x \in\mathbb T^n} \frac{(\log ( r\max_{k\in[r]} | \partial^l_j  f_k (x)|  ) )^+  }{l }
        \\
       \  \le \ &
        \max_{l\in[m], j\in[n],x \in\mathbb T^n} \frac{(\log ( \max_{k\in[r]} | \partial^l_j  f_k (x)|  ) )^+ + \log r }{l }
        \\
      \  \le \ & 
        \log r +  \max_{l\in[m], j\in[n],x \in\mathbb T^n} \frac{\max_{k\in[r]}  (\log  | \partial^l_j  f_k (x)| )^+   }{l }
        \\
      \  \le \ &
       \log r + \max_{k\in[r]} \max_{l\in[m], j\in[n],x \in\mathbb T^n} \frac{(\log  | \partial^l_j  f_k (x)| )^+  }{l }
        \\
       = 
       &
       \log r + \max_{k\in[r]} \log_\partial(f_k, \mathbb T, m) .  
\end{align*}

\end{proof}

\begin{lemma}\label{lem:smoothness_compo}
    Let $g:\mathbb  T^n \rightarrow\mathbb R$ and $f:g(\mathbb T^n) \rightarrow \mathbb R$  be smooth, where $g(\mathbb T^n)$ is the range of $g$. 
    Then we have
    \begin{align*}
       \log_\partial( f\circ g, \mathbb T^n, m)
      \  \le \   2 \log m \ + \  \log_\partial(f, g(\mathbb T^n), m)   \ +  \ 
       \log_\partial  (g, \mathbb T^n, m). 
    \end{align*}
\end{lemma}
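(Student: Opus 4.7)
The plan is to apply Faà di Bruno's formula to the partial derivative $\partial_j^l(f\circ g)$ in a single coordinate and then bound each factor and the number of terms using the hypotheses on $\log_\partial$. Fix $l\in[m]$, $j\in[n]$, $x\in\mathbb T^n$, and view $f\circ g$ as a function of $x_j$ alone (other coordinates held fixed). Then the classical (single-variable) Faà di Bruno identity gives
\[
\partial_j^l(f\circ g)(x)
\;=\;
\sum_{\pi \,\vdash\, [l]}
\;f^{(|\pi|)}\!\bigl(g(x)\bigr)\,
\prod_{B\in\pi} \partial_j^{|B|} g(x),
\]
where the sum ranges over set partitions $\pi$ of $\{1,\dots,l\}$ and $|B|$ denotes the size of a block.

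Next, I will bound each factor. Write $L_f:=\log_\partial(f,g(\mathbb T^n),m)$ and $L_g:=\log_\partial(g,\mathbb T^n,m)$. For every $k\in[m]$, the definition of $\log_\partial$ yields $|f^{(k)}(y)|\le e^{kL_f}$ and $|\partial_j^k g(x)|\le e^{kL_g}$ (this holds trivially when the derivative has absolute value less than $1$, because the $(\cdot)^+$ in the definition is nonnegative and $L_f,L_g\ge 0$). Since in any partition $\pi$ of $[l]$ the number of blocks $|\pi|$ and each block size $|B|$ are at most $l\le m$, these bounds apply uniformly. Multiplying,
\[
\prod_{B\in\pi}|\partial_j^{|B|} g(x)|\;\le\;\prod_{B\in\pi} e^{|B|L_g}\;=\;e^{lL_g},
\qquad
|f^{(|\pi|)}(g(x))|\;\le\;e^{|\pi|L_f}\;\le\;e^{lL_f},
\]
so every summand in the Faà di Bruno expansion is at most $e^{l(L_f+L_g)}$.

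Finally, the number of set partitions of $[l]$ is the Bell number $B_l$, which satisfies $B_l\le l!\le l^l\le m^l$. Combining, $|\partial_j^l(f\circ g)(x)|\le m^l\,e^{l(L_f+L_g)}$, whence
\[
\frac{\bigl(\log|\partial_j^l(f\circ g)(x)|\bigr)^+}{l}
\;\le\;\log m + L_f + L_g
\;\le\;2\log m + L_f + L_g.
\]
Taking the maximum over $l\in[m]$, $j\in[n]$, $x\in\mathbb T^n$ yields the stated inequality, with slack $\log m$ to spare. There is no real obstacle here; the only minor subtlety is handling the $(\cdot)^+$ truncation, which, as noted, only makes the inequality easier because the right-hand side is nonnegative. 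A cleaner alternative would be to use the explicit multinomial form of Faà di Bruno's formula (summing over tuples $(m_1,\dots,m_l)$ with $\sum i\,m_i=l$) and crudely bound the multinomial coefficient by $l!$ and the number of index tuples by $2^l\le m^l$; this gives the same conclusion and avoids explicitly invoking Bell numbers.
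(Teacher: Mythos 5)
Your proof is correct and follows essentially the same route as the paper: apply Fa\`a di Bruno's formula to $\partial_j^l(f\circ g)$ in a single coordinate, bound each factor via the definition of $\log_\partial$ (your $|f^{(k)}|\le e^{kL_f}$, $|\partial_j^k g|\le e^{kL_g}$ is the exponentiated form of the paper's weighted-average step), and bound the number of terms crudely by $m^l$. Your set-partition bookkeeping with the Bell-number bound $B_l\le l!\le m^l$ even yields the slightly sharper constant $\log m$ in place of the paper's $2\log m$, which comes from bounding both the multinomial coefficients and the number of index tuples by $l^l$.
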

\begin{proof}
Fa\`a di Bruno's formula gives $\ \forall x \in \mathbb T^n$ and  $\forall  l \in \mathbb N,$ 
\begin{align*}
    \partial_j^l f(g(x))
    &=
    \sum_{} {\frac {l!}{a_{1}!\,a_{2}!\,\cdots \,a_{l}!}}\cdot f^{(a_{1}+\cdots +a_{l})}(g(x))\cdot \prod _{t=1}^{l}\left({\frac {\partial_j^t g^{}(x)}{t!}}\right)^{a_{t}},
\end{align*}
    and therefore
    \begin{align*}
    |\partial_j^l f(g(x))|
        &\le
    \sum_{} {\frac {l!}{a_{1}!\,a_{2}!\,\cdots \,a_{l}!}} 
\left| f^{(a_{1}+\cdots +a_{l})}(g(x)) \right|
\prod _{t=1}^{l} 
\left|{\frac {\partial_j^t g^{}(x)}{t!}}\right|^{a_{t}},
\end{align*}
where the sum is over $(a_1,\cdots, a_l)\in\mathbb Z_{\ge 0}^l$ such that $\sum_{t=1}^l  t\cdot a_t = l$.
There are at most $l^l$ such $l$-tuples. 
The first factor is at most $l^l$. 
The second factor is at most $\max_{} \{\ | f^{(l')} (y) | \  |  \ l'\in [l] , y\in g(\mathbb T^n)\}$. 
The last factor is at most $\prod_{t=1}^l  | \partial_j^t g(x) |^{a_t}$.
Therefore, we have
\begin{align*}
|\partial^l _j f(g(x)) | &\le l^{2l}   \max_{\substack{ y \in g(\mathbb T^n),\ l' \in[l]}} |f^{(l')} (y)|  \prod_{t=1}^l  | \partial_j^t g(x) |^{a_t}.
\end{align*}
By taking the log of both sides,
\begin{align}
       \log |\partial^l _j f(g(x)) | &\le 2l \log l  + \log  \max_{\substack{ y \in g(\mathbb T^n),\ l' \in[l]}} |f^{(l')} (y)| +  \sum_{t =1}^m \log    | \partial_j^t g(x) |^{a_t}
     \nonumber  \\
       &
       = 2l \log l  +  \max_{\substack{ y \in g(\mathbb T^n),\ l' \in[l]}} \log  |f^{(l')} (y)| +  \sum_{t =1}^m \log    | \partial_j^t g(x) |^{a_t}. \label{eq:logpartial_composition_main_eq}
\end{align}
We bound the second term of \eqref{eq:logpartial_composition_main_eq} as follows: 
\begin{align}
      \ \max_{\substack{ y \in g(\mathbb T^n),\ l' \in[l]}} \log  |f^{(l')} (y)|  \nonumber 
 \  \le &  \
      \max _{\ l' \in[l]} \max_{\substack{ y \in g(\mathbb T^n)}} (\log  |f^{(l')} (y)|)^+ \nonumber
    \\
    =  & \  
    \max_{l' \in [l] } \ l' \   \max_{\substack{ y \in g(\mathbb T^n)}}  \frac{(\log  |f^{(l')} (y)|)^+}{l'}   \nonumber
    \\
    \le  & \  
    \left( \max_{l'' \in [l] } \ l''\right) \ \left(  \max_{\substack{ y \in g(\mathbb T^n) , \ l'\in[l]}}  \frac{(\log  |f^{(l')} (y)|)^+}{l'}  \right)  \nonumber
    \\
    =
    &  \ l  \max_{\substack{ y \in g(\mathbb T^n),\ l' \in[l]}}  \frac{(\log  |f^{(l')} (y)|)^+}{l'}  
    \\
    =
    & \ l \log_\partial ( f, g(\mathbb T^n), l).
    \label{eq:logpartial_composition_f}
\end{align}
Since $\log$ is an increasing function and $\sum_{t=1}^l t \ a_t /l =1$, we bound the third term of \eqref{eq:logpartial_composition_main_eq} as follows:
\begin{align}
     &\ \sum_{t =1}^l \log    | \partial_j^t g(x) |^{a_t} \nonumber \\
     = & \ 
 l \sum_{t =1}^l \frac{t a_t}{l}  \frac{\log   | \partial_j^t g(x)| }{t} \nonumber \\
\le&
\ 
 l \max_{t\in[l]}  \frac{\log   | \partial_j^t g(x)| }{t} \nonumber
  \\
\le &
\ 
 l \max_{t\in[l],y \in \mathbb T^n}  \frac{\log   | \partial_j^t g(y)| }{t} \  \le \ l  \log_\partial (g, \mathbb T^n, l) \label{eq:logpartial_composition_g}
  .
\end{align}
By inserting \eqref{eq:logpartial_composition_f} and \eqref{eq:logpartial_composition_g} into \eqref{eq:logpartial_composition_main_eq} and dividing by $l$,
we get 
\begin{align*}
   \frac{\log|\partial _j^l f(g(x))|}{l} \le 2 \log l + \log_\partial (f, g(\mathbb T^n), l) + \log_\partial (g, \mathbb T^n, l). 
\end{align*}
Since the RHS is always nonnegative, we have
\begin{align*}
   \frac{(\log|\partial _j^l f(g(x))|)^+}{l} \le 2 \log l + \log_\partial (f, g(\mathbb T^n), l) + \log_\partial (g, \mathbb T^n, l). 
\end{align*}
The inequality still holds if we take the maximum over $x \in \mathbb T^n$, $j\in[n]$, and $l \in [m]$ on both sides. Therefore,
\begin{align*}
\log_\partial(f\circ g, \mathbb T^n , m) & =  \\
     \max_{j\in[n],  l \in[m], x\in \mathbb T^n } \frac{(\log|\partial _j^l f(g(x))|)^+}{l} 
     &\le 2 \log m + \max_{l \in[m]} \log_\partial (f, g(\mathbb T^n), l) +  \max_{l \in[m]} \log_\partial (g, \mathbb T^n, l) \\
     &= 2\log m + \log_\partial (f, g(\mathbb T^n),m ) + \log_\partial (g, \mathbb T^n, m).
\end{align*}
\end{proof}

\begin{lemma}\label{lem:f_smoothness_factor}
We have 
\begin{align*}
    \log_\partial (\Bump, \mathbb R,m) &\le \log O(m^4), \\    
     \log_\partial (\Sat_{\alpha,\beta} \mathbb R,m), \log_\partial (\Cut_{\alpha,\beta}, \mathbb R,m),\log_\partial (\Bar_{\beta}, \mathbb R,m)     &\le  \log O(1/\beta + m^4). 
\end{align*}

\end{lemma}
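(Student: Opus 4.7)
My plan is to first establish a pointwise bound $\sup_x|\Bump^{(l)}(x)| \le (8l^4/e^2)^l$ on the bump derivatives and then push it through the integral definitions of $\Cut_{\alpha,\beta}$, $\Sat_{\alpha,\beta}$, and $\Bar_\beta$. For $\Bump = \exp(-u)$ on $(0,1)$ with $u(x) := 1/x + 1/(1-x) = 1/(x(1-x))$, the partial-fraction form gives the elementary bound $|u^{(i)}(x)| \le 2\cdot i!/d(x)^{i+1}$ where $d(x) := \min(x, 1-x)$. I plug this into Fa\`a di Bruno's formula $\Bump^{(l)}(x) = e^{-u(x)}\,B_l(-u'(x),\dots,-u^{(l)}(x))$ to express $|\Bump^{(l)}(x)|$ as a sum over partitions $\sum_i i\,a_i = l$ of terms of the form $\frac{l!}{\prod_i a_i!}\cdot 2^A\, d(x)^{-(l+A)}\, e^{-u(x)}$, where $A := \sum_i a_i \in [1,l]$. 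The key estimate is the joint bound $d(x)^{-(l+A)} e^{-u(x)} \le d(x)^{-(l+A)} e^{-1/d(x)} \le ((l+A)/e)^{l+A} \le (2l/e)^{2l}$, obtained by optimizing $t \mapsto t^{l+A} e^{-t}$ at $t = l+A$ after substituting $t = 1/d(x)$; summing over the at most $2^l$ partitions and absorbing multinomial weights into constants yields $|\Bump^{(l)}(x)| \le (8l^4/e^2)^l$, so $|\Bump^{(l)}|^{1/l} \le 8m^4/e^2$ for $l \in [m]$ and thus $\log_\partial(\Bump, \mathbb R, m) \le \log O(m^4)$.

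With the Bump estimate in hand, the other three bounds are short chain-rule computations. Differentiating under the integral sign gives $\Cut_{\alpha,\beta}'(x) = -\Bump((x-\alpha)/\beta)/(c_0 \beta)$ with $c_0 := \int_0^1 \Bump(u)\,du$ a positive universal constant, so by induction $\Cut_{\alpha,\beta}^{(l)}(x) = -\Bump^{(l-1)}((x-\alpha)/\beta)/(c_0\, \beta^l)$ for all $l \ge 1$. Plugging in the Bump bound and taking the $l$-th root, $|\Cut_{\alpha,\beta}^{(l)}(x)|^{1/l} \le O(l^4/\beta)$, uniformly in $x$ and in $l \in [m]$; since $\log(m^4/\beta) \le 2\log(m^4 + 1/\beta) + O(1)$ in either regime $\beta \lessgtr 1$, this yields $\log_\partial(\Cut_{\alpha,\beta}, \mathbb R, m) \le \log O(1/\beta + m^4)$. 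The functions $\Sat_{\alpha,\beta}$ and $\Bar_\beta$ are antiderivatives of $\Cut_{\alpha,\beta}$ and $1-\Cut_{0,\beta}$ respectively, so $\Sat_{\alpha,\beta}^{(l)} = \Cut_{\alpha,\beta}^{(l-1)}$ and $\Bar_\beta^{(l)} = -\Cut_{0,\beta}^{(l-1)}$ for $l \ge 1$, with the $l=1$ case using the trivial bounds $|\Cut|, |1-\Cut| \le 1$; this drops the derivative order by one, which can only improve the $l$-th root bound, so both functions inherit the Cut estimate.

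The main obstacle is the sharp Bump bound in the first step: a naive argument that separately controls $|\Bump(x)| \le 1$ and the Bell polynomial $B_l(u',\dots,u^{(l)})$ produces bounds of the form $(C\,l!)^l$, which is superexponential in $l\log l$ and yields at best a doubly-logarithmic smoothness factor, far too weak to be useful. The essential observation is that the exponential smallness $e^{-1/d(x)}$ of $\Bump$ near $\partial[0,1]$ must be used to \emph{cancel} the singularities $d(x)^{-(i+1)}$ of $u^{(i)}$, which requires combining the Fa\`a di Bruno expansion with the pointwise optimization $\sup_t t^k e^{-t} = (k/e)^k$ as a single joint estimate rather than two sequential ones. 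After that calibration is done correctly, the rest of the lemma is a mechanical chain-rule calculation with no new ideas.
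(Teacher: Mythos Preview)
Your proof is correct and rests on the same core estimate as the paper --- the calibration $\sup_{t>0} t^{k}e^{-t}=(k/e)^{k}$ that trades the exponential smallness of the bump near the boundary against the polynomial blow-up of the inner derivatives --- but you package the $\Bump$ bound differently. The paper factors $\Bump(x)=f_0(x)f_0(1-x)$ with $f_0(x)=e^{-1/x}$, shows by an explicit recursion that $f_0^{(j)}(x)=p_j(1/x)\,e^{-1/x}$ with $\deg p_j=2j$ and $\|p_j\|_c\le (2j)^j$, applies the $t^k e^{-t}$ bound to each monomial, and then combines the two factors via the Leibniz rule. You instead keep $\Bump=e^{-u}$ with $u=1/x+1/(1-x)$, bound $|u^{(i)}|\le 2\,i!/d^{\,i+1}$ from the partial fractions, and feed this directly into Fa\`a di Bruno, so that the joint factor $d^{-(l+A)}e^{-u}\le (2l/e)^{2l}$ falls out in one step without tracking any polynomial recursion. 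Your route is a bit more streamlined; the paper's factorization makes the polynomial-times-exponential structure of $f_0^{(j)}$ explicit and avoids the partition-counting at the end. For $\Cut$, $\Sat$, $\Bar$ the two arguments coincide. One small remark: your AM--GM step $\log(m^4/\beta)\le 2\log(m^4+1/\beta)$ loses a harmless factor of~$2$; the paper simply asserts the additive form without justification, and in the applications the multiplicative bound $\log O(m^4/\beta)$ is what is actually used, so this discrepancy is cosmetic.
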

\begin{proof}
The following is a well-known smooth function:
  $$
  f_0(x):= \begin{cases}
      \ \exp \left(-1/x\right) & x> 0
\\ \ 0 &x\le 0.
   \end{cases}
   $$
We use the fact that $\Bump (x) = f_0(x) f_0(1-x)$
and the Leibniz rule
\begin{align*}
    (fg)^{(m)}  = \sum_{j = 0}^m \begin{pmatrix}
    m \\ j
\end{pmatrix} f^{(j)}  g^{(m-j)}
\end{align*}
to get an upper bound
\begin{align}\label{eq:bump_function_nth_diff_summation}
   \max_{x\in \mathbb R} |\Bump^{(m)} (x) | \le   \sum_{j = 0}^m \begin{pmatrix}
    m \\ j
\end{pmatrix}  
\max_{x\in \mathbb R} | f_0^{(i)}(x)| \ 
\max_{x\in \mathbb R} | f_0^{(m-i)}(x) | .
\end{align}
We are left with upper bounding $|f_0^{(j)}|$. Note that for $x>0$, $f_0'(x) =(1/x)^2 e^{-1/x} = t^2 e^{-t}$, where $t:=1/x.$ 

We prove by induction that there exists a polynomial such that $f_0^{(j)}(x)  = p_j(t) e^{-t}$, for all $j \in \mathbb N$.
We showed above the statement for $j=1.$
Suppose the hypothesis is true for $j\in \mathbb N$. Then for $j+1$:
\begin{align*}
    f^{(j+1)}_0 (x)  = \frac{dt}{dx} \frac{d}{dt} (p_{j}(t) e^{-t}) =-t^2\left(p_j'(t) - p_j(t)\right) e^{-t}:=p_{j+1}(t)e^{-t}.
\end{align*}
Therefore, the induction hypothesis is true for all $j \in \mathbb N$.

The maximum of $t^j e^{-t}$, for $t>0$, is $(j/e)^j $ at the maximizer $t = j$. Observe that this quantity increases as a function of $j$ for $j\ge 1.$ For an arbitrary polynomial $p(t):=\sum_{j=0}^{\deg p} a_j t^j,$ define sum of the absolute values of its coefficients $\| p\|_c:= \sum_{j=0}^{\deg p} |a_j|$. Then, if $p$ does not have a constant term,
\begin{align*}
    \max_{t>0}| p_j(t)e^{-t}| \le \sum_{j = 1}^{\deg p_j} |a_j| (j/e)^j\le  \|p\|_c (\deg p /e)^{\deg p}.
\end{align*} Therefore, 
\begin{align*}
   \max_{t >0} \left| f_0^{(j)}(t) \right| \le \|p_j\| \left(\frac{\deg p_j}{e}\right)^{\deg p_j}.
\end{align*}
From the recursive formula $p_{j+1} (t) = t^2(p_j(t)  - p'_j(t))$ with the initial condition $p_1 = t^2 $, we see that $\deg p_j = 2j$ and $p_j$ is without a constant term, for all $j\in \mathbb N.$ From the recursive formula, we also have 
\begin{align*}
 \|p_j\|_c &\le \|t^2 p_j\|_c + \|t^2 p'_j\|_c  \\ 
 &\le 
 \|p_{j-1}\| + 2(j-1)\|p_{j-1}\| 
 \\& = (2j-1)\|p_{j-1}\| 
 \\ & \le (2j-1)(2j-3)\cdots 3\cdot 1\cdot \| p_{1}\|     
\end{align*}
Therefore, $\|p_j\|  <(2j)^j  $, and we have
\begin{align*}
   \max_{x \in\mathbb R} |f_0^{(j)}(x)| =   \max_{t>0} | p(t) e^{-t}| \le (2j^2/e)^j \le (2m^2/e)^m,
\end{align*}
for $0\le j\le m$.
Finally, we can plug this inequality into \eqref{eq:bump_function_nth_diff_summation} and use the binomial identity $\sum_{j=0}^m  \begin{pmatrix}
    m \\ j 
\end{pmatrix} = 2^m$ to get
$    \max_{x\in \mathbb R}|\Bump^{(m)}(x)| \le 2^m(4m^4/e^2)^m,
$
and therefore, 
\begin{align*}
    \log_\partial (\Bump , \mathbb R, m ) 
    &= \sup_{l\in[m], x \in\mathbb R   } \frac{(\log|\Bump^{(l)}(x)|)^+    }{l} 
    \\
    &\le \max_{l\in[m]}\frac{l\log 2 + l\log(4l^4/e^2) }{l}
    \\
    &= \log O(m^4).
    \end{align*}

Since $\Cut_{\alpha,\beta} '(x) = - \Bump(\frac{x - \alpha}{\beta})/\int_{\alpha}^{\alpha + \beta}\Bump(\frac{y-\alpha}{\beta})\text d y  $, we have
\begin{align*}
    \Cut_{\alpha, \beta}^{(m)} (x) =
    - \frac{1}{\beta^{m-1} } 
    \Bump^{(m-1)}
    \big(\frac{x-\alpha}{\beta} \big)  \bigg/ {\int_\alpha^{\alpha + \beta} \Bump\big((y-\alpha)/\beta \big) \  \text {d} y } .        
\end{align*}
Therefore,
\begin{align*}
    \log_\partial(\Cut _{\alpha,\beta},\mathbb R, m) = \sup_{l\in[m],x\in\mathbb R}\frac{( \log|\Cut^{(l)}_{\alpha, \beta}(x)| )^+ }{l}  
    =  \log O(1/\beta + m^4). 
\end{align*}
Since $\Sat_{\alpha, \beta}' = \Cut_{\alpha,\beta}$ and $\Bar_\epsilon' = 1- \Cut_{0,\epsilon}(x) $, 
we apply a similar analysis to obtain
\begin{align*}
    \log_\partial(\Sat_{\alpha,\beta},\mathbb R,m) = \log O(1/\beta + m^4),  \\
    \log_\partial(\Bar_{\epsilon},\mathbb R,m) = \log O(1/\epsilon + m^4).
\end{align*}
\end{proof}

\section{Omitted proofs: the drum problem}\label{sec:DL_appendix}

\begin{lemma}\label{lem:drum_hD-DeltaD}
Suppose $h^D$ is defined as above.
If $    \epsilon = O({\epsilon_0}/{n^2})$, then we have
\begin{align*}
    |\lambda_0(h^D) - \lambda_0(-\Delta^D_\Omega)| \le O(\epsilon_0).
\end{align*}
\end{lemma}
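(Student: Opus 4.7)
The plan is to sandwich $\lambda_0(h^D)$ between $\lambda_0(-\Delta^D_\Omega)$ (from above) and an appropriate scaling of the Dirichlet Laplacian eigenvalue (from below), and then show the sandwich has width $O(\epsilon n^2)$, so that choosing $\epsilon = O(\epsilon_0/n^2)$ finishes the job.

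For the upper bound $\lambda_0(h^D) \le \lambda_0(-\Delta^D_\Omega)$, I would first observe that $V$ vanishes on $\Omega$: indeed, for every $x \in \Omega$ and every $i \in [m]$ we have $a_i \cdot x - b_i \le 0$, so $\Bar_\epsilon(a_i\cdot x - b_i) = 0$ by the definition of $\Bar_\epsilon$. Hence $V|_\Omega \equiv 0$ and $-\Delta^D_\Omega + V|_\Omega = -\Delta^D_\Omega$. Since $\Omega \subset \Omega'$, the domain monotonicity Lemma~\ref{lem:domain_monotonicity} applied to $h$ on $\mathbb R^n$, $h^D$ on $\Omega'$, and $-\Delta^D_\Omega + V|_\Omega$ on $\Omega$ gives $\lambda_0(h^D) \le \lambda_0(-\Delta^D_\Omega)$.

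For the lower bound, I would use the potential comparison Lemma~\ref{lem:potential_comparison} (valid because $V \ge 0$) to get $\lambda_0(h^D) \ge \lambda_0(-\Delta^D_{\Omega'})$. Then I would use the standard scaling identity for the Dirichlet Laplacian: since $\Omega' = (1+\epsilon)\Omega$, pulling back an eigenfunction by the map $x\mapsto x/(1+\epsilon)$ and computing the Rayleigh quotient shows
\[
\lambda_0(-\Delta^D_{\Omega'}) = \frac{\lambda_0(-\Delta^D_\Omega)}{(1+\epsilon)^2} \ge (1-2\epsilon)\,\lambda_0(-\Delta^D_\Omega).
\]
Combining the two inequalities yields
\begin{equation}\label{eq:DL_plan_sandwich}
0 \;\le\; \lambda_0(-\Delta^D_\Omega) - \lambda_0(h^D) \;\le\; 2\epsilon\,\lambda_0(-\Delta^D_\Omega).
\end{equation}

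It remains to upper bound $\lambda_0(-\Delta^D_\Omega)$ by $O(n^2)$ so that the right-hand side of~\eqref{eq:DL_plan_sandwich} is $O(\epsilon n^2)$. This follows from domain monotonicity once more: since $B^2_1 \subset \Omega$, Lemma~\ref{lem:domain_monotonicity} gives $\lambda_0(-\Delta^D_\Omega) \le \lambda_0(-\Delta^D_{B^2_1})$, and the first Dirichlet eigenvalue of the unit ball is the square of the first zero of a Bessel function of order $n/2-1$, which is $O(n^2)$. Substituting back into~\eqref{eq:DL_plan_sandwich} and choosing $\epsilon = O(\epsilon_0/n^2)$ with a small enough constant gives $|\lambda_0(h^D) - \lambda_0(-\Delta^D_\Omega)| \le O(\epsilon_0)$. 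The only nonroutine step is verifying that the scaling identity for the Dirichlet Laplacian is unaffected by adding a nonnegative potential in the one-sided comparison; this is handled cleanly by Lemma~\ref{lem:potential_comparison}, and I do not anticipate other obstacles.
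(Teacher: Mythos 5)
Your proposal is correct and follows essentially the same route as the paper: sandwich $\lambda_0(h^D)$ between $\lambda_0(-\Delta^D_{\Omega'})$ (via potential comparison) and $\lambda_0(-\Delta^D_\Omega)$ (via domain monotonicity, using $V|_\Omega=0$), apply the scaling identity $\lambda_0(-\Delta^D_{(1+\epsilon)\Omega})=\lambda_0(-\Delta^D_\Omega)/(1+\epsilon)^2$, and bound $\lambda_0(-\Delta^D_\Omega)=O(n^2)$ by an inscribed domain. The only cosmetic difference is that you bound $\lambda_0(-\Delta^D_\Omega)$ by the unit ball's Bessel-zero eigenvalue, while the paper uses the inscribed cube $B^\infty_{1/2\sqrt n}$ with eigenvalue $\pi^2 n^2$; both give $O(n^2)$.
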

\begin{proof}
By domain monotonicity (Lemma~\ref{lem:domain_monotonicity}), we have $ \lambda_0(h^D)  \le  \lambda_0(-\Delta^D_{\Omega})$,
and by potential comparison (Lemma~\ref{lem:potential_comparison}), we have $ \lambda_0(-\Delta^D_{\Omega'}) \le  \lambda_0(h^D) $, giving
\begin{align*}
   \lambda_0(-\Delta^D_{\Omega'}) \le  \lambda_0(h^D)  \le  \lambda_0(-\Delta^D_{\Omega}).
\end{align*}
Additionally, for every eigenfunction $\psi$ of $\Delta_{\Omega}$, 
we have that $\psi'$ as an eigenfunction of $\Delta_{\Omega'}$, where
 $\psi'(x) = \psi(x/(1+\epsilon)) $.
Therefore, 
\begin{align*}
    \lambda_0(-\Delta^D_{\Omega'})  = \lambda_0(-\Delta^D_\Omega)/(1+\epsilon)^2\ge \lambda_0(-\Delta^D_\Omega) -O(\epsilon\lambda_0(-\Delta^D_\Omega)),
\end{align*}
which gives us  a sandwich
\begin{align*}
     \lambda_0(-\Delta^D_{\Omega})- O(\epsilon\lambda_0(-\Delta^D_\Omega)) \le  \lambda_0(h^D)  \le  \lambda_0(-\Delta^D_{\Omega}).
\end{align*}
Hence, we have 
\begin{align}\label{eq:laplacian_nsquare}
|\lambda_0(h^D)  - \lambda_0(-\Delta^D_\Omega)| \le O(\epsilon\lambda_0(-\Delta^D_\Omega) ) \le  O(\epsilon n^2),    
\end{align}
where the last inequality is from the domain monotonicity (Lemma~\ref{lem:domain_monotonicity}) of
\begin{align*}
B^\infty_{1/2\sqrt n} \subset B^2_1 \subset \Omega.    
\end{align*}
We set 
$    \epsilon = O({\epsilon_0}/{n^2})
$
and obtain 
$    |\lambda_0(h^D) - \lambda_0(-\Delta^D_\Omega)| \le O(\epsilon_0).
$    
\end{proof}

We aim to use Markov's to establish equivalence between $h$ and $h^D$ and eventually claim that $\lambda_0(-\Delta^D_\Omega)$ and $\lambda_0(h)$ are close.
Suppose $x \in \partial \Omega'$. Then, there exists $j\in[m]$ such that 
$a_j \cdot x = b_j(1+\epsilon)$.
Therefore, 
\begin{align}\label{eq:app_boundary_potential}
    V(x) \ge  \frac{3E}{\mu^6}\Bar_\epsilon(a_j\cdot x-b_j) \ge \frac{3E \ \Bar_\epsilon(\epsilon   b_j)}{\mu^6} = \frac{3E}{\mu^6} \qquad \forall\  x \in \partial \Omega.
\end{align}

\begin{lemma}\label{lem:DL_equivalence}
 If $\mu = O(\frac{\epsilon_0}{E (mn^2)^{1/3}})$, then $ h$ and $h^D$ are $(E,O(\epsilon_0))$-equivalent.
\end{lemma}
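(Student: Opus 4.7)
The plan is to adapt the position-based truncation argument of Lemma~\ref{lem:position_trunc} (where a ball served as the common support) to the polytope setting, taking $\Omega'$ as the common support region. The key geometric fact is that $V$ vanishes on $\Omega$ and grows rapidly outside it, with $V(x) \ge 3E/\mu^6$ for every $x \in \rn \setminus \Omega'$, by \eqref{eq:app_boundary_potential} together with the monotonicity of $\Bar_\epsilon$.

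First I would take as the common truncated subspace $\widetilde{\mathcal{D}}$ the smooth functions on $\rn$ that are compactly supported strictly inside $\Omega'$. This space embeds canonically into both $L^2(\rn)$ (the ambient space of $h$) and $L^2(\Omega')$ (the ambient space of $h^D$), and extension-by-zero gives a unitary $U$ between these two copies. Since the potentials agree on $\Omega'$ and the two Laplacians coincide on functions compactly supported in the interior of $\Omega'$, one gets $h[\psi]=h^D[U\psi]$ for every $\psi\in\widetilde{\mathcal{D}}$, so the energy-preservation clause of Definition~\ref{def:truncation} holds with error zero.

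Next I would show that $\widetilde{\mathcal{D}}$ is an $(E,O(\epsilon_0))$-truncated domain of each of $h$ and $h^D$. This is essentially a re-run of the proof of Lemma~\ref{lem:position_trunc}: given a normalized $\psi$ with energy at most $E$, Markov's inequality in position (Lemma~\ref{lem:rn_state_approx}) combined with the decomposition of $\psi$ into its low-eigenvalue part $\psi^\downarrow$ shows that the bulk of $\psi$ lies in the sub-level set $V^{-1}([0,E/\sigma^6])$. Requiring $3E/\mu^6\ge E/\sigma^6$, i.e.\ $\sigma=\Theta(\mu)$, forces that sub-level set inside $\Omega'$. A smooth cutoff $f:\rn\to[0,1]$ equal to $1$ on a slight contraction of $\Omega'$ and vanishing off $\Omega'$ then yields $\widetilde{\psi}=f\psi^\downarrow/\|f\psi^\downarrow\|$ as the desired truncation, with the norm and energy estimates going through verbatim from the proof of Lemma~\ref{lem:position_trunc}. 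The argument for $h^D$ is identical and in fact easier, since $\mathcal{D}(h^D)\subset L^2(\Omega')$ already.

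The main obstacle is that $\partial\Omega'$ is a union of $m$ flat facets rather than a smooth surface, so the cutoff $f$ cannot be a radial bump as in the ball case. I would construct $f$ by composing a $\Cut$ function with the barrier-like signed-distance to $\partial\Omega'$, namely $x\mapsto \max_{j\in[m]}(a_j\cdot x-b_j(1+\epsilon))$, with a smooth maximum near facet intersections. Bounding $\|\nabla f\|_\infty$ then picks up a factor depending on both $m$ and $n$, and tracking this factor through the requirement $\sigma=\Theta(\mu)$ together with the budget $\sigma=O(\epsilon_0/E)$ inherited from the proof of Lemma~\ref{lem:position_trunc} yields exactly the quantitative constraint $\mu=O(\epsilon_0/(E(mn^2)^{1/3}))$ stated in the hypothesis.
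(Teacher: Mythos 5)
Your overall strategy matches the paper's: exhibit a common truncated domain of functions supported in $\Omega'$, identify the two energies there, and prove the truncated-domain property for $h$ by rerunning Lemma~\ref{lem:position_trunc} (spectral cutoff $\psi^\downarrow$ plus Markov's inequality in position plus a smooth cutoff). The differences are in the two places you leave vague, and the paper's choices there are exactly what make the proof short. First, the paper takes $\widetilde{\mathcal D}=L^2(\Omega')$ with the identity map, so the $h^D$ side is genuinely trivial ($\mathcal D(h^D)\subset L^2(\Omega')$, every low-energy state is its own truncation). Under your choice of smooth functions compactly supported \emph{strictly} inside $\Omega'$, the claim that the $h^D$ side is ``identical and in fact easier'' is not right as stated: low-energy states of $h^D$ are supported up to $\partial\Omega'$, so you would have to run the cutoff-and-Markov argument a second time to push them off the boundary (this works, since $V$ is of order $E/\mu^6$ near $\partial\Omega'$ from inside, but it is not free). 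Second, instead of a smoothed max of the affine constraints composed with $\Cut$, the paper uses the sublevel-set cutoff $f=\Cut_{E/\mu^6,E/\mu^6}\circ V$ with $A=V^{-1}([0,E/\mu^6])\subset B=V^{-1}([0,2E/\mu^6])\subset\Omega'$; this sidesteps the facet-intersection/smooth-max construction entirely, and the gradient term in the energy estimate is bounded by $|\Cut'|^2\,\|\nabla V\|^2$ times the Markov weight $\mu^6$, which is where the $m$-dependence (through $\nabla V$, a sum of $m$ barrier terms) and the $n$-dependence (through $\epsilon=\Theta(\epsilon_0/n^2)$) of the stated bound $\mu=O(\epsilon_0/(E(mn^2)^{1/3}))$ actually originate — not from $\|\nabla f\|_\infty$ of a geometric cutoff, as you assert. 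Your geometric cutoff would plausibly yield a different (likely $m$-free) sufficient condition on $\mu$, which would still imply the lemma since the hypothesis is only a sufficient condition, but that quantitative tracking needs to be carried out rather than claimed to reproduce the stated constant ``exactly.''
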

\begin{proof}
We claim that $L^2(\Omega')$ is an $(E,\epsilon)$-truncated domain for both $h^D$ and $h$.
The claim would prove that $h$ and $h^D$ are $(E,\epsilon)$-equivalent,
since the identity unitary takes $L^2(\Omega')$ to itself, and the restrictions of $h$ and $h^D$ on $L^2(\Omega')$ are identical.

For $h^D$, we have $\mathcal D(h^D)\subset L^2(\Omega')$,
and therefore $L^2(\Omega')$ is an $(E,\epsilon)$-truncated domain for $h^D$.
Therefore, we only need to show that $L^2(\Omega')$ is an $(E,\epsilon)$-truncated domain for $h$.

The derivation is almost the same as in the proof of Lemma~\ref{lem:position_trunc}, except that we use a different cutoff function $f$.
We will refer to the proof of Lemma~\ref{lem:position_trunc}, when the calculation is the same.

Suppose $\psi\in L^2 (\rn)$ has energy $h[\psi]\le E$.
We define 
\begin{align*}
    \psi^\downarrow:=\sum_{i: \lambda_i \le E/\mu^2 } c_i\psi_i,
\end{align*}
where $\psi_i$ is the $i$-th eigenvectors of $h$ with eigenvalue $\lambda_i$.
We define sets 
\begin{align*}
    A   &:=V^{-1}([0,E/\mu^6]) , \\
    B   &:=V^{-1}([0,2E/\mu^6]).
\end{align*}
Then we have $A\subset B\subset \Omega'$, because of \eqref{eq:app_boundary_potential}.

The cutoff function $f:= \Cut_{E/\mu^6,E/\mu^6}\circ V $
satisfies that
\begin{align*}
    f(x) 
    \begin{cases}
        = 1     &  x \in A  \\
        \in[0,1]    & x \in B\setminus A\\
        = 0        & x \notin B.
    \end{cases}
\end{align*}

We claim that 
\begin{align*}
    \widetilde{\psi}:=\frac{f\psi^\downarrow}{\|f\psi^\downarrow\|}
\end{align*}
is an $\mu$-truncation of $\psi$.
The claim implies the lemma,
since $\widetilde{\psi}\in L^2 (\Omega')$.

We first show the norm condition of $\mu$-truncation.
Since $V(x)\ge E/\mu^6$ for all $x$ such that $f(x) = 0$, by Markov's inequality (see the proof of Lemma~\ref{lem:rn_state_approx}), we have
\begin{align*}
    \|\psi^\downarrow - f\psi^\downarrow\|\le \mu^3 \|\psi^\downarrow\|.
\end{align*}
By the same derivation that led to \eqref{eq:position_based_norm_condi}, we have 
\begin{align*}
    \|\psi  -  \widetilde{\psi}\| \le O(\mu) \le O(\epsilon_0).
\end{align*}

For the rest of the proof, we show the energy condition.
By the same calculation that led to \eqref{eq:thm_position_trunc_hcmu}, we have  
\begin{align*}
    \|f\psi^\downarrow \| h[f\psi^\downarrow\|  = \langle \psi^\downarrow | \|\ \nabla f\|^2 |\psi^\downarrow\rangle + \Re\langle f^2 \psi^\downarrow | h\psi ^\downarrow \rangle .
\end{align*}
We upper bound the two terms similarly to the proof of Lemma~\ref{lem:position_trunc}.
We have 
\begin{align*}
    \langle \psi^\downarrow | \|\ \nabla f\|^2 |\psi^\downarrow\rangle 
    =&\int_{x \in \rn} |\psi^\downarrow(x)|^2  \|\nabla f(x)\|^2 \ \text d x \\
    \le 
    & \ 
    \max_{x \notin A} \|\nabla f(x)\|^2 \int_{x\notin A} |\psi^\downarrow(x)|^2  \ \text d x
    \\
    \le
    &
    \   \max_{x \notin A} | \Cut'_{E/\mu^6,E/\mu^6} ( V(x))|^2 \ \|  \nabla V(x)\|^2 \mu^6
    \\
    \le
    &
    \     O\left(\frac{\mu^{12}}{E^2}  \cdot \frac{m ^2 E^2}{\epsilon^2\mu ^{12}}\cdot \mu^6 \right)  
    \\
 = & \ O \left( \frac{m^2 n^4 \mu^{6} }{\epsilon_0^2}\right) \\ 
  \le & \  O(\epsilon^4/ E^4).
\end{align*}
Also, by the calculation that led to \eqref{eq:thm_position_trunc_second_term}, we have
\begin{align*}
    \Re\langle f^2\psi^\downarrow|h\psi^\downarrow\rangle  \le h[\psi]  + \mu E \le h[\psi]  + \epsilon_0.
\end{align*}
Therefore, we have
\begin{align*}
     \|f\psi^\downarrow \| h[f\psi^\downarrow\|  \le h[\psi] + O(\epsilon_0) + O(\epsilon_0^4 E^{4}). 
\end{align*}
Finally, since
\begin{align*}
    \|f\psi^\downarrow\| \ \ge \ \|\psi^\downarrow\|   -\|(1 - f) \psi^\downarrow\| \ \ge \  1 - \mu  - \mu^3,
\end{align*}
we have 
\begin{align*}
h[\widetilde{\psi}]  
=    
h[f\psi^\downarrow] 
 \le 
\frac{   h[\psi] +      O(\epsilon_0) + O(\epsilon_0^4 E^{4}) }{  1 - \mu  - \mu^3} \le h[\psi] + O(\epsilon_0) .
\end{align*}

\end{proof}

\end{document}